 \newtheorem{thm}{Theorem}[section]
 \newtheorem{cor}[thm]{Corollary}
 \newtheorem{lem}[thm]{Lemma}
 \newtheorem{prop}[thm]{Proposition}
 \theoremstyle{definition}
 \theoremstyle{remark}
 \newtheorem{rem}[thm]{Remark}
 \newtheorem{example}[thm]{Example}
 \numberwithin{equation}{section}
 \newcommand{\be}{\begin{equation}}
\newcommand{\ee}{\end{equation}}
\newcommand{\ba}{\begin{eqnarray}}
\newcommand{\ea}{\end{eqnarray}}
\newcommand{\baa}{\begin{eqnarray*}}
\newcommand{\eaa}{\end{eqnarray*}}
\newcommand{\bb}{}
\newcommand{\ci}[1]{\cite{#1}}
\newcommand{\bi}[1]{\bibitem{#1}}
\newcommand{\lab}[1]{\label{eq: #1}}
\newcommand{\re}[1]{(\ref{eq: #1})}
\newcommand{\rep}[1]{{\bf \ref{#1}}}
\newcommand{\rel}[1]{{\bf \ref{#1}}}
\newcommand\al{\alpha}
\newcommand\bt{\beta}
\newcommand\ga{{\bf {\bf \gamma}}}
\newcommand\om{\omega}
\renewcommand\t{\tilde}
\newcommand\dl{\delta}
\newcommand\te{\otimes}
\newcommand\ep{\varepsilon}
\newcommand\la{\lambda}
\newcommand\st{\stackrel{\textstyle{\te}}{,}}
\def\one{\mathbf 1}
\theoremstyle{remark}
\theoremstyle{definition}
\theoremstyle{definition}
\newenvironment{proo}[1][Proof]{\begin{trivlist}
\item[\hskip \labelsep {\bfseries #1}]}{\qed \end{trivlist}}
\def\A{{\mathcal A}}
\def\AA{{\mathfrak A}}
\def\B{{\mathcal B}}
\def\C{{\mathbb C}}
\def\CC{{\mathcal C}}
\def\D{{\mathcal D}}
\DeclareMathSymbol{\DDelta}{\mathalpha}{letters}{"01}
\def\GR{{\mathbb A}}
\def\EE{{\mathcal E}}
\def\F{{\mathbb F}}
\def\FF{{\mathcal F}}
\def\G{{\mathcal G}}
\def\GG{{\mathfrak G}}
\def\H{\mathcal H}
\def\K{{\mathbb K}}
\def\L{{\mathcal L}}
\def\N{\mathcal N}
\def\O{{\mathcal O}}
\def\P{{\mathcal {\mathbf P}}}
\def\Q{{\mathbb Q}}
\def\R{{\mathbb R}}
\def\PP{{\mathfrak P}}
\def\RR{{{\mathcal R}}}
\def\RS{{(\R\setminus 0)}}
\def\T{{\mathbb T}}
\def\TE{{\mathcal T}}
\def\pTE{\widetilde{\mathcal T}}
\def\U{{\mathcal U}}
\def\W{{\mathcal W}}
\def\X{{\mathcal X}}
\def\Y{{\mathcal Y}}
\def\Z{{\mathbb Z}}
\def\b{\mathfrak b}
\def\bb{{\bf b}}
\def\cc{{\bf c}}
\def\d{\mathbf d}
\def\e{{\bf e}}
\def\el{\text{\eurs l}}
\def\er{\text{\eurs r}}
\def\g{\mathfrak g}
\def\gg{\mathfrak g}
\def\gr{\mathcal E}
\def\gs{\ge}
\def\h{\mathfrak h}
\def\ii{{\mathbf i}}
\def\jj{\mathbf j}
\def\k{\mathfrak k}
\def\kk{\mathbf k}
\def\ls{\le}
\def\m{{\bf m}}
\def\n{\mathfrak n}
\def\one{\mathbf 1}
\def\p{{\mathbf p}}
\def\q{\mathbf q}
\def\r{{\bf r}}
\def\w{{\bf w}}
\def\wB{{\widetilde{B}}}
\def\wM{{\widetilde{M}}}
\def\wN{{\widetilde{N}}}
\def\wR{{\widetilde{R}}}
\def\wX{{\bar X}}
\def\hM{{\widehat{M}}}
\def\hN{{\widehat{N}}}
\def\hR{{\widehat{R}}}
\def\hP{{\widehat{P}}}
\def\wx{{\widetilde{\bf x}}}
\def\wz{{\widetilde{\bf z}}}
\def\x{{\bf x}}
\def\y{{\bf y}}
\def\z{{\bf z}}
\def\wh{\widehat}
\def\Ad{\operatorname{Ad}}
\def\Edge{\operatorname{Edge}}
\def\End{\operatorname{End}}
\def\Fl{\operatorname{Fl}}
\def\Gamm{\mathfrak P}
\def\GGG{S_k(n)}
\def\GGR{S^{\R}_k(n)}
\def\Hom{\operatorname{Hom}}
\def\Id{{\operatorname {Id}}}
\def\Jac{{\operatorname {Jac}}}
\def\Lie{{\operatorname {Lie}\;}}
\def\Mat{\operatorname{Mat}}
\def\Net{\operatorname{Net}}
\def\Norm{\operatorname{Norm}}
\def\Poi{{\{\cdot,\cdot\}}}
\def\Proj{\operatorname{Proj}}
\def\Rat{\operatorname{Rat}}
\def\Res{\operatorname{Res}}
\def\Span{{\operatorname{span}}}
\def\Spec{\operatorname{Spec}}
\def\Tr{\operatorname{tr}}
\def\Trace{\operatorname{tr}}
\def\Vert{\operatorname{Vert}}
\def\ad{\operatorname{ad}}
\def\cf{{\operatorname{cf}}}
\def\codim{\operatorname{codim}}
\def\corank{\operatorname{corank}}
\def\deg{{\operatorname{deg}}}
\def\diag{\operatorname{diag}}
\def\dim{\operatorname{dim}}
\def\grad{\mbox{grad}}
\def\id{\operatorname{id}}
\def\ind{\operatorname{ind}}
\def\pr{{\operatorname{pr}}}
\def\rank{\operatorname{rank}}
\def\s{\operatorname{sign}}
\def\sp{{\operatorname{sp}}}
\def\sgn{{\operatorname{sgn}}}
\def\sign{\mbox{sign}}
\def\tr{\operatorname{tr}}
\def\val{\operatorname{val}}
\def\cC{{\A(\wB)}}
\def\:{{:\ }}
\begin{document}

\title[Inverse moment problem for non-Abelian Coxeter double Bruhat cells]
      {Inverse moment problem for non-Abelian Coxeter double Bruhat cells}

\author{Michael Gekhtman}
\address{Department of Mathematics, University of Notre Dame, Notre Dame,
IN 46556, USA}
\email{mgekhtma@nd.edu}

\subjclass[2000]{Primary 47B36; Secondary 37K10}
\date{04/05/2015; \ \  Revised 09/02/2016}
\dedicatory{Dedicated to my teacher Yuri Makarovich Berezanskii on his 90th birthday}
\keywords{Non-Abelian lattices, Coxeter double Bruhat cells, inverse problems.}

\begin{abstract}
We solve the inverse problem for non-Abelian Coxeter double Bruhat cells in terms of the matrix Weyl functions. This result can be
used  to establish complete integrability of the non-Abelian version of nonlinear Coxeter-Toda  lattices in $GL_n$.
\end{abstract}

\maketitle

\section{Introduction}

A fruitful interaction between the operator theory, inverse spectral problems in particular, and the theory of completely integrable systems is, by now, well documented.
Beyond just  linearizing Hamiltonian equations of interest in mathematical physics, this interaction led to greater insight into geometric properties of underlying
objects as well as revealed deep connections with representation theory and algebraic
combinatorics.

In one of the first and most famous instances of an interplay between the spectral theory and integrability questions, Moser \cite{moser} used a map from finite Jacobi matrices to the space
rational functions of fixed degree to linearize the celebrated Toda lattice in the finite non-periodic
case. This map associates with a Jacobi matrix a certain
matrix element of its resolvent, called the Weyl function. On the other hand, the Atiyah-Hitchin Poisson structure \cite{AH} on rational functions initially discovered in the theory of magnetic monopoles, provides a convenient description for the
(linear) Hamiltonian structure of the Toda lattice.

In \cite{FG1}--\cite{FG3}, it was shown that the Atiyah-Hitchin structure belongs to a family of compatible
Poisson structures that can be used to establish a multi-Hamiltonian nature of the entire class
of "Toda-like" integrable lattices. In the context of the linear Poisson structure, these lattices are associated with minimal irreducible co-adjoint orbits of the Borel subgroup in $gl_n$, while, from
the point of view of the quadratic Poisson structure, they are naturally associated with  certain
class of double Bruhat cells in $GL_n$ and belong to the family of so-called Coxeter-Toda lattices.
The latter perspective recently led to establishing of a cluster algebra structure in the space of rational
functions \cite{GSV}. Along with Poisson brackets from \cite{FG2}, the key ingredient of this construction
was a solution of the inverse problem, that allows to restore the Lax operator of a Coxeter-Toda lattice
from its Weyl function in terms of a certain collection of Hankel determinant built from coefficients of the Laurent expansion of the Weyl function. These determinantal formulae generalize the classical ones
in the theory of orthogonal polynomials on the real line and on the unit circle.

In this paper,  we present an overview of a non-Abelian version of some of the results of \cite{FG1}--\cite{FG3} and \cite{GSV}.
Although we will concentrate on finite non-Abelian lattices, it should be pointed out that infinite non-Abelian lattices of Toda type have also attracted a lot of interest of the years.
The have been studied in a variety of contexts and via a variety of approaches, using inverse spectral problems in the semi-infinite case \cite{ BGS}, \cite{BerMokh}, inverse scattering in the double-infinite case \cite{BMRL} and methods of algebraic geometry in the periodic case \cite{K}.

In the earlier paper \cite{GeKo}, we introduce a matrix-valued version of Coxeter-Toda lattices on certain classes of block Hessenberg matrices. These nonlinear lattices generalize both
the nonlinear lattices in \cite{FG1} and the finite non-periodic non-Abelian Toda lattice. We established that
a matrix analogue of the Weyl function provides a convenient tool for a study of these non-Abelian Coxeter-Toda lattices.
In the case of the non-Abelian Toda lattice, this point of view was advocated in \cite{G}. The lattices of
\cite{GeKo} "live" on noncommutative analogues of {\em elementary Toda orbits} -- minimal irreducible co-adjoint orbits of the Borel subgroup in $GL_n$.
In contrast, here we will be more concerned with a noncommutative version of Coxeter double Bruhat cells, whose scalar counterparts are minimal irreducible Poisson submanifolds
of $GL_n$ equipped with the standard Poisson-Lie structure.

In section 3, we define non-Abelian Coxeter double Bruhat cells, introduce some related combinatorial objects and describe how the elements of non-Abelian Coxeter double Bruhat cells can
be parametrized using factorization into elementary factors or, alternatively, using planar directed weighted networks with noncommutative weights.
In section~4, the main section of the paper, we presents a solution of the inverse moment problem for non-Abelian Coxeter double Bruhat cells. Here the key role
is played by the matrix Weyl function. The main theorem, Theorem \ref{invthm}, extends both the results in the commutative case \cite{GSV} and the partial
results obtained in \cite{GeKo}. The factorization parameters are restored as noncommutative monomial expressions in term of Schur complements (quasideterminants)
associated with a family of block Hankel matrices built from the coefficients of the Laurent expansion of the matrix Weyl functions. These quasideterminants replace
ratios of Hankel determinants needed to express the solution of the inverse problem in the commutative case.

In section 5, we show how this inverse problem combined with the Poisson structure on matrix-valued rational functions introduced earlier in \cite{G} lead
to a completely integrable system on every non-Abelian Coxeter double Bruhat cell. We call this system a non-Abelian Coxeter-Toda lattice. The obtained family
of integrable lattices incorporates as particular cases all the lattices from \cite{FG1, GSV, GeKo}.


\section{Preliminaries}

We start by introducing notations and terms to be used throughout the paper.
In what follows we will be dealing with {\em block vectors and block matrices} whose entries
are $m\times m$ matrices.  For an $n_1\times n_2$ block matrix $A=(a_{ij})$, the notation $A^T$ will be reserved for its $n_2\times n_1$ {\em block transpose} :
$A^T=(a_{ji})$.

Denote by $\one_r$ the $r\times r$ identity matrix. Sometimes, when the dimension of the identity matrix is clear from context, we will drop the subscript and use $\one$ instead.

Define  elementary block vectors $e_j= (\delta_{ij} \one_m)_{i=1}^n ,(j=1,\dots,n)$ and elementary block matrices ${e}_{ij}\otimes \one_m=(\delta_{i\al}\delta_{j\bt}\one_m)_{\al,\bt=1}^{n}$.

If $P(\lambda) = \sum_\alpha \lambda^{\alpha} \pi_\alpha$ is a Laurent polynomial with $m\times m$ matrix coefficients, $X$ is an $n\times n$ block matrix and $f$ is a block column vector, we denote by $ P(X) f$ the expression $ \sum_\alpha X^{\alpha}  f \pi_\alpha$ and by
$ f^T P(X) $ the expression $ \sum_\alpha \pi_\alpha f^T X^{\alpha}$.

In what follows, when we deal with an inverse of a block matrix $A=(a_{ij})$, the notation
$A_{ij}^{-1}$ is used for an $(i,j)$-block of $A^{-1}$, while $a_{ij}^{-1}$ or $(A_{ij})^{-1}$ will denote the inverse
of the $(i,j)$-block of $A$.

Recall that if $A=(a_{ij})_{i,j=1}^{2}$ is a $2\times 2$ block matrix (not necessarily with square blocks)
and if a block $A_{ij}$ is square, then its \emph{Schur complement} is defined as
$$
A_{3-i,3-j}^{\square}=A_{3-i,3-j}-A_{3-i,i}(A_{ij})^{-1}A_{j,3-j}\ .
$$
Below, we will use the following well-known
\begin{lem}
\label{Ginv}
Let  $G=\left[ \begin{array}{cc}
A&B \\
C&D
\end{array}\right]$ be an invertible block matrix, whose block $A$ (resp. $B, C, D$) is square
and has an invertible Schur complement. Then $G^{-1}$ is given by the formula
\be
G^{-1}=\left[ \begin{array}{cc}
A^{-1}+A^{-1}B(D^{\square})^{-1}CA^{-1} & -A^{-1}B(D^{\square})^{-1} \\
-(D^{\square})^{-1}CA^{-1} & (D^{\square})^{-1}
\end{array}\right] \ ,\lab{inv1}
\ee
resp.
\be
G^{-1}=\left[ \begin{array}{cc}
 -C^{-1}D(B^{\square})^{-1}& C^{-1}+C^{-1}D(B^{\square})^{-1}AC^{-1} \\
(B^{\square})^{-1} &- (B^{\square})^{-1}AC^{-1}
\end{array}\right] \ ,\lab{inv2}
\ee
\be
G^{-1}=\left[ \begin{array}{cc}
 (A^{\square})^{-1}& -(A^{\square})^{-1}BD^{-1} \\
-D^{-1}C(A^{\square})^{-1}&D^{-1}+D^{-1}C(A^{\square})^{-1}BD^{-1}
\end{array}\right] \ ,\lab{inv3}
\ee
\be
G^{-1}=\left[ \begin{array}{cc}
 -(C^{\square})^{-1}DB^{-1}&  (C^{\square})^{-1}\\
B^{-1}+B^{-1}A(C^{\square})^{-1}DB^{-1}&-B^{-1}A(C^{\square})^{-1}
\end{array}\right] \ .\lab{inv4}
\ee
\end{lem}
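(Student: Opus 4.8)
The plan is to verify each of the four formulas \re{inv1}--\re{inv4} by direct multiplication, reducing \re{inv2}--\re{inv4} to \re{inv1} by symmetry so that the real content is a single computation. The standard route for \re{inv1} is to perform a block LDU (or UDL) decomposition of $G$. First I would write, using that $A$ is square and invertible with invertible Schur complement $D^{\square}=D-CA^{-1}B$,
\[
G=\left[\begin{array}{cc} \one & 0\\ CA^{-1} & \one\end{array}\right]
\left[\begin{array}{cc} A & 0\\ 0 & D^{\square}\end{array}\right]
\left[\begin{array}{cc} \one & A^{-1}B\\ 0 & \one\end{array}\right].
\]
Both triangular factors are unipotent, hence invertible with easily written inverses (flip the sign of the off-diagonal block), and the middle factor is block-diagonal and invertible by hypothesis on $A$ and $D^{\square}$. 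Multiplying the three inverses in the reverse order yields exactly the right-hand side of \re{inv1}; this is a short calculation and is the one place where I would actually carry out the block multiplication.

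For the remaining three formulas I would not repeat this work but instead use permutation symmetry. Swapping the two block-rows and the two block-columns of $G$ conjugates it by the permutation $J=\left[\begin{array}{cc}0&\one\\ \one&0\end{array}\right]$; under $G\mapsto JGJ$ the block $D$ plays the role formerly played by $A$, and $B$, $C$ get interchanged. Applying \re{inv1} to $JGJ$ and conjugating back by $J$ gives \re{inv3} (the one with $(A^{\square})^{-1}$ in the top-left corner, $A^{\square}=A-BD^{-1}C$), valid when $D$ is square with invertible Schur complement. Swapping only block-columns (conjugating on the right by $J$, on the left doing nothing, i.e. $G\mapsto GJ$ together with the corresponding reinterpretation) turns the hypothesis "$B$ square with invertible Schur complement" into the previous case and produces \re{inv2}; likewise swapping only block-rows gives \re{inv4}. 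Concretely: in \re{inv2} the block $C$ is square and $B^{\square}=B-AC^{-1}D$; in \re{inv4} the block $B$ is square and $C^{\square}=C-DB^{-1}A$. Each of these is obtained from \re{inv1} or \re{inv3} by the obvious relabeling, so no new computation is needed.

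As an alternative that avoids even the symmetry bookkeeping, one can simply verify directly that the proposed matrix $H$ on the right-hand side of each identity satisfies $GH=\one$ (and, since $G$ is assumed invertible, this suffices — a one-sided inverse of an invertible matrix is the two-sided inverse); the products telescope after using $AA^{-1}=\one$ and the definition of the relevant Schur complement. I expect the only genuine subtlety — hardly an obstacle — to be purely bookkeeping: keeping the noncommutativity of the blocks in mind so that one never silently commutes factors, and making sure that in each of the four cases the Schur complement is formed with respect to the block that the hypothesis declares invertible. Once the pairing of hypothesis-block with Schur-complement-block is fixed, the verification is routine.
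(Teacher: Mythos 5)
Your proof is correct. The paper itself offers no proof of this lemma --- it is quoted as ``well-known'' --- so there is nothing to compare against; the block-LDU factorization for \re{inv1} followed by permutation symmetry (or, equivalently, direct verification that the displayed matrix $H$ satisfies $GH=\one$) is the standard argument, and it goes through verbatim in the block/noncommutative setting precisely because you never commute blocks. Two small remarks. First, you have the row and column swaps interchanged: left multiplication by $J$ (swapping block \emph{rows}) turns the top-left corner into $C$ and reduces \re{inv2} to \re{inv1}, while right multiplication by $J$ (swapping block \emph{columns}) puts $B$ in the corner and yields \re{inv4}. This is harmless, because your ``concretely'' sentence pins down the correct invertibility hypotheses ($C$ and $B^{\square}=B-AC^{-1}D$ for \re{inv2}; $B$ and $C^{\square}=C-DB^{-1}A$ for \re{inv4}), and the direct-verification fallback is immune to the bookkeeping in any case. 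Second, your reading of the hypotheses silently corrects the statement as printed: under the paper's convention the Schur complement of the block in position $(i,j)$ is the quantity denoted by the complementary index $(3-i,3-j)$, so the ``resp.'' list $A,B,C,D$ does not line up with \re{inv1}--\re{inv4} in that order (\re{inv2} requires $C$ invertible, \re{inv4} requires $B$ invertible); the identifications you give are the ones under which the four formulas are actually true.
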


\begin{rem} \label{zeroschur}
It is easy to see that if the second row of a block matrix $G=\left[ \begin{array}{cc}
A&B \\
C&D
\end{array}\right]$ with an invertible square $A$ is a left multiple of the first row, then
the Schur complement of $A$ in $G$ is zero.
\end{rem}

For $r\in \mathbb{N}$, denote by $[r]$ the set $\{1,\ldots,r\}$. Given an $n_1\times n_2$ block matrix $A=(a_{ij})$ with $m\times m$ blocks and index sets $I\subset [n_1], J\subset [n_2]$ we denote by $A_I^J$ its block submatrix formed by block rows and columns indexed by $i$ and $J$ resp. For $i\in [n_1], j\in [n_2]$, we denote by $\hat{i} ,\hat{j}$ their complements in $[n_1], [n_2]$.

Following \cite{GGRW}, we denote by $a_{ij}^\square$ the Schur complement of $A_{\hat{i}}^{\hat{j}}$ in $A$
(called a {\em quasideterminant} in terminology of \cite{EGR, GGRW}) :
\be
A_{ij}^{\square}=a_{ij}-A_{i}^{\hat{j}}(A_{\hat{i}}^{\hat{j}})^{-1}A_{\hat{i}}^{j}\ .
\label{Schur}
\ee

\section{Non-Abelian Coxeter double Bruhat cells}
\label{sec:cdbc}

\subsection{}
In this section, we describe combinatorial notions and parameterizations associated with  Coxeter double Bruhat cells adapted to the non-Abelian situation.
The discussion here follows that in sect. 3 of \cite{GSV}. Most of the auxiliary combinatorial statements from that paper can be used without any modifications.

Recall \cite{FZ_JAMS} that a double Bruhat cell $G^{u,v}$ in $GL_n$ associated with a pair of elements $u,v$ of the permutation group
$S_n$ is defined as an intersection
\be
\label{doubleBruhat}
G^{u,v} = \left (\B^+  u \B^+\right ) \cap \left (\B^-  v \B^-\right )\ ,
\ee
where $\B^\pm$ denote subgroups of upper and lower triangular invertible matrices in $GL_n$ and where $u, v$ are identified with the corresponding permutation matrices
$\bar u, \bar v$. Double Bruhat cells in simple and reductive  Lie groups where comprehensively studied
in \cite{FZ_JAMS} in connection with the notion of {\em total positivity}. They also served as a chief motivation for defining the notion of {\em cluster algebras}, as well as an important example of a class of algebraic varieties supporting a cluster algebra structure \cite{BFZ, GSV-book}.

A non-Abelian version of double Bruhat cells was studied in \cite{BR}. For our purposes, they can be defined by \eqref{doubleBruhat} in which $\B^\pm$ now
denote groups of invertible $n\times n$ upper and lower block triangular matrices with $m\times m$ blocks and $u, v$ are now  identified with block  permutation matrices
$\bar u\otimes \one_m, \bar v \otimes \one_m$.

Factorization of generic elements of $G^{u,v}$ into a product of elementary factors plays an important role in the study of double Bruhat cells in both commutative and noncommutative contexts.
For an $m\times m$ matrix $a$ and $i\in [n]$, define {\em elementary block-matrices} $E^+_{i}(a)$, $E^-_{i}(a)$ by
\begin{equation}
E^+_{i}(a) = \one_n\otimes\one_m + e_{i,i+1}\otimes a, \quad  E^-_{i}(a) = \one_n\otimes\one_m + e_{i+1,i}\otimes a\ .
\label{elemA}
\end{equation}
In other words, $E^+_{i}(A)$ (resp. $E^-_{i}(A)$) is a block bidiagonal matrix with $\one_m$ on the diagonal and the only nonzero off-diagonal block $A$ in a position
$(i,i+1)$ (resp. $(i+1,i)$).

As in the scalar case (see, e.g. \cite{Fallat, FZ_JAMS, FZ_MI, GSV}), it will be convenient to represent block matrices that can be realized as products of elementary ones
by planar weighted directed diagrams. In our case, all the weights will be invertible $m\times m$ matrices assigned to edges of the network. Any network $\N$ in question can be drawn in a rectangle, with $n$ sources located on the left side
and $n$ sinks on the right side. Both sources and sinks are labeled
$1$  to $n$  going from the bottom to the top. All internal vertices are trivalent and are either colored white if it has exactly  one incoming edge or black if there are exactly two incoming edges.
There are three kinds of edges: horizontal, directed left-to-right and two kinds of inclined edges, directed southwest or northwest. For a directed path $\mathbf P$ joining $i$th source with the
$j$th sink we define its weight $w({\mathbf P})$ as a left-to-right  ordered product of egde weights in $w({\mathbf P})$. A block matrix $A=A(\N)=(a_{ij})_{i,j=1}^n$ associated with $\N$ is defined
by
\be
a_{ij}= \sum_{{\mathbf P} : i \rightsquigarrow j} w({\mathbf P})\ .
\label{boundary}
\ee
For example, a
$n\times n$ block diagonal  matrix $\diag(d_1,\ldots,d_n)$ and elementary
block bidia\-gonal matrices $E^-_i(l)$ and  $E^+_j(u) $ correspond to planar networks shown in Figure~\ref{fig:factor} a), b) and c), respectively; all weights not shown explicitly are equal to~1.


\begin{figure}[ht]
\begin{center}
\includegraphics[height=3.0cm]{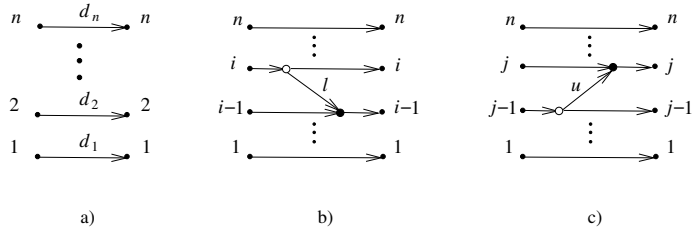}
\caption{Elementary networks}
\label{fig:factor}
\end{center}
\end{figure}


Two networks of the kind described above can be concatenated by gluing the sinks of the former to the sources of the latter.
If $A_1$, $A_2$ are matrices associated with the two networks, then it is clear
that the matrix associated with their concatenation
is $A_1 A_2$.

\subsection{}
Recall that {\em a Coxeter element} of $S_n$ is any element of length $n-1$ or, in other words, a
of all $n-1$ distinct elementary transpositions $s_i\  (i=1,\ldots, n-1)$ taken in an arbitrary order.
We are only interested in non-Abelian double Bruhat cells associated with a pair of Coxeter elements $u,v$,
{\em Coxeter double Bruhat cells} for short.

Denote $s_{[p,q]}=s_ps_{p+1}\ldots s_{q-1}$
for $1\le p< q \le n$ and recall that every Coxeter element $v\in S_n$ can be written in the form
\begin{equation}
\label{factoru}
v=
s_{[i_{k-1} , i_k]}\cdots s_{[i_{1} , i_2]}s_{[1 , i_1]}
\end{equation}
for some subset $I=\{1=i_0 < i_1 < \cdots < i_k=n\}\subseteq[1,n]$. Besides,
define $L=\{1=l_0 < l_1 < \cdots < l_{n-k}=n\}$ by $\{ l_1 < \cdots < l_{n-k-1}\} = [1,n] \setminus I$.

\begin{lem}\label{u_inv}
Let $v$ be given by~{\rm \eqref{factoru}}, then
$$
v^{-1} =
s_{[l_{n-k-1} , l_{n-k}]}\cdots s_{[l_{1} , l_2]}s_{[1 , l_1]}\ .
$$
\end{lem}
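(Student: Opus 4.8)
The plan is to prove the factorization of $v^{-1}$ by a direct combinatorial argument on reduced words, using the standard properties of Coxeter elements and the involutive reversal $w \mapsto w^{-1}$ on $S_n$. First I would recall that if $w = s_{j_1} s_{j_2} \cdots s_{j_{n-1}}$ is a reduced expression for a Coxeter element, then $w^{-1} = s_{j_{n-1}} \cdots s_{j_2} s_{j_1}$ is a reduced expression for $w^{-1}$, and that since each elementary transposition appears exactly once, $w^{-1}$ is again a Coxeter element; the only thing to pin down is which subset $I'$ (in the sense of the parametrization before Lemma~\ref{u_inv}, applied to $v^{-1}$) corresponds to the reversed word. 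So the heart of the matter is: starting from $v = s_{[i_{k-1},i_k]} \cdots s_{[i_1,i_2]} s_{[1,i_1]}$, reverse the whole word letter by letter, and then recognize the result as $s_{[l_{n-k-1},l_{n-k}]} \cdots s_{[l_1,l_2]} s_{[1,l_1]}$ with $L$ the ``complementary'' index set defined in the excerpt.

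The key step is to understand the relation between the set $I$ appearing in a factorization \eqref{factoru} of a Coxeter element $v$ and the word itself, intrinsically. The standard fact (from sect.~3 of \cite{GSV}, which the excerpt says may be used verbatim) is that the indices $i_1 < \cdots < i_{k-1}$ in $[1,n-1]$ — equivalently, the positions not in $L$ — are exactly the descents, or some such combinatorially characterized set, of the particular reduced word: writing $v$ as a product of the $n-1$ transpositions $s_1, \dots, s_{n-1}$ in some order, $I \setminus \{1,n\}$ records where ``$s_{p}$ comes before $s_{p-1}$'' (or after — one has to fix the convention by checking $n=2,3$). Concretely, $s_{[1,i_1]} s_{[i_1,i_2]} \cdots$ arranges the transpositions so that within each block $s_p, s_{p+1}, \dots$ the indices increase, and between consecutive blocks there is a ``drop.'' Reversing the word turns every ``increase'' into a ``decrease'' and vice versa among adjacent (in value) transpositions $s_{p-1}, s_p$; hence the new set of ``drop positions'' is precisely the complement, which is exactly how $L$ was defined: $\{l_1 < \cdots < l_{n-k-1}\} = [1,n] \setminus I$. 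Tracking this bookkeeping carefully and then re-assembling the reversed word into the block form $s_{[l_{j-1},l_j]}$ gives the claimed expression.

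I would organize the argument as follows. (1) Show $v^{-1} = s_{[1,i_1]}^{-1} s_{[i_1,i_2]}^{-1} \cdots s_{[i_{k-1},i_k]}^{-1}$, and compute $s_{[p,q]}^{-1} = s_{q-1} s_{q-2} \cdots s_p$ directly from $s_{[p,q]} = s_p s_{p+1} \cdots s_{q-1}$. (2) Concatenate these reversed blocks and observe that the resulting word on the letters $s_1, \dots, s_{n-1}$ is reduced (it still uses each letter once, and $\ell(v^{-1}) = \ell(v) = n-1$). (3) Identify the ``break points'' of this new word with the set $L$: this is the combinatorial core, and it is where I expect the only real friction — matching the descent/ascent bookkeeping against the definition $\{l_1<\cdots<l_{n-k-1}\} = [1,n]\setminus I$ and verifying that $l_0 = 1$, $l_{n-k} = n$ play the roles of the endpoints. (4) Re-group the word according to these break points to read off $v^{-1} = s_{[l_{n-k-1},l_{n-k}]} \cdots s_{[l_1,l_2]} s_{[1,l_1]}$. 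The main obstacle is step (3); once the correct characterization of $I$ in terms of the word is in hand (taken from \cite{GSV}), steps (1), (2), (4) are routine manipulations of one-line notation, and sanity checks at $n=2,3$ fix all sign/order conventions.
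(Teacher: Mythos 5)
Your argument is essentially correct, but note first that the paper itself offers no proof of Lemma~\ref{u_inv}: it is stated as one of the auxiliary combinatorial facts imported verbatim from \cite{GSV}, so you are supplying a proof where the paper supplies a citation. Your mechanism is the right one. The cleanest way to phrase your step (3) is: for $q\in[2,n-1]$, the factorization \eqref{factoru} puts $s_q$ to the \emph{left} of $s_{q-1}$ exactly when $q\in\{i_1,\dots,i_{k-1}\}$, and to the right otherwise (this is your ``$I\setminus\{1,n\}$ records where $s_q$ comes before $s_{q-1}$'' claim, with the convention already correct); reversing the word flips every such precedence, so the new ``break set'' is $[2,n-1]\setminus I=\{l_1,\dots,l_{n-k-1}\}$, which is precisely how $L$ is defined. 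The one place your write-up is loose is step (4): the literal reversed word $s_{[1,i_1]}^{-1}\cdots s_{[i_{k-1},i_k]}^{-1}$ is \emph{not} obtained from $s_{[l_{n-k-1},l_{n-k}]}\cdots s_{[1,l_1]}$ by mere regrouping --- e.g.\ for $v=s_4s_3s_1s_2$ one gets $v^{-1}=s_2s_1s_3s_4$, which must be commuted into $(s_2s_3s_4)(s_1)$. So you should invoke explicitly that a Coxeter element of $S_n$ is uniquely determined by the relative order of $s_p$ and $s_{p+1}$ for each $p$ (equivalently, by an orientation of the $A_{n-1}$ diagram; all reduced words of a Coxeter element form a single commutation class since each letter occurs once), and then conclude that the two expressions, having identical precedence data, define the same element. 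With that one sentence added, the proof is complete; the sanity checks at $n=2,3$ you propose are a reasonable way to confirm the convention but are not needed once the precedence characterization is stated precisely.
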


Let $(u,v)$ be a pair of Coxeter elements and
\begin{equation}\label{IL}
 \begin{aligned}
 I^+= & \, \{1=i^+_0 < i^+_1 < \cdots < i^+_{k^+}=n\}, \\
 I^-= & \, \{1=i^-_0 < i^-_1 < \cdots < i^-_{k^-}=n\},\\
 L^+= & \, \{1=l^+_0<  l^+_1 < \cdots < l^+_{n-k^+-1}<  l^+_{n-k^+}= n\},\\
 L^-= & \, \{1=l^-_0<  l^-_1 < \cdots < l^-_{n-k^--1}<  l^-_{n-k^-}= n\}
 \end{aligned}
 \end{equation}
  be subsets of $[1,n]$ that correspond to $v$ and $u^{-1}$ in the way just described.

Certain additional combinatorial data that was utilized in the commutative case \cite{GSV} can also be employed in a non-Abelian situation.
Namely, given a pair $(u,v)$ of Coxeter elements or, equivalently,  the sets $I^\pm$ given by~\eqref{IL},
we define, for any $i\in [1,n]$  integers $\varepsilon^\pm_i$ and $\zeta^\pm_i$:
\begin{equation}
\varepsilon^\pm_i=\left \{ \begin{array}{ll} 0, & \mbox{if}\  i=i^\pm_j\
\mbox{for some}\  0 < j \leq k_\pm \, ,
\\ 1, & \mbox{otherwise}\end{array}
\right.
\label{eps}
\end{equation}
and
\begin{equation}
\zeta^\pm_i =i (1-\varepsilon^\pm_i) -\sum_{\beta=1}^{i -1} \varepsilon^\pm_\beta;
\label{nunu}
\end{equation}
note that by definition, $\varepsilon^\pm_1=1$, $\zeta^\pm_1=0$.
Further, put
\begin{equation}
M^\pm_i=\{\zeta^\pm_\alpha\ : \ \alpha=1,\ldots,i \}
\label{Mi}
\end{equation}
and
\begin{equation}
k^\pm_i=\max \{ j: i^\pm_j \leq i\}.                               
\label{k+/-}
\end{equation}
Finally, define
\begin{equation}
\varepsilon_i= \varepsilon_{i}^+ + \varepsilon_{i}^-               
\label{epsum}
\end{equation}
and
\begin{equation}
 \varkappa_i= i+1 - \sum_{\beta=1}^i \varepsilon_\beta.            
 \label{kappa}
 \end{equation}


\begin{lem}
\label{allcomb}
{\rm (i)} The $n$-tuples $\varepsilon^\pm=(\varepsilon^\pm_i)$ and $\zeta^\pm=(\zeta^\pm_i)$ uniquely
determine each other.

{\rm (ii)} For any $i\in [1,n]$,
 \begin{equation*}
\zeta^\pm_i =
\left \{ \begin{array}{ll} j, & \mbox{if}\  i=i_j^\pm\ \mbox{for
some}\  0< j \leq k_\pm\, ,
\\ -\sum_{\beta=1}^{i -1} \varepsilon^\pm_\beta, &
\mbox{otherwise} .\end{array} \right .
\end{equation*}

{\rm (iii)} For any $i\in [1,n]$,
 $$
 k^\pm_i=i-\sum_{\beta=1}^i \varepsilon^\pm_\beta,\quad \varkappa_i= k_i^+ + k_i^- - i +1\ .
$$

{\rm (iv)} For any $i\in [1,n]$,
\begin{equation*}
M^\pm_i= [ k^\pm_i -i+1, k^\pm_i ]= \Big[1 - \sum_{\beta=1}^i
\varepsilon^\pm_\beta, i-\sum_{\beta=1}^i \varepsilon^\pm_\beta \Big].
\end{equation*}
\end{lem}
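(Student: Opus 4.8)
The plan is to treat the four assertions essentially in the order stated, since each one is largely a bookkeeping consequence of the definitions~\eqref{eps}--\eqref{kappa}, and later parts build on earlier ones. I would fix a sign (say $+$, the $-$ case being identical) and drop it from the notation throughout.

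For part~(i), I would observe that $\zeta_i$ as defined in~\eqref{nunu} depends only on $\varepsilon_1,\dots,\varepsilon_i$, so the tuple $\varepsilon$ clearly determines $\zeta$. For the converse, the key identity is the first difference: from~\eqref{nunu},
\[
\zeta_i-\zeta_{i-1}= i(1-\varepsilon_i)-(i-1)(1-\varepsilon_{i-1})-\varepsilon_{i-1},
\]
which simplifies to a quantity that equals $1$ exactly when $\varepsilon_i=0$ and is $\le 0$ (in fact equals $-\zeta_{i-1}$, hence nonpositive once one knows $\zeta_{i-1}\ge 0$ on $I$ and $\le 0$ otherwise) when $\varepsilon_i=1$. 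So $\varepsilon_i$ is recovered from the sign of $\zeta_i-\zeta_{i-1}$ together with $\varepsilon_1=1$. I would actually prove part~(ii) first and then read (i) off it, since (ii) gives the explicit closed form of $\zeta_i$ from which invertibility is transparent.

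Part~(ii) I would prove by induction on $i$. The base case $i=1$ is the noted normalization $\varepsilon_1=1$, $\zeta_1=0$. For the inductive step, split on whether $\varepsilon_i=0$ (i.e.\ $i=i_j$ for some $j$) or $\varepsilon_i=1$. If $\varepsilon_i=1$, then~\eqref{nunu} gives directly $\zeta_i=-\sum_{\beta=1}^{i-1}\varepsilon_\beta$, which is the second branch (note $\sum_{\beta=1}^{i}\varepsilon_\beta=\sum_{\beta=1}^{i-1}\varepsilon_\beta+1$, but the formula in~\eqref{nunu} has the sum up to $i-1$, so I would just match indices carefully here — this is the one spot to be careful). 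If $\varepsilon_i=0$ and $i=i_j$, then~\eqref{nunu} gives $\zeta_i=i-\sum_{\beta=1}^{i-1}\varepsilon_\beta=i-\#\{\beta<i:\varepsilon_\beta=1\}=\#\{\beta\le i:\varepsilon_\beta=0\}=\#\{0\le t\le j: i_t\le i\}-1\cdot[\text{indexing}]=j$, using that the zeros of $\varepsilon$ among $1,\dots,i$ are exactly $i_1,\dots,i_j$. Part~(iii): the formula $k_i=i-\sum_{\beta=1}^i\varepsilon_\beta$ follows because $\sum_{\beta=1}^i\varepsilon_\beta$ counts the indices in $[1,i]$ \emph{not} of the form $i_j$, hence $i-\sum_{\beta=1}^i\varepsilon_\beta$ counts those that are, i.e.\ $\#\{j\ge 1: i_j\le i\}=k_i$ (using $i_0=1$ but $\varepsilon_1=1$ since $0<j$ is required in~\eqref{eps}, so the count is clean). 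Then $\varkappa_i=i+1-\sum\varepsilon_\beta=i+1-\sum\varepsilon_\beta^+-\sum\varepsilon_\beta^-=i+1+(k_i^+-i)+(k_i^--i)=k_i^++k_i^--i+1$. Part~(iv): combine (ii) and (iii). By (ii), the set $M_i=\{\zeta_\alpha:\alpha\le i\}$ consists of the values $j$ for those $\alpha=i_j\le i$ — i.e.\ $0,1,\dots,k_i$ — together with the values $-\sum_{\beta=1}^{\alpha-1}\varepsilon_\beta$ as $\alpha$ ranges over the indices $\le i$ with $\varepsilon_\alpha=1$; as $\alpha$ runs through these the partial sum $\sum_{\beta=1}^{\alpha-1}\varepsilon_\beta$ takes every value $0,1,\dots,(\sum_{\beta=1}^i\varepsilon_\beta)-1$ exactly once, so the negative values fill in $-\big(\sum_{\beta=1}^i\varepsilon_\beta-1\big),\dots,-1,0$. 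Together with $0,\dots,k_i$ this is exactly the interval $[\,1-\sum_{\beta=1}^i\varepsilon_\beta,\ i-\sum_{\beta=1}^i\varepsilon_\beta\,]=[k_i-i+1,k_i]$ by (iii), and one checks it has no gaps and the right endpoints.

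The main obstacle, such as it is, is purely organizational: keeping the off-by-one conventions straight — in particular the fact that the sum in~\eqref{nunu} runs to $i-1$ while the sums in~(iii)--(iv) run to $i$, and the fact that $j$ is required to be strictly positive in~\eqref{eps} so that $i=1$ contributes $\varepsilon_1=1$ rather than being counted as $i_0$. Once the claim $M_i=\{\zeta_\alpha:\alpha\le i\}$ is understood as "the $\zeta$'s sweep out a contiguous range," none of the steps requires more than induction on $i$ and careful counting. I would present (ii) as the technical core, with (i), (iii), (iv) as short corollaries.
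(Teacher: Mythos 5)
The paper states Lemma~\ref{allcomb} without proof, importing it verbatim from the combinatorial setup of \cite{GSV}, so there is no in-paper argument to compare against; your direct verification from the definitions \eqref{eps}--\eqref{kappa} is correct and is exactly the kind of bookkeeping the paper leaves to the reader. All four parts check out: (ii) is immediate from \eqref{nunu} once one notes that the zeros of $\varepsilon^\pm$ in $[1,i]$ are precisely $i^\pm_1,\dots,i^\pm_j$; (iii) and (iv) follow by the counting you describe; and (i) follows from (ii) since for $i\ge 2$ one has $\zeta^\pm_i>0$ exactly when $\varepsilon^\pm_i=0$ (the second branch gives $\zeta^\pm_i\le -1$ because $\varepsilon^\pm_1=1$). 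One small inaccuracy: your first-difference computation in (i) does not give $\zeta_i-\zeta_{i-1}=1$ whenever $\varepsilon_i=0$ (if $\varepsilon_{i-1}=1$ the difference is $i-1$); only the sign statement is true. Since you fall back on deriving (i) from (ii) anyway, this does not affect the argument.
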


A set of $m\times m$ complex matrices $c_1^-,\ldots, c_{n-1}^-; c_1^+,\ldots, c_{n-1}^+; d_1, \ldots, d_n$ will play a role
of noncommutative parameters for generic elements in $G^{u,v}$. We will call $c_i^-$ {\em lower}, $c_i^+$ {\em upper},
and $d_i$ {\em diagonal} factorization parameters.

Define matrices $D=\diag (d_1,\ldots, d_n)$,
\begin{equation}\label{Cj}
C^+_j\!=\! \sum_{\alpha=i^+_{j-1}}^{i^+_j-1}  {e}_{\alpha,\alpha+1}\otimes c^+_\alpha,\quad j\in [1, k^+],\quad
C^-_j\!=\! \sum_{\alpha=i^-_{j-1}}^{i^-_j-1}  {e}_{\alpha+1,\alpha}\otimes c^-_\alpha,\quad j\in [1,  k^-],
\end{equation}
and
$$
\bar C^+_j \!=\! \sum_{\alpha=l^+_{j-1}}^{l^+_j-1}  {e}_{\alpha,\alpha+1}\otimes c^+_\alpha,\quad j\in [1, n-k^+],\quad
\bar C^-_j\!=\! \sum_{\alpha=l^-_{j-1}}^{l^-_j-1} {e}_{\alpha+1,\alpha}\otimes c^+_\alpha ,\quad j\in [1, n-k^-].
$$

\begin{lem}
\label{X-Xinv}
A generic  element $X\in G_m^{u,v}$ can be written as
\begin{equation}
X=  (\one -C^-_1)^{-1} \cdots  (\one - C^-_{k^-})^{-1} D (\one - C^+_{k^+})^{-1}\cdots (\one -
C^+_1)^{-1},
\label{factorI}
\end{equation}
and its inverse can be factored as
\begin{equation}
X^{-1}=
(\one + \bar C^+_{n-k^+})^{-1}\cdots (\one +
\bar C^+_1)^{-1} D^{-1} (\one +
\bar C^-_1)^{-1} \cdots  (\one + \bar C^-_{k^-})^{-1}.
\label{factorinv}
\end{equation}
\end{lem}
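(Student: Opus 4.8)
The plan is to reduce the statement to a purely combinatorial bookkeeping exercise about how the factorizations of the Coxeter elements $v$ and $u^{-1}$ (from \eqref{factoru} and Lemma \ref{u_inv}) translate into products of elementary block matrices $E^\pm_i(a)$. First I would recall the scalar-case argument from sect.~3 of \cite{GSV}: a double Bruhat cell element factors as $X = X_- D X_+$ with $X_-$ lower unipotent in the Bruhat cell of $v$ and $X_+$ upper unipotent in the Bruhat cell of $u$, and each unipotent factor is a product of $E^\mp_{i}$'s with the indices $i$ dictated precisely by a reduced word for $v$ (resp. $u$). The key point is that since $G^{u,v}$ is defined by the \emph{same} formula \eqref{doubleBruhat} in the non-Abelian case, with $\B^\pm$ now block triangular, and since the map $\N \mapsto A(\N)$ respects concatenation (the last paragraph before sect.~3.2), the combinatorial skeleton carries over verbatim; only the weights become $m\times m$ matrices.

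The main steps, in order, would be: (1) Observe that $(\one - C^-_j)^{-1}$ is itself lower block bidiagonal — more precisely, expanding the geometric series, $(\one - C^-_j)^{-1} = \one + C^-_j + (C^-_j)^2 + \cdots$, and since $C^-_j$ is a single sub-block-diagonal strip supported on rows $i^-_{j-1}+1,\dots,i^-_j$, this is a finite sum realizing exactly the lower-unipotent factor $E^-_{i^-_j-1}(c^-_{i^-_j-1})\cdots E^-_{i^-_{j-1}}(c^-_{i^-_{j-1}})$ attached to the segment $s_{[i^-_{j-1},i^-_j]}$ in the reduced word \eqref{factoru} for $v$ (here playing the role of the cell's ``$v$'' after the identification of $u^{-1}$ with the $I^-$ data). (2) Multiply the factors $(\one-C^-_1)^{-1}\cdots(\one-C^-_{k^-})^{-1}$ together and check that the resulting lower unipotent matrix lies in $\B^- \bar v \B^-$ for the appropriate Coxeter element, using the segment decomposition $v = s_{[i_{k-1},i_k]}\cdots s_{[1,i_1]}$; do the mirror computation for the upper factors $(\one-C^+_j)^{-1}$, landing in $\B^+ \bar u \B^+$. (3) Conclude that $X$ as in \eqref{factorI} is a generic element of $G^{u,v}_m$ by a dimension/parameter count: the $c^\pm_i$ ($i=1,\dots,n-1$) and $d_i$ ($i=1,\dots,n$) supply exactly $(2(n-1)+n)m^2$ free parameters, matching the block dimension of the Coxeter double Bruhat cell, and genericity means the indicated Schur complements / inverses exist. (4) For \eqref{factorinv}, invert \eqref{factorI} termwise: $X^{-1} = (\one - C^+_1)(\one-C^+_2)\cdots(\one-C^+_{k^+}) D^{-1} (\one-C^-_{k^-})\cdots(\one-C^-_1)$, then recognize — again via the geometric series in the \emph{opposite} direction, i.e. $(\one + \bar C^+_j)^{-1} = \one - \bar C^+_j + (\bar C^+_j)^2 - \cdots$ truncated — that regrouping the elementary factors $E^\pm$ according to the \emph{complementary} index sets $L^\pm$ rather than $I^\pm$ reproduces \eqref{factorinv}; this is where Lemma \ref{u_inv}, which expresses $v^{-1}$ in terms of the $L$-data, does the essential work.

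The hard part will be step (4): matching the reshuffled product of elementary $E^\pm$-factors against the specific grouping into the $\bar C^\pm_j$ strips indexed by $L^\pm$. The subtlety is that $(\one-C^+_j)$ (not its inverse) is what appears after inverting \eqref{factorI}, and one must verify that the product $(\one-C^+_1)\cdots(\one-C^+_{k^+})$ — a product of lower-degree-in-$\lambda$ truncations — equals, after collecting terms, the product $(\one+\bar C^+_{n-k^+})^{-1}\cdots(\one+\bar C^+_1)^{-1}$ of full geometric-series inverses of the complementary strips. In the scalar case this is a known identity about reduced words for $v$ versus $v^{-1}$ (the ``$I \leftrightarrow L$ duality'' recorded in Lemma \ref{u_inv} and the combinatorics of Lemma \ref{allcomb}); in the non-Abelian case the only thing to check is that no reordering of noncommuting weights is secretly required, which follows because within each strip $C^\pm_j$ the nonzero blocks occupy disjoint sets of matrix positions, so distinct $C^\pm_j$ have ``block-disjoint support'' in the relevant sense and the relevant products associate without permuting the $c^\pm_i$. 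Once that bookkeeping is pinned down the two displayed formulas follow, and genericity (invertibility of all intermediate factors) is automatic on a Zariski-dense subset.
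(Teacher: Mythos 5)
Your overall route --- expand each $(\one-C^\pm_j)^{-1}$ as a finite geometric series to recover the ordered product of elementary factors $E^\pm_\alpha(c^\pm_\alpha)$ over the corresponding strip, and then reduce both \eqref{factorI} and \eqref{factorinv} to a bookkeeping statement about two orderings of the same $n-1$ elementary block matrices --- is exactly the one the paper intends: Lemma \ref{X-Xinv} is given there without proof, on the strength of the remark at the start of Section \ref{sec:cdbc} that the combinatorics of sect.~3 of \cite{GSV} carries over once the weights become $m\times m$ matrices, and your steps (1)--(3) spell this out correctly (including the orientation of each strip: $(\one-C^-_j)^{-1}$ is the product of the $E^-_\alpha$ in decreasing order of $\alpha$, while $(\one-C^+_j)^{-1}$ is the product in increasing order).

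The one step that does not hold up as written is your justification of (4). You argue that passing from the $I^+$-grouping $(\one-C^+_1)\cdots(\one-C^+_{k^+})$ to the $L^+$-grouping $(\one+\bar C^+_{n-k^+})^{-1}\cdots(\one+\bar C^+_1)^{-1}$ requires no exchange of noncommuting factors ``because distinct $C^\pm_j$ have block-disjoint support.'' Disjoint support is not the relevant property: consecutive strips $C^+_j$ and $C^+_{j+1}$ also have disjoint supports, yet $E^+_{i^+_j-1}(a)$ and $E^+_{i^+_j}(b)$ do \emph{not} commute (their product acquires an $e_{i^+_j-1,\,i^+_j+1}$ block), and it is precisely such adjacent pairs that must never be transposed. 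The correct mechanism is this: record the index of each elementary factor to read each side as a word in $s_1,\dots,s_{n-1}$. Both words are reduced words of length $n-1$ for the \emph{same} Coxeter element $v^{-1}$ --- the left side is the reversal of \eqref{factoru}, the right side is the word supplied by Lemma \ref{u_inv}; this is exactly where the $I\leftrightarrow L$ duality enters. A Coxeter element uses each generator once, so no braid move $s_is_{i+1}s_i=s_{i+1}s_is_{i+1}$ can occur, and the two words differ by commutation moves alone, i.e.\ by transpositions of $E^+_i(a)E^+_j(b)$ with $|i-j|\ge2$; those two block matrices genuinely commute because $e_{i,i+1}e_{j,j+1}=e_{j,j+1}e_{i,i+1}=0$, independently of the noncommutativity of $a$ and $b$. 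With this substituted for the ``disjoint support'' argument (and the same reasoning applied to the lower factors), your proof goes through. Incidentally, when checking the strip expansions you will notice two typos in the paper: $\bar C^-_j$ should carry weights $c^-_\alpha$, and the last factor of \eqref{factorinv} should be $(\one+\bar C^-_{n-k^-})^{-1}$.
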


The network $N_{u,v}$ that corresponds to factorization (\ref{factorI}) is obtained by the concatenation (left to right)
of $2 n -1$ building blocks (as depicted in Fig.~\ref{fig:factor}) that correspond to elementary matrices
\begin{align*}
&E^-_{i^-_{2}-1}(c^-_{i^-_{2}-1}), \ldots,  E^-_{1}(c^-_{1}), E^-_{i^-_{3}-1}(c^-_{i^-_{3}-1}),\ldots,
E^-_{i^-_{2}}(c^-_{i^-_{2}}), \ldots, \\
& E^-_{n-1}(c^-_{n-1}), \ldots,   E^-_{i^-_{k^--1}}(c^-_{i^-_{k^--1}}), D,
E^+_{i^+_{k^+-1}}(c^+_{i^+_{k^+-1}}), \ldots,  E^+_{n-1}(c^+_{n-1}),\\
&\ldots, E^+_{i^+_{2}}(c^+_{i^+_{2}}),\ldots  E^+_{i^+_{3}-1}(c^+_{i^+_{3}-1}),
E^+_{1}(c^+_{1})\cdots  E^+_{i^+_{2}-1}(c^+_{i^+_{2}-1})\, .
\end{align*}
This network has $4(n-1)$ internal vertices and $5n-4$ horizontal edges.

Similarly, the network $\bar N_{u,v}$ that corresponds to factorization (\ref{factorinv})
is obtained by the concatenation (left to right)
of building blocks  that correspond to elementary matrices
\begin{align*}
&E^+_{i^+_{2}-1}(-c^+_{i^+_{2}-1}),\ldots, E^+_{1}(-c^+_{1}),E^+_{i^+_{3}-1}(-c^+_{i^+_{3}-1}),
\ldots, E^+_{i^+_{2}}(-c^+_{i^+_{2}}),\ldots ,\\
& E^+_{n-1}(-c^+_{n-1}),\ldots, E^+_{i^+_{k^+-1}}(-c^+_{i^+_{k^+-1}}),
D^{-1}, E^-_{i^-_{k^--1}}(-c^-_{i^-_{k^--1}}), \ldots, E^-_{n-1}(-c^-_{n-1}),  \\
&\ldots, E^-_{i^-_{2}}(-c^-_{i^-_{2}}), \ldots, E^-_{i^-_{3}-1}(-c^-_{i^-_{3}-1}),
E^-_{1}(-c^-_{1}),\ldots, E^-_{i^-_{2}-1}(-c^-_{i^-_{2}-1})\, .
\end{align*}

\begin{rem}\label{cmvao}
(i) If $v=s_{n-1} \cdots s_1$, then $X$ is a block lower Hessenberg matrix, and if
$u=s_{1} \cdots s_{n-1}$, then $X$ is a block upper Hessenberg matrix.

(ii) If
$v=s_{n-1} \cdots s_1$ {\em and} $u=s_{1} \cdots s_{n-1}$, then $G^{u,v}$ consists of
block tri-diagonal matrices with non-zero off-diagonal entries ({\em block Jacobi matrices\/}).
In this case $I^+=I^-=[1,n]$, $\varepsilon_1^\pm=1$ and $\varepsilon_i^\pm=0$ for $i=2, \ldots, n$.

(iii) If
$u=v=s_{n-1} \cdots s_1$ (which leads to $I^+=[1,n], I^-=\{1,n\}$), then, in the scalar case,
elements of $G^{u,v}$
have a structure of recursion operators arising in the theory of
orthogonal polynomials on the unit circle.

(iv) The choice $u=v=( s_{1} s_3 \cdots ) (s_2 s_4 \cdots )$ (the so-called {\em
bipartite Coxeter element\/})
gives rise to a special kind of pentadiagonal block matrices $X$. In the scalar case, they are called {\em CMV
matrices\/}) and serve
as an alternative version of recursion operators for orthogonal polynomials
on the unit circle \cite{cmv} and in the complex plane  \cite{BerIvMokh}.
\end{rem}

\subsection{Example}
\label{runex}

 Let $n=5$, $v =s_4 s_3 s_1 s_2$ and $u =s_3 s_2 s_1 s_4$. The network $N_{u,v}$ that corresponds
to factorization (\ref{factorI}) is shown in Figure~\ref{factorex}.


\begin{figure}[ht]
\begin{center}
\includegraphics[height=4.0cm]{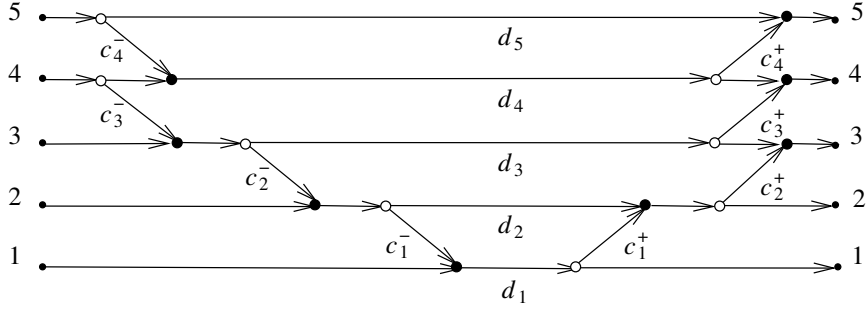}
\caption{Network representation for elements in $G^{s_3 s_2 s_1 s_4,s_4 s_3 s_1 s_2}$}
\label{factorex}
\end{center}
\end{figure}


A generic element $X\in G^{u,v}$ has a form
$$
X= (x_{ij})_{i,j=1}^5=\left (
\begin{array}{ccccc}
d_1 &  x_{11}c_1^+ & x_{12}c_2^+ & 0 & 0\\
c_1^-x_{11} & d_2+c_1^-x_{12} & x_{22}c_2^+ & 0 & 0\\
c_2^-x_{21} & c_2^-x_{22} & d_3+c_2^-x_{23} & d_3c_3^+ & 0\\
c_3^-x_{31} & c_3^-x_{32} & c_3^-x_{33} & d_4+c_3^-x_{34} & d_4c_4^+\\
0 & 0 & 0 & c_4^-d_4 & d_5+c_4^-x_{45}
\end{array}
\right ).
$$

One finds by a direct observation that $k^+=3$ and $I^+=\{i_0^+,i_1^+,i_2^+,i_3^+\}=\{1,3,4,5\}$, and hence
$L^+=\{l_0^+,l_1^+,l_2^+\}=\{1,2,5\}$. Next, $u^{-1}=s_4s_1s_2s_3$, therefore, $k^-=2$ and
$I^-=\{i_0^-,i_1^-,i_2^-\}=\{1,4,5\}$, and hence $L^-=\{l_0^-,l_1^-,l_2^-,l_3^-\}=\{1,2,3,5\}$.
Further,
$$
\varepsilon^+=(1,1,0,0,0), \quad\varepsilon^-=(1,1,1,0,0),
$$
 and hence
 $$
 \zeta^+=(0,-1,1,2,3),\quad \zeta^-=(0,-1,-2,1,2).
 $$
Therefore,
$$
(k^+_i)_{i=1}^5=(0,0,1,2,3), \quad (k^-_i)_{i=1}^5=(0,0,0,1,2),
$$
and hence
\begin{align*}
(M^+_i)_{i=1}^5&=([0,0], [-1, 0], [-1,1], [-1,2], [-1,3]),\\
(M^-_i)_{i=1}^5&=([0,0], [-1, 0], [-2,0], [-2,1], [-2,2]).
\end{align*}
Finally, $\varepsilon=(2,2,1,0,0)$ and
$\varkappa=(0,-1,-1,0,1)$.

The network $\bar N_{u^{-1},v^{-1}}$ that corresponds
to factorization (\ref{factorinv}) is shown in Figure~\ref{factorinvex}.


\begin{figure}[ht]
\begin{center}
\includegraphics[height=4.0cm]{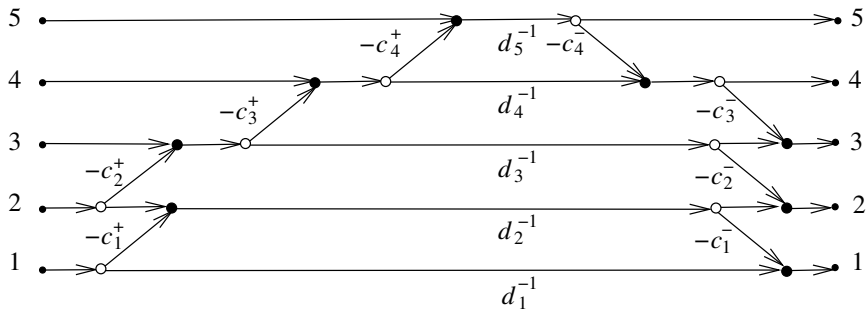}
\caption{Network $\bar N_{u^{-1},v^{-1}}$ for the double Bruhat cell  $G^{s_3 s_2 s_1 s_4,s_4 s_3 s_1 s_2}$}
\label{factorinvex}
\end{center}
\end{figure}


\section{Inverse problem}\label{sec:invprob}

\subsection{} With each $X\in G^{u,v}$ one associates a \emph{matrix Weyl function}
\be M(\la)=M(\la,X)=e_1^T(\la - X)^{-1} e_1=\sum_{k=0}^{\infty}\frac{1}{\la^{k+1}}h_k, \lab{Weyl} \ee
where 
\be\label{moments}
h_k=e_1^T X^ke_1
\ee
are the \emph{moments} of $X$.

Our goal is to show how a generic element $X$ of a non-Abelian Coxeter double Bruhat cell $G^{u,v}$ that admits factorization
(\ref{factorI}) can be restored from its Weyl function \re{Weyl} up to a block-diagonal  conjugation preserving the Weyl function.
We denote by $G^{u,v}/{\mathbf T}$ the space of orbits of this action on $G^{u,v}$. Here ${\mathbf T}$ denotes the group of invertible block diagonal matrices of the form $T=\diag \left (\one_m, T_1, T_{n-1}) \right )$.

If $X\in G^{u,v}$ has factorization parameters $c_i^\pm, d_i$ and $A$ is a block diagonal matrix $T=\diag \left (\one_m, c^-_1,\ldots, (c^-_{n-1}\cdots c_{1}^-) \right )$, then
$X':= T^{-1} X T$ has all lower parameters equal to $\one_m$ and its upper and diagonal parameters are given by
\begin{equation}
{\bf c}_i = (c^-_{i-1}\cdots c_{1}^-)^{-1} c_i^+ c_i^- (c^-_{i-1}\cdots c_{1}^-),\quad   {\bf d}_i = (c^-_{i-1}\cdots c_{1}^-)^{-1} d_i (c^-_{i-1}\cdots c_{1}^-)\ .
\label{paramC}
\end{equation}

Clearly,  the Weyl function of $X'$ coincides with that of $X$ and we can view \eqref{paramC} as parameterizing a generic element of $G^{u,v}/{\mathbf T}$. In other words, the inverse
problem we are interested in solving can be restated as follows: given an element in $X\in G^{u,v}$ with all lower parameters equal to $\one$, restore the remaining factorization parameters
${\bf c}_i, {\bf d}_i$ from the Weyl function $M(\lambda, X)$.

To solve the inverse problem, we combine the approach employed in the commutative situation \cite{FG3, GSV} with the one
used in a non-Abelian setting in the  block Jacobi case \cite{krein, Berezan} (see also \cite{G0, shmisha}) and block Hessenberg ($v= s_{n-1}\cdots s_1$) case \cite{GeKo}. The main idea stems from the classical moments problem
\cite{akh}:  in the commutative case,
one considers the space
$\mathbb{C}[\lambda, \lambda^{-1}] /\det (\lambda - X) $ equipped with
the so-called {\em moment functional} - a bi-linear functional $\langle\ , \ \rangle$ on Laurent polynomials in one variable, uniquely defined by the property
 \begin{equation}
 \label{momfun}
 \langle \lambda^i, \lambda^j \rangle = h_{i+j}.
 \end{equation}
$X$ is then realized as a matrix of the  operator of multiplication by $\lambda$ relative to  appropriately
selected  bases $ \{p_i^+(\lambda)\}_{i=0}^{n-1}$, $\{p_i^-(\lambda)\}_{i=0}^{n-1}$ bi-orthogonal with respect to
the moment functional:
\begin{equation}
\langle p_i^-(\lambda), p_j^+(\lambda) \rangle = \delta_{ij}.
\label{matmom}
\end{equation}
For example, the classical tridiagonal case corresponds to the orthogonalization of the sequence $1, \lambda, \ldots, \lambda^{n-1}$.  Elements of $G^{s_{n-1}\cdots s_1, s_{n-1}\cdots s_1}$ (cf. Remark~\ref{cmvao}(iii)) result from the bi-orthogonalization
of sequences $1, \lambda, \ldots, \lambda^{n-1}$ and $\lambda^{-1}, \ldots, \lambda^{1-n}$, while
CMV matrices (Remark~\ref{cmvao}(iv)) correspond to the bi-orthogonalization
of sequences $1, \lambda, \lambda^{-1}, \lambda^{2}, \ldots $ and   $1, \lambda^{-1}, \lambda, \lambda^{-2},
\ldots$
The non-Abelian case was first treated in pioneering works by
M.~G. Krein \cite{krein} and Yu.~M. Berezanskii \cite[chapter VII.2]{Berezan} on Jacobi matrices with matrix (operator) valued coefficients. In this case, the moment functional defined by \eqref{momfun} becomes a matrix-valued functional that acts on pairs of matrix-valued Laurent polynomials $a(\lambda)= \sum_{i}\lambda^i a_i, b(\lambda)= \sum_{j}^N \lambda^i b_j$ by
$$
\langle a(\lambda), b(\lambda \rangle = \sum_{i,j} a_i h_{i+j} b_j\
$$
and, in the tri-diagonal case, coefficients of $X$ are obtained via so-called {\em pseudo-orthogo\-nalization}
\cite{Berezan} applied to the sequence $\one, \lambda \one, \ldots
$
We will generalize this strategy to the case of arbitrary Coxeter $u,v$.

For any $l\in \mathbb Z$, $i\in \mathbb N$ define
block Hankel matrices
\begin{equation}
\H^{(l)}_i=(h_{\alpha +\beta + l - i - 1})_{\alpha,\beta=1}^i\, .
\label{hank}
\end{equation}
Matrices $\H^{(l)}_i$ play the key role in the solution of the inverse problem. Before describing its solution, let us recall the situation in the scalar case which can be summarized
in the following theorem quoted from \cite{GSV}.

\begin{thm}
 \label{invthmscalar} Let $\Delta_i^{(l)}=\det \H^{(l)}_i$.  If $X\in G^{u,v}$ admits factorization~{\rm\eqref{factorI}}, then
\begin{equation}\label{cd}
\begin{aligned}
{\bf d}_i  =d_i &=\frac{\Delta_i^{(\varkappa_i + 1)}\Delta_{i-1}^{(\varkappa_{i-1})}}{\Delta_i^{(\varkappa_i)}\Delta_{i-1}^{(\varkappa_{i-1} + 1)}},\\
{\bf c}_i  =c_i^+c_i^-&= \frac{\Delta_{i-1}^{(\varkappa_{i-1})}\Delta_{i+1}^{(\varkappa_{i+1})}}{\left (\Delta_i^{(\varkappa_{i}+1)}\right )^2}
\left ( \frac{\Delta_{i+1}^{(\varkappa_{i+1} + 1)}}{\Delta_{i+1}^{(\varkappa_{i+1})}}  \right )^{\varepsilon_{i+1}}
\left ( \frac{\Delta_{i-1}^{(\varkappa_{i-1} + 1)}}{\Delta_{i-1}^{(\varkappa_{i-1})}}  \right )^{2-\varepsilon_{i}}
\end{aligned}
\end{equation}
for any $i\in [1,n]$.
\end{thm}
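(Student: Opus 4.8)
The plan is to recover $X$ (normalized so that all lower parameters equal $\one$, as arranged in the paragraph preceding the theorem) as the matrix of the operator of multiplication by $\lambda$ relative to a pair of bi-orthogonal bases for the moment functional, and then to translate the resulting matrix entries into ratios of the Hankel determinants $\Delta_i^{(l)}$. First I would fix $\langle\cdot,\cdot\rangle$ by \eqref{momfun} and observe, as in the classical tridiagonal situation, that the factorization \eqref{factorI} is equivalent to saying that $X$ represents $\times\lambda$ in bases $\{p_i^+\}_{i=0}^{n-1}$, $\{p_i^-\}_{i=0}^{n-1}$ satisfying \eqref{matmom}, where at stage $i$ the polynomial $p_i^+$ (resp. $p_i^-$) is built by bi-orthogonalizing Laurent monomials whose exponents fill out the interval $M_i^+$ (resp. $M_i^-$) of Lemma \ref{allcomb}(iv). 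The crucial bookkeeping point is already visible here: writing these exponents as $a_\beta=k_i^+-i+\beta$ and $b_\alpha=k_i^--i+\alpha$, the $(\alpha,\beta)$ Gram entry is $\langle\lambda^{b_\alpha},\lambda^{a_\beta}\rangle=h_{a_\beta+b_\alpha}=h_{\alpha+\beta+(k_i^++k_i^--2i)}$, which by the identity $\varkappa_i=k_i^++k_i^--i+1$ of Lemma \ref{allcomb}(iii) equals $h_{\alpha+\beta+\varkappa_i-i-1}$. Thus the Gram matrix of the first $i$ basis vectors is exactly the block $\H_i^{(\varkappa_i)}$ of \eqref{hank}, and this is where the superscript $\varkappa_i$ enters.

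Next I would write down the Cramer's-rule formulas for $p_i^\pm$. Bi-orthogonality \eqref{matmom} determines $p_i^\pm$ up to normalization of its leading monomial, and solving the associated linear system expresses each polynomial as a bordered Hankel determinant divided by $\Delta_i^{(\varkappa_i)}$; in particular the leading coefficient of $p_i^+$ relative to $p_{i-1}^+$ is the ratio $\Delta_{i-1}^{(\varkappa_{i-1})}/\Delta_i^{(\varkappa_i)}$, and dually for $p_i^-$. The key structural fact is that passing from stage $i-1$ to stage $i$ either raises a positive exponent or lowers a negative one according to whether $i\in I^+$ or $i\in I^-$, i.e. according to the values $\varepsilon_i^\pm$ of \eqref{eps}; this is precisely what governs whether the relevant Hankel band is shifted, and it is the origin of the superscript changes between $\varkappa_i$ and $\varkappa_i+1$.

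Then I would extract the parameters. The diagonal parameter $d_i$ combines the leading-coefficient normalizations of $p_i^\pm$ relative to $p_{i-1}^\pm$, which produces the four-fold ratio $\Delta_i^{(\varkappa_i+1)}\Delta_{i-1}^{(\varkappa_{i-1})}/\bigl(\Delta_i^{(\varkappa_i)}\Delta_{i-1}^{(\varkappa_{i-1}+1)}\bigr)$ essentially at once. The product $c_i=c_i^+c_i^-$ is, by bi-orthogonality, a product of two off-diagonal pairings $\langle\lambda p_{i-1}^-,p_i^+\rangle$ and $\langle\lambda p_i^-,p_{i-1}^+\rangle$ evaluated at levels $i+1$ and $i-1$; substituting the determinantal formulas turns these into the factors $\Delta_{i+1}^{(\varkappa_{i+1})}$ and $\Delta_{i-1}^{(\varkappa_{i-1})}$ over $\bigl(\Delta_i^{(\varkappa_i+1)}\bigr)^2$. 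The corrective powers $\varepsilon_{i+1}$ and $2-\varepsilon_i$ then appear exactly because the monomial adjoined at levels $i+1$ and $i$ may be of positive or of negative type, raising the corresponding superscript by $1$ the indicated number of times.

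The main obstacle is precisely this combinatorial bookkeeping: proving, uniformly over all Coxeter pairs $(u,v)$, that the superscripts emerging from the computation are exactly $\varkappa_i$ and that the corrective exponents are exactly $\varepsilon_{i+1}$ and $2-\varepsilon_i$. I would handle it by induction on $i$, using Lemma \ref{allcomb} to pass freely between $\varepsilon^\pm$, $\zeta^\pm$, $k_i^\pm$, $M_i^\pm$ and $\varkappa_i$, and a Sylvester/Jacobi identity for the bordered Hankel determinants to collapse the connection coefficients into the displayed ratios. The delicate part is the case analysis distinguishing $i\in I^+$, $i\in I^-$, and $i\notin I^+\cup I^-$ — equivalently the three values $\varepsilon_i\in\{0,1,2\}$ of \eqref{epsum} — since it is exactly these cases that dictate how the orthogonalization interval $M_i^\pm$ grows and hence how each Hankel superscript is incremented.
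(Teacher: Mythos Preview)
The paper does not actually prove Theorem~\ref{invthmscalar}: it is explicitly ``quoted from \cite{GSV}'' as background for the noncommutative result. What the paper does prove is the noncommutative analogue, Theorem~\ref{invthm}, and that proof specializes to the scalar case. So the relevant comparison is between your outline and the argument in \S\ref{sec:invprob}.3--4.

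Your approach and the paper's are close in spirit---both build Hankel-determinantal polynomials attached to the moment functional and read off the parameters from suitable pairings---but they differ in execution. You set up a \emph{symmetric} bi-orthogonal pair $(p_i^+,p_i^-)$, extract $d_i$ from leading-coefficient ratios, and extract $c_i$ from the off-diagonal pairings $\langle\lambda p_{i-1}^-,p_i^+\rangle$, $\langle\lambda p_i^-,p_{i-1}^+\rangle$, with Sylvester/Jacobi identities doing the determinantal collapsing. The paper instead uses an \emph{asymmetric} scheme: only one polynomial family $\P_i^{(l)}(\lambda)$ is introduced (your $p_i^-$, essentially), and the dual side is handled not by a second polynomial family but by the network-combinatorial Lemma~\ref{flaglemma}, which identifies $e_1^T X^{\zeta_i^+}$ up to lower terms via the weights $\gamma_i^\pm$. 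The diagonal parameter $d_i$ is then recovered not from leading coefficients but from the \emph{constant term} $\P_k^{(\varkappa_k)}(0)$, via the submatrix truncation trick (Lemmas~\ref{submoment}, \ref{block-det}, \ref{recover_d}); and $c_i$ is obtained from the product $\gamma_i^+\gamma_i^-$ computed through Lemma~\ref{rowrepeat}. No Sylvester identity is invoked.

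Your route is the classical moment-theory one the paper alludes to narratively just before the theorem, and it is well suited to the scalar setting where Cramer's rule and Sylvester are available. The paper's route is engineered to survive noncommutativity, which is why it avoids two full polynomial families (whose leading-coefficient manipulations would not commute) in favor of the one-sided $\P_i^{(l)}$ plus the $\gamma_i^\pm$ coming directly from the factorization. In the scalar case both collapse to the same Hankel ratios, so your outline is a legitimate alternative proof; just be aware that the step ``$d_i$ combines the leading-coefficient normalizations of $p_i^\pm$'' needs a concrete identity linking those normalizations to the factorization parameters of \eqref{factorI}, which is where the paper's constant-term/submatrix argument is doing real work.
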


The rest of this section is devoted to the proof of the noncommutative analogue of Theorem \ref{invthmscalar} that can be formulated as

\begin{thm}
 \label{invthm}
 If $X$ is  a generic element of $G^{u,v}$ admitting factorization~{\rm\eqref{factorI}} then corresponding noncommutative factorization parameters
${\bf c}_i, {\bf d}_i$ can be restored as noncommutative monomial expressions in terms of quasideterminants associated with corner block entries of matrices
$\H^{(l)}_i$.
\end{thm}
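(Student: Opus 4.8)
\textbf{Proof plan for Theorem \ref{invthm}.}

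The plan is to realize $X$ (with all lower factorization parameters normalized to $\one$, as explained after \eqref{paramC}) as the matrix of multiplication by $\lambda$ with respect to a pair of biorthogonal bases of matrix-valued Laurent polynomials, exactly mimicking the scalar argument of \cite{GSV} but replacing every ratio of Hankel determinants by a suitable quasideterminant (Schur complement) of a block Hankel matrix $\H^{(l)}_i$. First I would identify the relevant module: the quotient of matrix Laurent polynomials by the right ideal generated by a monic polynomial annihilating $X$ (the block analogue of the characteristic polynomial), equipped with the matrix-valued moment functional $\langle a(\lambda),b(\lambda)\rangle=\sum_{i,j}a_i h_{i+j}b_j$ built from the moments \eqref{moments}. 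The columns of the factorization \eqref{factorI} determine, via the network $N_{u,v}$, a distinguished ordered sequence of monomials $\lambda^{\alpha}$ (the exponents being read off from the sets $M^\pm_i$ and the integers $\zeta^\pm_i$, $\varkappa_i$ of Lemma \ref{allcomb}); biorthogonalizing this sequence against the dual sequence coming from $X^{-1}$ via \eqref{factorinv} produces bases $\{p_i^\pm(\lambda)\}$ satisfying \eqref{matmom}, and the matrix entries of the multiplication operator in these bases recover $X$.

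Next I would make the biorthogonalization explicit. Writing the Gram matrix of the chosen monomial sequences in terms of the $h_k$ yields precisely the block Hankel matrices $\H^{(l)}_i$ of \eqref{hank}, with the shift parameter $l$ dictated by $\varkappa_i$ (and by $\varepsilon_i$ for the off-diagonal corrections), just as in Theorem \ref{invthmscalar}. In the commutative case Cramer's rule expresses the leading coefficients of $p_i^\pm$ as ratios $\Delta_i^{(l)}/\Delta_{i-1}^{(l')}$; in the noncommutative case Cramer's rule is unavailable, but the correct substitute is the quasideterminant formula \eqref{Schur}: the leading coefficient of the $i$-th biorthogonal polynomial is the Schur complement of the $(n-1)\times(n-1)$ corner submatrix inside $\H^{(l)}_i$, i.e. a corner quasideterminant $(\H^{(l)}_i)^{\square}_{ii}$ or $(\H^{(l)}_i)^{\square}_{11}$. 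Assembling the matrix of multiplication by $\lambda$ then expresses ${\bf d}_i$ and ${\bf c}_i$ as ordered (noncommutative) products of such corner quasideterminants and their inverses — a single quasideterminant replacing each Hankel ratio, and ordered products replacing the exponents $\varepsilon_{i+1}$, $2-\varepsilon_i$ in \eqref{cd}. The key algebraic tool throughout is Lemma \ref{Ginv} together with Remark \ref{zeroschur}, which let one compute blocks of inverses of the $\H^{(l)}_i$ and propagate vanishing of Schur complements (the block analogue of the determinantal identities among the $\Delta_i^{(l)}$ used in \cite{GSV}).

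The main obstacle I anticipate is purely noncommutative bookkeeping: in the scalar proof one freely rearranges the numerous determinantal factors, but here the order of the quasideterminant factors is forced, and one must verify that the ordered monomial obtained from the multiplication operator genuinely matches the ordered monomial predicted by the factorization \eqref{factorI} — in particular that the block-diagonal conjugation ambiguity by ${\mathbf T}$ is exactly the indeterminacy left unfixed by the construction. Controlling this requires careful use of the combinatorial data of Lemma \ref{allcomb} (the sets $M^\pm_i$, the indices $k^\pm_i$, and the identities $\varkappa_i=k_i^++k_i^--i+1$) to pin down which corner of which $\H^{(l)}_i$ enters at each step, and then an induction on $i$ — running simultaneously over the upper and lower factors in the order dictated by $N_{u,v}$ — using Lemma \ref{Ginv} at each stage to pass from $\H^{(l)}_{i-1}$ to $\H^{(l)}_i$. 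Once the ordered quasideterminant expressions are written down, checking that they reduce to \eqref{cd} in the commutative case is a routine consistency check and will be used to fix the remaining normalization conventions.
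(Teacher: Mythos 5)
Your plan correctly identifies the paper's announced strategy — pseudo-orthogonalization with respect to the matrix moment functional, with corner quasideterminants of the block Hankel matrices \eqref{hank} replacing the Hankel determinant ratios of Theorem \ref{invthmscalar} — but as written it asserts the conclusion at exactly the points where the real work lies, and it omits the specific devices the paper uses to extract ${\mathbf d}_i$ and ${\mathbf c}_i$. Take the diagonal parameters first. You propose to read them off the matrix of multiplication by $\lambda$ in a biorthogonal basis; the paper instead evaluates the pseudo-orthogonal polynomial $\P^{(\varkappa_k)}_k$ of \eqref{polyhank} at $\lambda=0$ and identifies the result, on one hand, with a corner quasideterminant (Lemma \ref{atzero}) and, on the other hand, with a two-sided ordered product $\bigl(\prod {\mathbf d}_{l^-_\alpha}\bigr)\,{\mathbf d}_k\,\bigl(\prod {\mathbf d}_{i^-_\beta}\bigr)$ (Lemma \ref{recover_d}). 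The second identification is not a formal consequence of biorthogonality: it requires the truncation lemma (Lemma \ref{submoment}, asserting that the moments of the principal $k\times k$ block submatrix $X_k$ agree with those of $X$ precisely on the window $[\varkappa_k-k+1,\varkappa_k+k]$) together with the closed formula for the constant term $F_0$ of the annihilating polynomial as a signed ordered product of the ${\mathbf d}$'s (Lemma \ref{block-det}). Neither ingredient appears in your plan, and without them there is no route from ``corner quasideterminant'' to ``invertible noncommutative monomial in the ${\mathbf d}_i$'' that can be solved recursively for ${\mathbf d}_k$.

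For the off-diagonal parameters the gap is similar. In the noncommutative setting one cannot isolate ${\mathbf c}_i$ from an entry of the multiplication matrix by division; the paper instead pairs the polynomials ${\bf p}_i$ of Proposition \ref{polycorr} (which satisfy ${\bf p}_i(X)e_1=e_i$) against the monomials $e_1^T X^{\zeta_i^+}$, using the flag Lemma \ref{flaglemma} to identify the pairing with $\gamma_i^+\gamma_i^-$ and the evaluation identities of Lemma \ref{rowrepeat} to identify it with a corner quasideterminant of $\H^{(\varkappa_i)}_{i}$; the explicit two-sided monomial structure of $\gamma_i^\pm$ in \eqref{gamma_+}--\eqref{gamma_-}, recorded in Remark \ref{monom_gamma}, is then what permits the recursive extraction of ${\mathbf c}_i$. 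Your plan stops at ``assembling the matrix of multiplication by $\lambda$,'' which — even if carried out — lands on the matrix entries $x_{ij}$ of $X$ (sums of products of the parameters, cf.\ the explicit form of $X$ in Example \ref{runex}) rather than on the factorization parameters themselves, and the passage from one to the other is itself nontrivial noncommutatively. So the proposal is a correct orientation but not yet a proof: the missing idea is a concrete mechanism tying the corner quasideterminants to explicit ordered monomials in ${\mathbf c}_i,{\mathbf d}_i$ that can be inverted one index at a time.
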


\subsection{} The following short-hand notations will be convenient for us below: for any integers 
$r < s$ introduce block column vectors
$h^{[r,s]}=\mbox{col}[h_{r},h_{r+1}, \ldots, h_{s}]$
and a block row vectors $h_{[r,s]}=[h_{r},h_{r+1}, \ldots, h_{s}]$. 
For example, we can partition ${\H}_{i+1}^{(l)}$ as
\begin{equation}\label{parts}
\begin{aligned}
{\H}_{i+1}^{(l)} &= \left [\begin{array} {cc} {\H}_{i}^{(l-1)} & h^{[l,l+i-1]}\\ h_{[l,l+i-1]}  & h_{l+i} \end{array}\right ] =
\left [\begin{array} {cc} h_{l-i} & h_{[l-i+1,l]}\\ h^{[l-i+1,l]}  & {\H}_{i}^{(l+1)} \end{array}\right ] \\
&=
 \left [\begin{array} {cc}  h^{[l-i,l-1]} & {\H}_{i}^{(l)} \\ h_{l} & h_{[l+1,l+i]}   \end{array}\right ] =
\left [\begin{array} {cc} h_{[l-i,l-1]} & h_{l} \\ {\H}_{i}^{(l)}  & h^{[l+1,l+i]}  \end{array}\right ] \ .
\end{aligned}
\end{equation}
Using \eqref{Schur}, \eqref{parts} and  Lemma \ref{Ginv}, we can express "corner" block entries of the inverse of ${\H}_{i+1}^{(l)}$ using quasideterminants:
\begin{equation}\label{quasi_inv}
\begin{aligned}
({\H}_{i+1}^{(l)})^{-1}_{11}& =  \left (({\H}_{i+1}^{(l)})_{11}^\square\right)^{-1}\!=\! \left (h_{l-i} -
 h_{[l-i+1,l]} ({\H}_{i}^{(l+1)})^{-1} h^{[l-i+1,l]}   \right
 )^{-1} ,
 \\
({\H}_{i+1}^{(l)})^{-1}_{1,i+1}& =  \left (({\H}_{i+1}^{(l)})_{1,i+1}^\square\right)^{-1}\!=\! \left (h_{l} -
 h_{[l-i,l-1] }({\H}_{i}^{(l)})^{-1} h^{[l+1,l+i]}   \right
 )^{-1} ,
 \\
({\H}_{i+1}^{(l)})^{-1}_{i+1,1}& =  \left (({\H}_{i+1}^{(l)})_{i+1,1}^\square\right)^{-1}\!=\! \left (h_{l} -
 h_{[l+1,l+i]} ({\H}_{i}^{(l)})^{-1} h^{[l-i,l-1]}    \right
 )^{-1} ,
 \\
({\H}_{i+1}^{(l)})^{-1}_{i+1,i+1}& =  \left (({\H}_{i+1}^{(l)})_{i+1,i+1}^\square\right)^{-1}\!=\! \left (h_{l+i} -
 h_{[l, l+i-1]} ({\H}_{i}^{(l-1)})^{-1} h^{[l,l+i-1]}   \right )^{-1} .
\end{aligned}
\end{equation}

Note that in the scalar case all expressions above are ratios of Hankel determinants. This is precisely the reason why Theorem \ref{invthm} serves as an noncommutative
analogue of  Theorem \ref{invthmscalar}. Although we are not going to use them below, we should mention the following identities that  were proved in \cite{GeKo}.

\begin{prop} \label{lemT}
For $k\geq 1$

\begin{align}
& (\H_{k+1}^{(l)})_{k+1,k+1}^{-1}  =-(\H_{k+1}^{(l)})_{k+1,1}^{-1}\left ((\H_{k}^{(l-1)} )_{k,1}^{-1}\right )^{-1}
(\H_{k}^{(l)})_{k,k}^{-1}\, , \lab{Hi1}
\\
& (\H_{k+1}^{(l)})_{11}^{-1}  =-(\H_{k}^{(l)})_{11}^{-1}\left ((\H_{k}^{(l+1)})_{k1}^{-1}\right )^{-1}
(\H_{k+1}^{(l)})_{k+1,1}^{-1}\, , \lab{H1i}
\\
& (\H_{k+1}^{(l)})_{k+1,k+1}^{-1} \big ( (\H_{k+1}^{(l)})_{1,k+1}^{-1} \big )^{-1}  =
- (\H_{k}^{(l)})_{k k}^{-1}\big ( (\H_{k}^{(l-1)})_{1 k}^{-1} \big )^{-1} \, . \lab{H1more}
\end{align}

\end{prop}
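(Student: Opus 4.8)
The plan is to reduce each of \re{Hi1}, \re{H1i} and \re{H1more} to a single ``core'' identity expressing a boundary row or column of one $k\times k$ Hankel block through the inverse of an adjacent one, and then to prove that core identity from the Hankel shift structure. First I record the structural input. Put $H=\H_{k+1}^{(l)}$. By \eqref{parts}, the four overlapping $k\times k$ corner submatrices of $H$ are $\H_k^{(l-1)}$ (top-left), $\H_k^{(l+1)}$ (bottom-right) and $\H_k^{(l)}$ (both off-diagonal corners); in particular $h_{[l,l+k-1]}=e_k^T\H_k^{(l)}$ and $h^{[l,l+k-1]}=\H_k^{(l)}e_k$ are the last block row and column of $\H_k^{(l)}$, while $h^{[l-k+1,l]}=\H_k^{(l)}e_1$ is its first block column. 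Each $\H_i^{(l)}$ is block-symmetric, so $(\H_i^{(l)})^{-1}_{pq}=\big((\H_i^{(l)})^{\square}_{pq}\big)^{-1}$, exactly as used in \eqref{quasi_inv}.

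For \re{Hi1} and \re{H1more} I would apply Lemma \ref{Ginv}, formula \re{inv1}, to the partition of $H$ whose top-left block is $\H_k^{(l-1)}$ (the first form in \eqref{parts}). This produces the whole last block row and column of $H^{-1}$: writing $w=h_{[l,l+k-1]}(\H_k^{(l-1)})^{-1}$ and $\tilde w=(\H_k^{(l-1)})^{-1}h^{[l,l+k-1]}$ with first block entries $w_1,\tilde w_1$, one gets $(H^{-1})_{k+1,1}=-(H^{-1})_{k+1,k+1}w_1$ and $(H^{-1})_{1,k+1}=-\tilde w_1(H^{-1})_{k+1,k+1}$. Substituting these into the right-hand sides of \re{Hi1} and \re{H1more} and cancelling the generically invertible factors $(H^{-1})_{k+1,1}$, $(H^{-1})_{1,k+1}$, identity \re{Hi1} collapses to the first and \re{H1more} to the second of
\[
w_1=(\H_k^{(l)})^{\square}_{kk}\,(\H_k^{(l-1)})^{-1}_{k1}
\qquad\text{and}\qquad
\tilde w_1=(\H_k^{(l-1)})^{-1}_{1k}\,(\H_k^{(l)})^{\square}_{kk},
\]
that is, to $\big[\H_k^{(l)}(\H_k^{(l-1)})^{-1}\big]_{k1}=(\H_k^{(l)})^{\square}_{kk}(\H_k^{(l-1)})^{-1}_{k1}$ and the analogous relation for $(\H_k^{(l-1)})^{-1}\H_k^{(l)}$. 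Identity \re{H1i} is treated identically, but starting from \re{inv3} applied to the partition whose bottom-right block is $\H_k^{(l+1)}$ (the second form in \eqref{parts}); with $v=(\H_k^{(l+1)})^{-1}h^{[l-k+1,l]}$ one finds $(H^{-1})_{k+1,1}=-v_k(H^{-1})_{11}$, and \re{H1i} reduces to $\big[(\H_k^{(l+1)})^{-1}\H_k^{(l)}\big]_{k1}=(\H_k^{(l+1)})^{-1}_{k1}(\H_k^{(l)})^{\square}_{11}$.

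It then remains to prove these core identities, which are quasi-Plücker–type relations for the $(k+1)\times k$ block matrix formed by the first $k$ block columns of $H$, namely $\H_k^{(l-1)}$ bordered below by the row $h_{[l,l+k-1]}$. I would argue by induction on $k$ using the Hankel shift $\H_k^{(l)}=N\,\H_k^{(l-1)}+e_k\,h_{[l,l+k-1]}$, with $N$ the block super-diagonal shift, together with the Schur-complement recursion for $(\H_k^{(l)})^{\square}_{kk}$ supplied by \eqref{quasi_inv}. The mechanism is already visible at $k=2$, where with $a=h_{l-2}$, $b=h_{l-1}$, $c=h_l$ the first identity is equivalent to the vanishing of $a^{-1}-(b^{-1}c-a^{-1}b)(c-ba^{-1}b)^{-1}ba^{-1}$, which follows from the single cancellation $a^{-1}b-b^{-1}c=-b^{-1}(c-ba^{-1}b)$; here $c-ba^{-1}b$ is precisely the Schur complement carried by $(\H_2^{(l-1)})^{-1}_{21}$, and the companion identity closes by the mirror cancellation $ba^{-1}-cb^{-1}=-(c-ba^{-1}b)b^{-1}$.

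The hard part will be carrying out this cancellation for general $k$: one must show that the Schur complement $h_{[l,l+k-1]}(\H_k^{(l-1)})^{-1}h^{[l,l+k-1]}$ produced on the $\H_k^{(l-1)}$ side telescopes against the nested pivots defining $(\H_k^{(l)})^{\square}_{kk}$ on the $\H_k^{(l)}$ side, keeping every noncommutative factor in its correct left-to-right order. The block-symmetry together with the one-step overlap of the Hankel family \eqref{parts} is what forces the two ends to match; alternatively, one may quote the quasi-Plücker coordinate relations of \cite{GGRW} for the tall matrix above, which makes this telescoping a formal consequence.
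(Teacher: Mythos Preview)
The paper does not prove Proposition~\ref{lemT} at all: it simply records the identities and cites \cite{GeKo}, where they were established. So there is no ``paper's own proof'' to compare against here.

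That said, your reduction is sound. Applying \re{inv1} to the partition of $\H_{k+1}^{(l)}$ with top--left block $\H_k^{(l-1)}$ gives exactly $(H^{-1})_{k+1,1}=-(H^{-1})_{k+1,k+1}\,w_1$ and the mirror relation for $(H^{-1})_{1,k+1}$, and the analogous use of \re{inv3} handles \re{H1i}; cancelling the common corner factor reduces each of \re{Hi1}--\re{H1more} to a single identity of the type
\[
\bigl[\H_k^{(l)}(\H_k^{(l-1)})^{-1}\bigr]_{k,1}
=(\H_k^{(l)})^{\square}_{kk}\,(\H_k^{(l-1)})^{-1}_{k,1},
\]
which, as you correctly observe, is precisely a quasi-Pl\"ucker relation for the $(k{+}1)\times k$ block matrix whose rows $1,\dots,k$ form $\H_k^{(l-1)}$ and whose rows $2,\dots,k{+}1$ form $\H_k^{(l)}$. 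Your $k=2$ computation is right (the $h_{l+1}$ term cancels before one reaches the displayed expression in $a,b,c$), and invoking the quasi-Pl\"ucker identities of \cite{GGRW} for that tall matrix legitimately closes the argument.

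Where your proposal is thin is the self-contained inductive route: you describe the intended telescoping but do not carry it out, and the phrase ``the hard part will be\ldots'' is an acknowledgement rather than a proof. If you want to avoid citing \cite{GGRW} (or \cite{GeKo}), you need to actually execute that step. One clean way is to note that, by the Hankel shift, the first $k{-}1$ rows of $\H_k^{(l)}(\H_k^{(l-1)})^{-1}$ are $e_2^T,\dots,e_k^T$, so the matrix is a block companion; the sought $(k,1)$ entry is then the unique nontrivial entry of its first column, and comparing the two Schur-complement descriptions of the last row forces the identity. Either finish that computation or state clearly that you are quoting the quasi-Pl\"ucker relations from \cite{GGRW}; as written, the proposal is an outline rather than a complete proof, but it is a correct outline.
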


Next, we will use relations between moments \eqref{moments} to define a matrix  analogue of the characteristic polynomial for $X$.
\begin{lem}
For a generic $X\in \mathcal{H}$, the Weyl function can be factored as
\be  M(\la)=Q(\la)P^{-1}(\la)
\ , \lab{MPQ}\ee
where $P(\la), Q(\la)$
are monic matrix polynomials with $n\times n$ matrix coefficients of degrees $n, n-1$ resp.  In particular,
$$P(\la)=\la^{n}\one-\la^{n-1} F_{n-1}- \cdots - F_0\ .$$
\label{factorWeyl}
\end{lem}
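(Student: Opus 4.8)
The plan is to construct $P(\lambda)$ as a right matrix ``characteristic polynomial'' governing the action of $X$ on the cyclic block vector $e_1$. First I would record that $h_0 = e_1^T e_1 = \one$, and that for generic $X$ the $nm\times nm$ block matrix $\mathcal{K}=[\,e_1\ \ Xe_1\ \cdots\ X^{n-1}e_1\,]$ is invertible; this holds, for instance, whenever $\det\H_n^{(n-1)}\neq 0$, since by \eqref{hank} $\H_n^{(n-1)}=\big(h_{\alpha+\beta-2}\big)_{\alpha,\beta=1}^n$ factors as $\widetilde{\mathcal{K}}^T\mathcal{K}$ with $\widetilde{\mathcal{K}}=[\,e_1\ \ X^Te_1\ \cdots\ (X^T)^{n-1}e_1\,]$. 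Assuming $\mathcal{K}$ invertible, each column of $X^ne_1$ lies in the column span of $\mathcal{K}$, so there are unique matrices $F_0,\dots,F_{n-1}$ with $X^ne_1=\sum_{t=0}^{n-1}X^te_1\,F_t$; I then set $P(\lambda)=\lambda^n\one-\lambda^{n-1}F_{n-1}-\cdots-F_0$, a monic matrix polynomial of degree $n$.

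Multiplying the identity $X^ne_1=\sum_t X^te_1F_t$ on the left by $e_1^TX^r$ for each $r\ge 0$ yields the scalar-index recurrence $h_{n+r}=\sum_{t=0}^{n-1}h_{t+r}F_t$ among the moments. I would then put $Q(\lambda):=M(\lambda)P(\lambda)$ and expand the product using $M(\lambda)=\sum_{k\ge 0}\lambda^{-k-1}h_k$: the coefficient of $\lambda^{-j}$ for $j\ge 1$ equals $h_{n-1+j}-\sum_{t=0}^{n-1}h_{t+j-1}F_t$, which vanishes by the recurrence with $r=j-1$; no powers $\lambda^n,\lambda^{n+1},\dots$ appear because $M$ begins at $\lambda^{-1}$ while $\deg P=n$; and the coefficient of $\lambda^{n-1}$ is precisely $h_0=\one$. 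Hence $Q$ is a monic matrix polynomial of degree $n-1$. Since $P$ is monic of degree $n$, $\det P(\lambda)$ is a monic scalar polynomial of degree $nm$, in particular not identically zero, so $P(\lambda)^{-1}$ is a well-defined rational matrix function and $M(\lambda)=Q(\lambda)P(\lambda)^{-1}$, which is \re{MPQ}.

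The only substantive point is the genericity input, namely that $(X,e_1)$ is a cyclic pair in the block sense, i.e.\ that $\mathcal{K}$ is invertible: this is exactly what forces the matrix characteristic polynomial to have degree $n$ rather than the degree $nm$ coming from scalar Cayley--Hamilton. For $X$ admitting factorization \eqref{factorI} this can be checked directly from the sparsity pattern of $X$ together with invertibility of the off-diagonal factors $c_i^\pm$, or simply taken as the definition of ``generic''; everything else is a routine manipulation of the Laurent series for $M$. If one wishes, the argument extends to show $\det P(\lambda)=\det(\lambda\one_{nm}-X)$, so that the poles of $M$ are precisely the eigenvalues of $X$, but this refinement is not needed for the statement.
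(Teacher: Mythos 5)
Your argument is correct and is essentially the paper's proof: both hinge on the genericity statement that $X^n e_1$ is a right $gl_m$-combination of $e_1, Xe_1,\dots,X^{n-1}e_1$, which yields the moment recurrence $h_{n+k}=\sum_{j=0}^{n-1}h_{k+j}F_j$, and this recurrence is exactly the statement that $M(\la)P(\la)$ is a monic polynomial of degree $n-1$. Your additional details (the explicit Laurent-coefficient computation for $Q=MP$, and justifying invertibility of the block Krylov matrix via $\det\H_n^{(n-1)}\neq 0$ --- though note the block transpose of this paper is not an anti-homomorphism, so one should write $\H_n^{(n-1)}$ as the product of the block row matrix with rows $e_1^TX^{\alpha-1}$ and the block column matrix $\mathcal{K}$ rather than as $\widetilde{\mathcal{K}}^T\mathcal{K}$) merely flesh out steps the paper asserts as "equivalent" or "generic".
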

\begin{proof}
Equation  \re{MPQ} is equivalent to relations
\be
\lab{momdep}
h_{k+n} = \sum_{j=0}^{n-1} h_{k+j} F_j\quad (k=0,1,\ldots)\, ,
\ee
which, in turn, can be re-written as
\be
\label{CharPolyCoeffs}
\H^{(l+1)}_n = \H^{(l)}_n
\left[ \begin{array}{cccc}
0 & & \cdots &F_0\\
\one & \ddots & & \vdots\\
&\ddots& & \\
&&\one&F_{n-1}
\end{array}\right]
\ee
for all  $l \in \mathbb{Z}$. The latter follows from \eqref{hank} and the fact that, for a generic $X$, block column $X^n e_1$ is a linear combination over $gl_m$ of
block columns $e_1, X e_1,  \ldots, X^{n-1} e_1$: $X^n e_1= e_1 F_0 + \cdots + X^{n-1} e_1 F_{n-1}$.
\end{proof}

\begin{rem}
\label{companion} It is a corollary of the proof above that $X$ is similar to the block companion matrix that appears in the right hand side of \eqref{CharPolyCoeffs}.
This means, in particular that the characteristic polynomial of $X$ coincides with $\det P(\lambda)$.

\end{rem}

If $u, v, \ldots $ are block row or column vectors,  we will denote by $\Span_{gl_m}\{u,v,\ldots \}$
the space of all combinations of $A_u u + A_v v + \cdots $ with $m\times m$ matrix coefficients
$A_u, A_v, \ldots $
 For any $i\in [n]$ define subspaces
 $$
 \L^+_i=\Span_{gl_m} \{e^T_1,\dots,e^T_i\},\quad \L^-_i=\Span_{gl_m}  \{e_1,\ldots, e_i\}.
 $$
Let
 \begin{equation}
{\displaystyle
{\bf {\bf \gamma}}^+_{i} = \begin{cases} \vec{\prod}_{\beta=0}^{j-1}
\left ( {\mathbf d}_{i^+_\beta} {\mathbf c}_{i_\beta^+}
\cdots {\mathbf c}_{i^+_{\beta+1}-1}\right )\, , &\mbox{if }
i=i_j^+\, ,
\\
(-1)^{l^+_\alpha -1} \vec{\prod}_{\beta=0}^{\alpha-1} \left ( {\mathbf c}_{l^+_\beta}\cdots {\mathbf c}_{l^+_{\beta+1}-1}
{\mathbf d}^{-1}_{l^+_\beta+1} \right )\, , & \mbox{if } i=l_\alpha^+ \, ,\end{cases}
}
\label{gamma_+}
\end{equation}
 \begin{equation}
{\displaystyle
{\bf {\bf \gamma}}^-_{i} = \begin{cases}  {\mathbf d}_{i^-_{j-1}} \cdots {\mathbf
d}_{i^-_0}\, ,
&\mbox{if } i=i_j^- \, , \\
(-1)^{l^-_\alpha -1} {\mathbf d}^{-1}_{l^-_\alpha} \cdots  {\mathbf d}^{-1}_{l^-_1}\, ,  & \mbox{if }
i=l_\alpha^- ,\ i < n\, , \end{cases}
}
\label{gamma_-}
\end{equation}
where $i^\pm_j, l^\pm_\alpha$ are defined in \eqref{IL} and ${\bf {\bf
\gamma}}^\pm_1=1$\, .

\begin{rem}
\label{monom_gamma}
Note an expression for ${\bf {\bf \gamma}}_i^+$ has a form ${\bf {\bf \gamma}}_i^+= \mu\ {\mathbf c}_{i-1}$ if $\varepsilon_i^+=0$ or ${\bf {\bf \gamma}}_i^+= \mu\  {\mathbf c}_{i-1} {\mathbf d}_{i}^{-1} $
if $\varepsilon_i^+=1$, where in both cases $\mu$ is a noncommutative monomial in ${\mathbf c}_{\alpha}\ (\alpha < i -1)$ and ${\mathbf d}^{\pm 1}_{\beta}\ (\beta < i )$.
This means, in particular, that all weights ${\mathbf c}_{1}, \ldots, {\mathbf c}_{n-1}$ can be uniquely restored from ${\mathbf d}_{1}, \ldots, {\mathbf d}_{n}$ and
 ${\bf {\bf \gamma}}_1^+, \ldots, {\bf \gamma}_n^+$ as noncommutative monomial expressions.
\end{rem}

 \begin{lem}
\label{flaglemma} For any $i\in [1, n]$ one has
\begin{equation}
 e^T_1 X^{\zeta^+_i}  = {\bf \gamma}^+_{i} e^T_{i}  \mod \L^+_{i-1} \label{flag1}
\end{equation}
and
\begin{equation}
  X^{\zeta^-_i} e_1 =e_i {\bf \gamma}^-_{i}   \mod \L^-_{i-1} .    \label{flag2}
\end{equation}
In particular,
$$
\L^+_i=\Span_{gl_m} \{e^T_1 X^{\zeta^+_1},\dots,e^T_1 X^{\zeta^+_i}\}, \quad
\L^-_i=\Span_{gl_m} \{ X^{\zeta^-_1}e_1,\dots, X^{\zeta^-_i}e_1\}.
$$
\label{lemmaflag}
In addition,
\begin{equation}
X^{\zeta^-_n-n} e_1 = (-1)^{n -1} e_{n} \left ({\mathbf d}^{-1}_{n} {\mathbf d}^{-1}_{l^-_{n-k^--1}}\cdots  {\mathbf d}^{-1}_{l^-_1} \right ) :=   (-1)^{n -1} e_{n} \tilde{\mathbf \gamma}_n   \mod \L^-_{n-1}.
\label{gamma_n_alt}
\end{equation}
\end{lem}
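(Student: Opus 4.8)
The plan is to prove Lemma~\ref{lemmaflag} by tracking how the factorization \eqref{factorI} acts on elementary block vectors, building up the flag statements \eqref{flag1}, \eqref{flag2} and then the extra identity \eqref{gamma_n_alt} as a byproduct. First I would observe that since $X'$ (the representative with lower parameters equal to $\one$) has the same Weyl function as $X$, and since conjugation by $T\in\mathbf{T}$ carries $e_1$ to $e_1$ and each $e_i$ to $e_i$ times a fixed matrix, it suffices to prove the identities with $\mathbf{c}_i,\mathbf{d}_i$ in place of $c_i^\pm, d_i$; equivalently one works directly with the factorization \eqref{factorI} and lets the $T$-conjugation at the end repackage $c_i^+c_i^-$ into $\mathbf{c}_i$. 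Then I would prove \eqref{flag1} by induction on $i$ along the increasing sequence $\zeta^+_i$: recall from \eqref{nunu} and Lemma~\ref{allcomb} that $\zeta^+_i$ either jumps to $+1$ relative to the previous "peak" value (when $i=i^+_j$ is a breakpoint of $I^+$) or decreases by one (when $i$ is not a breakpoint). In the first case, multiplying $e_1^T X^{\zeta^+_{i}}$ amounts to hitting the previous vector by a block of the factorization that, reading off the network $N_{u,v}$, carries $e^T_{i^+_{j-1}}$ through a run of upper elementary factors $E^+_{i^+_{j-1}}(c^+_{i^+_{j-1}})\cdots E^+_{i^+_j-1}(c^+_{i^+_j-1})$ together with the diagonal $D$, producing exactly the product $\mathbf{d}_{i^+_\beta}\mathbf{c}_{i^+_\beta}\cdots\mathbf{c}_{i^+_{\beta+1}-1}$ appearing in \eqref{gamma_+}, modulo lower-indexed vectors. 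In the second case, applying $X^{-1}$ (whose factorization \eqref{factorinv} I would read off $\bar N_{u,v}$) shifts $e^T_i$ to $e^T_{i-1}$ up to a $\mathbf{c}_{i-1}\mathbf{d}_i^{-1}$ or $\mathbf{c}_{i-1}$ factor and lower-order terms, matching the second branch of \eqref{gamma_+} and the description in Remark~\ref{monom_gamma}. The proof of \eqref{flag2} is the mirror image, using the rows of \eqref{factorI} acting on column vectors on the right.

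For the span statements, once \eqref{flag1} holds for all $i$ the matrix whose block rows are $e_1^T X^{\zeta^+_1},\dots,e_1^T X^{\zeta^+_i}$ is block upper triangular in the $e_j^T$ basis with invertible diagonal blocks $\gamma^+_j$ (invertibility is clear since each $\gamma^+_j$ is a product of the invertible matrices $\mathbf{d}_\beta^{\pm1}, \mathbf{c}_\beta$), hence its block row span is exactly $\mathcal{L}^+_i$; the same argument with columns gives the statement for $\mathcal{L}^-_i$. Finally, for \eqref{gamma_n_alt} I would note $\zeta^-_n - n$; by Lemma~\ref{allcomb}(ii), since $n=i^-_{k^-}$ is a breakpoint, $\zeta^-_n=k^-$, so $\zeta^-_n-n = k^- - n = -(n-k^-)$, which by part (iii) and the definition of $L^-$ equals $-\sum_{\beta=1}^n\varepsilon^-_\beta$ shifted appropriately; in fact $X^{-(n-k^-)}e_1$ is obtained by applying the inverse factorization repeatedly, walking $e_1\leadsto e_n$ along the lower part of the network exactly $n-k^-$ times and picking up the sign $(-1)^{n-1}$ and the product of inverse diagonals $\mathbf{d}_n^{-1}\mathbf{d}^{-1}_{l^-_{n-k^--1}}\cdots\mathbf{d}^{-1}_{l^-_1}$. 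This is essentially \eqref{flag2} pushed one step past $i=n$, using that the $L^-$ branch of \eqref{gamma_-} is only stated for $i<n$, so \eqref{gamma_n_alt} supplies the missing endpoint case.

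The main obstacle I anticipate is the bookkeeping in the inductive step: one must verify that the "lower-order" error terms (elements of $\mathcal{L}^+_{i-1}$ or $\mathcal{L}^-_{i-1}$) produced at each stage are genuinely absorbed and do not leak into the leading term, which requires a careful reading of which elementary factors $E^\pm_j$ act on which block coordinate at each concatenation step in the networks $N_{u,v}$ and $\bar N_{u,v}$ of Section~\ref{sec:cdbc}. In particular, one needs that when $e_1^T$ has been pushed to a multiple of $e_\ell^T$ modulo $\mathcal{L}^+_{\ell-1}$, the subsequent factors $E^+_j$ with $j<\ell-1$ and the factors $E^-_j$ (which appear to the left of $D$ in \eqref{factorI}) only modify coordinates $\le \ell-1$ when acting on the left on $e_\ell^T$, so they contribute to the error term but not the leading coefficient — this is where the specific order of factors dictated by the Coxeter elements $u,v$, encoded in $I^\pm$ and $L^\pm$, is used in an essential way. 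Everything else reduces to multiplying explicit bidiagonal matrices and matching the resulting products against the case definitions \eqref{gamma_+}--\eqref{gamma_-}.
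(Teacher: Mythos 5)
Your proposal is correct and follows essentially the same route as the paper, which itself only sketches the argument by appeal to the network combinatorics of \cite{GSV}: concatenating copies of the networks for $X^{\pm1}$, observing that for $k=\zeta_i^\pm$ the highest reachable level is $i$ via a unique path whose weight is ${\bf \gamma}_i^\pm$, and reading off \eqref{gamma_n_alt} as the endpoint case $i=n$ of the $L^-$ branch. Your two-chain induction (positive powers along the breakpoints of $I^+$, negative powers along $L^+$, and the mirror statement for columns) is exactly the detailed form of that argument, and your identification of the triangularity giving the span statement and of $\zeta_n^--n=-(n-k^-)$ for the last identity are both as in the paper.
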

 \begin{proof} The proof given in \cite{GSV} relies only on the combinatorics of networks associated with $X, X^{-1}$ and thus can be applied
 to the noncommutative case as well. The idea is that to analyze $X^k e_1$ (resp. $e_1^T X^k$) for some $k\in \mathbb{Z}$ , one can consider a network obtained by concatenation of $|k|$ copies of the network that corresponds to $X^{\sign k}$ and find what is the highest horizontal level that can be reached
by a path starting at the lowest level on the right (resp. on the left). The upshot is that if $k=\zeta_i^-$ (resp. $k=\zeta_i^+$) then the level is given by $i$ and the corresponding path is unique. The weight of this path is equal to ${\bf \gamma}_i^-$ (resp. ${\bf \gamma}_i^+$).
  \end{proof}

\begin{example}{\rm  We illustrate (\ref{flag1}) using Example~\ref{runex} and Fig.~\ref{factorex}. If $j>0$ then to find $i$
such that $e_1^T X^j = {\bf \gamma}^+_i e^T_i \mod \L^+_{i-1}$ it is enough to find the highest sink
that can be reached by a path starting from the source 1 in the network obtained by concatenation of
$j$ copies of $N_{u,v}$. Thus, we conclude from Fig.~\ref{factorex}, that
\begin{align*}
e_1^T X &= d_1 c_1^+ c_2^+ e^T_3 \mod \L^+_{2},\quad e_1^T X^2 = d_1 c_1^+ c_2^+ d_3 c_3^+ e^T_4 \mod \L^+_{3},\\
e_1^T X^3 &= d_1 c_1^+ c_2^+ d_3 c_3^+ d_4 c_4^+ e^T_5 \mod \L^+_{4}.
\end{align*}
Similarly, using the network  $\bar N_{u^{-1},v^{-1}}$ shown in Fig.~\ref{factorinvex}, one observes
that $e_1^T X^{-1} = -c_1^+ d_2^{-1}  e^T_2 \mod \L^+_{1}$. These relations are in agreement
with (\ref{flag1}).
}
\end{example}

\begin{lem}
The matrix $F_0$ in \re{momdep} has an expression
$$
F_0=(-1)^{n-1}  {\mathbf d}_{l^-_1} \cdots  {\mathbf d}_{l^-_{n-k^--1}} {\mathbf d}_{n} {\mathbf d}_{i^-_{k^--1}} \cdots  {\mathbf d}_{i^-_{1}} {\mathbf d}_{1}\ .
$$
\label{block-det}
\end{lem}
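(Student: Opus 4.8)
The plan is to recover $F_0$ from the single block-column relation $X^n e_1=\sum_{j=0}^{n-1}X^j e_1 F_j$ established in the proof of Lemma~\ref{factorWeyl}, by multiplying it on the left by the power $X^{\zeta^-_n-n}$. The point of this particular power is that it shifts the exponents so that every term on the right except the one carrying $F_0$ lands in the subspace $\L^-_{n-1}$; one can then read off $F_0$ by comparing the surviving $e_n$-component with the flag identities of Lemma~\ref{flaglemma}.

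The combinatorial input I need is the following. Since $n=i^-_{k^-}$, Lemma~\ref{allcomb}\,(ii) gives $\zeta^-_n=k^-$; and, because $i^-_{k^--1}<n$, one has $k^-_{n-1}=k^--1$, so combining the definition~\eqref{Mi} of $M^-_{n-1}$ with Lemma~\ref{allcomb}\,(iv) yields
$$
\{\zeta^-_1,\dots,\zeta^-_{n-1}\}=M^-_{n-1}=[\,k^--n+1,\ k^--1\,]\ .
$$
Together with the identity $\L^-_i=\Span_{gl_m}\{X^{\zeta^-_1}e_1,\dots,X^{\zeta^-_i}e_1\}$ from Lemma~\ref{flaglemma}, this shows that $X^{p}e_1\in\L^-_{n-1}$ for every integer $p$ with $k^--n+1\le p\le k^--1$.

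Now I multiply the relation $X^n e_1=\sum_{j=0}^{n-1}X^j e_1 F_j$ on the left by $X^{\zeta^-_n-n}=X^{k^--n}$, obtaining
$$
X^{\zeta^-_n}e_1=X^{\zeta^-_n-n}e_1\,F_0+\sum_{j=1}^{n-1}X^{\,k^--n+j}e_1\,F_j\ ;
$$
the exponents $k^--n+1,\dots,k^--1$ occurring in the sum all lie in the range just described, so the sum belongs to $\L^-_{n-1}$, whence $X^{\zeta^-_n}e_1= X^{\zeta^-_n-n}e_1\,F_0 \mod \L^-_{n-1}$. By Lemma~\ref{flaglemma}, relation~\eqref{flag2} with $i=n$ gives $X^{\zeta^-_n}e_1=e_n{\bf \gamma}^-_n \mod \L^-_{n-1}$, while \eqref{gamma_n_alt} gives $X^{\zeta^-_n-n}e_1=(-1)^{n-1}e_n\tilde{\mathbf \gamma}_n \mod \L^-_{n-1}$. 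Comparing the coefficients of $e_n$ modulo $\L^-_{n-1}$ now forces ${\bf \gamma}^-_n=(-1)^{n-1}\tilde{\mathbf \gamma}_n F_0$, i.e. $F_0=(-1)^{n-1}\tilde{\mathbf \gamma}_n^{-1}{\bf \gamma}^-_n$. Substituting ${\bf \gamma}^-_n={\mathbf d}_{i^-_{k^--1}}\cdots{\mathbf d}_{i^-_1}{\mathbf d}_1$ (from \eqref{gamma_-}, using $i^-_0=1$) and $\tilde{\mathbf \gamma}_n^{-1}={\mathbf d}_{l^-_1}\cdots{\mathbf d}_{l^-_{n-k^--1}}{\mathbf d}_n$ produces exactly the asserted formula for $F_0$.

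The one genuinely delicate point is the bookkeeping in the displayed multiplication step: one must verify that after applying $X^{\zeta^-_n-n}$ the set of exponents is precisely $\{k^--n\}\cup\{\zeta^-_1,\dots,\zeta^-_{n-1}\}$, so that exactly one summand escapes $\L^-_{n-1}$. Everything else is a direct appeal to the already-established identities of Lemma~\ref{flaglemma}. As a consistency check, taking determinants gives $\det F_0=\prod_{i}\det{\mathbf d}_i=\det X$, in agreement with Remark~\ref{companion}.
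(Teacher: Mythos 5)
Your argument is correct and follows the paper's own proof essentially verbatim: you multiply the moment recursion $X^n e_1=\sum_{j=0}^{n-1}X^j e_1 F_j$ on the left by $X^{\zeta^-_n-n}$, use $\{\zeta^-_1,\dots,\zeta^-_{n-1}\}=M^-_{n-1}=[\zeta^-_n-n+1,\zeta^-_n-1]$ to place every term except the one carrying $F_0$ in $\L^-_{n-1}$, and then read off $F_0=(-1)^{n-1}\tilde{\mathbf \gamma}_n^{-1}{\bf \gamma}^-_n$ from Lemma~\ref{flaglemma}, merely spelling out the combinatorial bookkeeping and the final substitution more explicitly than the paper does. (The only blemish is your closing consistency check: from the formula one gets $\det F_0=(-1)^{(n-1)m}\prod_i\det{\mathbf d}_i$, so the asserted equality $\det F_0=\det X$ holds only up to that sign when $(n-1)m$ is odd; this aside plays no role in the proof.)
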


\begin{proof} Multiplying the equation $X^n e_1= e_1 F_0 + \cdots + X^{n-1} e_1 F_{n-1}$ on the left by $X^{\zeta^-_n-n}$ we obtain
$$X^{\zeta^-_n} e_1= X^{\zeta^-_n-n}e_1 F_0 + \cdots + X^{\zeta^-_n-1} e_1 F_{n-1}\ .$$
From definitions \eqref{eps}--\eqref{kappa} and Lemma \ref{allcomb} we conclude
that $M_{n-1}^- = \{\zeta^-_\alpha\ : \ \alpha=1,\ldots,n-1 \} = \left [ \zeta^-_n -n +1, \zeta^-_n-1\right ]$. By Lemma \ref{flaglemma}, this means that in the equality above
only the left hand side and the first term in the right hand side have non-zero last block entries, equal to $ {\bf \gamma}^-_n$ and $(-1)^{n -1} \tilde{\bf \gamma}^-_n F_0$, resp.
Therefore, $F_0= (-1)^{n -1} \left(\tilde{\bf \gamma}^-_n\right )^{-1} {\bf \gamma}^-_n$, which proves the claim.
\end{proof}

\begin{lem} Let $k\in [1,n-1]$ and $X_i$ be the $k\times k$ block submatrix of $X\in G^{u,v}$ obtained by deleting $n-k$ last block rows and columns.
Then
\begin{equation}\label{submomrange}
h_\alpha (X_k)= h_\alpha(X)
 \end{equation}
for $\alpha\in [\varkappa_k  - k+1, \varkappa_k  + k]$.
\label{submoment}
\end{lem}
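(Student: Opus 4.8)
The statement to prove is Lemma~\ref{submoment}: if $X_k$ is the principal $k\times k$ block submatrix of $X\in G^{u,v}$, then $h_\alpha(X_k)=h_\alpha(X)$ for all $\alpha\in[\varkappa_k-k+1,\varkappa_k+k]$. The plan is to interpret both sides as sums of path weights in the networks $N_{u,v}$ and $\bar N_{u,v}$ and to show that for the relevant range of exponents, every path contributing to $h_\alpha(X)$ in fact stays within the ``first $k$ levels'' of the concatenated network, so it already contributes to $h_\alpha(X_k)$.

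First I would recall that $h_\alpha=e_1^T X^\alpha e_1$, and that for $\alpha\ge 0$ this is computed by concatenating $\alpha$ copies of $N_{u,v}$ and summing the weights of all directed paths from the bottom-left source to the bottom-left sink; for $\alpha<0$ one uses $|\alpha|$ copies of $\bar N_{u^{-1},v^{-1}}$ instead. Deleting the last $n-k$ block rows and columns of $X$ amounts, on the network side, to deleting the top $n-k$ horizontal levels (sources, sinks, and all vertices and edges on those levels); the resulting network computes $X_k$ exactly, because an $(i,j)$-block of $X_k$ with $i,j\le k$ is the sum over paths $i\rightsquigarrow j$, and such a path that used a deleted level would have to pass through a vertex at level $>k$. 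So the lemma reduces to the claim: for $\alpha$ in the stated range, no path from level $1$ back to level $1$ in the relevant concatenated network visits a level exceeding $k$.

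Next I would make this level bound precise using Lemma~\ref{lemmaflag}. The key observation is that the highest level reachable by a path of ``length'' $\zeta_i^+$ starting from level $1$ on the left (reading off $e_1^T X^{\zeta_i^+}$) is exactly $i$, and by monotonicity of the reachable level in the number of concatenated copies one gets that $e_1^T X^\alpha\in\L_k^+$ as long as $\alpha\le\zeta_{k}^+$ in the positive direction — and a symmetric statement for $X^\alpha e_1\in\L_k^-$ using $\zeta_i^-$. Combining the two (a path realizing $h_\alpha$ for $\alpha=\alpha_1+\alpha_2$ with $\alpha_1$ read from the right and $\alpha_2$ from the left) and using Lemma~\ref{allcomb}(iv), which identifies $M_k^\pm=[\,k_k^\pm-k+1,k_k^\pm\,]$ together with the identity $\varkappa_k=k_k^++k_k^- -k+1$, one sees that the window of exponents $\alpha$ for which the whole path is confined to levels $\le k$ is precisely $[\varkappa_k-k+1,\varkappa_k+k]$. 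For such $\alpha$ the path lies entirely in the truncated network, so $h_\alpha(X)=h_\alpha(X_k)$; conversely every path in the truncated network is a path in the full one with the same weight, giving the reverse inclusion and hence equality.

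The main obstacle I anticipate is the bookkeeping needed to turn ``the highest reachable level as a function of path length'' into the exact two-sided bound $[\varkappa_k-k+1,\varkappa_k+k]$: one must carefully track, in both the $N_{u,v}$ and the $\bar N_{u^{-1},v^{-1}}$ networks, how the reachable level increments as copies are concatenated (it jumps by $1$ at the exponents in $M_k^\pm$ and is otherwise constant), and then check that the positive-direction window $[0,\zeta_k^+]$ and the negative-direction window $[\zeta_k^-,0]$ glue — via Lemma~\ref{allcomb}(iii),(iv) and the definition~\eqref{kappa} of $\varkappa_i$ — to give exactly the claimed interval, including that the endpoints $\varkappa_k-k+1$ and $\varkappa_k+k$ are still ``safe.'' Since the combinatorial facts about these networks are exactly those used in \cite{GSV} and in Lemma~\ref{lemmaflag}, the argument is a matter of assembling them rather than proving anything genuinely new; I would cite \cite{GSV} for the network combinatorics and carry out only the index arithmetic here.
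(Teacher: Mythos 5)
Your proposal follows essentially the same route as the paper: the paper's proof simply observes that the argument of \cite{GSV} is purely about the combinatorics of the networks $N_{u,v}$ and $\bar N_{u^{-1},v^{-1}}$ (truncate the network at the $k$th horizontal level, and bound the highest level reachable by a path in the $\alpha$-fold concatenation), and hence carries over verbatim to noncommutative weights. Your sketch reconstructs that argument correctly; the only point needing care is the endpoint bookkeeping — the confinement range comes from the minimal number of copies needed for a path to climb above level $k$ and return, namely $(k^+_k+1)+(k^-_k+1)$ on the positive side, rather than from naively adding the two one-sided windows, which would lose the endpoint $\alpha=\varkappa_k+k$.
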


\begin{proof} The proof in the scalar case was given in \cite{GSV}. It depends only on combinatorics of networks $N_{u,v}$ and $\bar N_{u^{-1}, v^{-1}}$ and
thus translates to the non-Abelian case without any changes.
\end{proof}





\subsection{}
The solution of the inverse problem relies
on properties of matrix polynomials of the form
\ba
\label{polyhank}
{\P}^{(l)}_i(\lambda)&=\left [\begin{array}{cccc}
h_{l-i+1} &h_{l-i+2} &\cdots & h_{l+1}
\\ \cdots &\cdots &\cdots &\cdots
\\
h_l & h_{l+1} & \cdots &h_{l+i}
\\ \one & \lambda\one
&\cdots & \lambda^{i}\one
\end{array}\right ]_{i+1,i+1}^\square \\
\nonumber
&= \lambda^i -
 \left [ \one \ \lambda\one  \ \cdots \lambda^{i-1}\one \right ] \left ( \H^{(l)}_i\right )^{-1}
h^{[l+1,l+i]}.
\ea

\begin{lem}
Expand $\P^{(l)}_i(\lambda)$ as $\P^{(l)}_i(\lambda)=\sum_{\alpha=0}^{i} \lambda^{i} \pi_\alpha$.
Then
\begin{equation}\label{rowrepeat}
\begin{aligned}
 e_1^T X^\beta \P^{(l)}_i(X) e_1=&\sum_{\alpha=0}^{i} h_{\alpha+\beta}\pi_{\alpha}=0\ \quad \mbox{for}
 \quad \beta \in [l-i, l-1],
 \\
e_1^T X^{l+1} \P^{(l)}_i(X) e_1=&
\sum_{\alpha=0}^{i} h_{\alpha+l+1}\pi_{\alpha}
=\left ( \H^{(l+1)}_{i+1}\right )^\square_{i+1,i+1}\ ,
\\
e_1^T X^{l-i}  \P^{(l)}_i(X) e_1=&
\sum_{\alpha=0}^{i} h_{\alpha+l-i}\pi_{\alpha}=
\left ( \H^{(l)}_{i+1}\right )^\square_{1,i+1}\ .
\end{aligned}
\end{equation}
\end{lem}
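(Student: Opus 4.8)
The plan is to unwind the definition \eqref{polyhank} of $\P^{(l)}_i(\lambda)$ and exploit the block-Hankel structure of the matrices $\H^{(l)}_i$ together with the defining property \eqref{moments} of the moments, namely $h_k = e_1^T X^k e_1$, which immediately gives $e_1^T X^\beta \P^{(l)}_i(X) e_1 = \sum_{\alpha=0}^i h_{\alpha+\beta}\pi_\alpha$ for any $\beta$ by the conventions fixed in Section~2 for evaluating matrix Laurent polynomials on block matrices (here there are no inverse powers of $X$ involved as long as $l-i\ge $ the relevant range, but the identities are formal in the moments and hold regardless). So the whole lemma is really a statement about the vector $(\pi_0,\dots,\pi_i)$ of coefficients of $\P^{(l)}_i$, and the second expression in \eqref{polyhank}, $\P^{(l)}_i(\lambda)=\lambda^i - [\one\ \lambda\one\ \cdots\ \lambda^{i-1}\one](\H^{(l)}_i)^{-1} h^{[l+1,l+i]}$, tells us explicitly that $\pi_i = \one$ and the column $\mathrm{col}[\pi_0,\dots,\pi_{i-1}] = -(\H^{(l)}_i)^{-1} h^{[l+1,l+i]}$.

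First I would prove the vanishing relations: for $\beta\in[l-i,l-1]$, the sum $\sum_{\alpha=0}^i h_{\alpha+\beta}\pi_\alpha$ equals $h_{i+\beta} + h_{[\beta,\beta+i-1]}\,\mathrm{col}[\pi_0,\dots,\pi_{i-1}]$, where $h_{[\beta,\beta+i-1]}$ is the block row vector of length $i$. As $\beta$ runs over $[l-i,l-1]$, the row vectors $h_{[\beta,\beta+i-1]}$ are precisely the $i$ block rows of $\H^{(l)}_i = (h_{\alpha+\gamma+l-i-1})_{\alpha,\gamma=1}^i$ — this is just reading off \eqref{hank} — and similarly $h_{i+\beta}$ runs over the entries of the block column $h^{[l+1,l+i]}$. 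Hence the $i$ scalars (blocks) we are computing assemble into the block column $h^{[l+1,l+i]} - \H^{(l)}_i \cdot (\H^{(l)}_i)^{-1} h^{[l+1,l+i]} = 0$. That disposes of the first line of \eqref{rowrepeat}.

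Next, the two Schur-complement identities. For $\beta = l+1$: $\sum_{\alpha=0}^i h_{\alpha+l+1}\pi_\alpha = h_{l+1+i} + h_{[l+1,l+i]}\,\mathrm{col}[\pi_0,\dots,\pi_{i-1}] = h_{l+1+i} - h_{[l+1,l+i]}(\H^{(l)}_i)^{-1}h^{[l+1,l+i]}$. Now I compare with \eqref{quasi_inv}: the last displayed line there, applied with the shift $l\rightsquigarrow l+1$, reads $(\H^{(l+1)}_{i+1})^\square_{i+1,i+1} = h_{l+i+1} - h_{[l+1,l+i]}(\H^{(l)}_i)^{-1} h^{[l+1,l+i]}$, using the partition of $\H^{(l+1)}_{i+1}$ from the first line of \eqref{parts}. (Here one should double-check indices against \eqref{parts}: $\H^{(l+1)}_{i+1}$ has $\H^{(l)}_i$ in its top-left $i\times i$ block, the column $h^{[l+1,l+i]}$ on the right, the row $h_{[l+1,l+i]}$ on the bottom, and corner entry $h_{l+1+i}$ — exactly what we need.) For $\beta = l-i$: $\sum_{\alpha=0}^i h_{\alpha+l-i}\pi_\alpha = h_l + h_{[l-i,l-1]}\,\mathrm{col}[\pi_0,\dots,\pi_{i-1}] = h_l - h_{[l-i,l-1]}(\H^{(l)}_i)^{-1} h^{[l+1,l+i]}$, which matches the second line of \eqref{quasi_inv}, $(\H^{(l)}_{i+1})^\square_{1,i+1} = h_l - h_{[l-i,l-1]}(\H^{(l)}_i)^{-1}h^{[l+1,l+i]}$, using the third partition in \eqref{parts} of $\H^{(l)}_{i+1}$.

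The only real obstacle is bookkeeping: making sure the index ranges in \eqref{hank}, \eqref{parts}, \eqref{polyhank} and \eqref{quasi_inv} are aligned so that the block rows/columns of $\H^{(l)}_i$ and $\H^{(l)}_{i+1}$ (or $\H^{(l+1)}_{i+1}$) line up with the vectors $h_{[\cdot,\cdot]}$, $h^{[\cdot,\cdot]}$ appearing after substituting $X$ for $\lambda$; there is a genuine risk of an off-by-one error in the exponent range or in which partition of \eqref{parts} to invoke. Once the indices are pinned down, each of the three identities is a one-line consequence of the block identity $\H^{(l)}_i (\H^{(l)}_i)^{-1} = \one$ and the definition \eqref{Schur} of the quasideterminant. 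I would write the proof by (a) recording $\mathrm{col}[\pi_0,\dots,\pi_{i-1}]=-(\H^{(l)}_i)^{-1}h^{[l+1,l+i]}$ and $\pi_i=\one$ from \eqref{polyhank}; (b) rewriting each of the three target sums as a corner entry plus a row-times-column term; (c) substituting the formula for the $\pi$'s and recognizing the result via \eqref{parts}, \eqref{Schur}, and \eqref{quasi_inv}.
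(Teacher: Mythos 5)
Your overall route is the same as the paper's: read off $\pi_i=\one$ and $\mathrm{col}[\pi_0,\dots,\pi_{i-1}]=-(\H^{(l)}_i)^{-1}h^{[l+1,l+i]}$ from \eqref{polyhank}, write each sum as a corner entry plus a row-times-column term, and match the two non-vanishing cases against \eqref{quasi_inv}. The paper phrases the vanishing part as ``a quasideterminant with a repeated block row is zero'' (Remark \ref{zeroschur}), which is exactly your $\H^{(l)}_i(\H^{(l)}_i)^{-1}=\one$ computation in different clothing; the $\beta=l+1$ and $\beta=l-i$ cases are handled correctly and just as in the paper.

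However, the off-by-one error you flagged as a risk does occur in your first part. By \eqref{hank}, the $\alpha$-th block row of $\H^{(l)}_i$ is $h_{[l-i+\alpha,\,l-1+\alpha]}$, so the rows $h_{[\beta,\beta+i-1]}$ exhaust the block rows of $\H^{(l)}_i$ as $\beta$ runs over $[l-i+1,l]$, not $[l-i,l-1]$; and only on that shifted range do the corner entries $h_{\beta+i}$ sweep out $h^{[l+1,l+i]}$, so your cancellation $h^{[l+1,l+i]}-\H^{(l)}_i(\H^{(l)}_i)^{-1}h^{[l+1,l+i]}=0$ actually proves the vanishing for $\beta\in[l-i+1,l]$. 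This is not a defect of your mechanism but of the stated range: $\beta=l-i$ cannot lie in the vanishing set, since by the lemma's own third line it evaluates to $(\H^{(l)}_{i+1})^\square_{1,i+1}$, which is generically nonzero (for $i=1$, $l=0$ one gets $h_0-h_{-1}h_0^{-1}h_1$, whereas $\beta=0$ gives $h_1-h_0h_0^{-1}h_1=0$). The same shift sits in the paper's own proof, where ``block row number $\beta+i-l$'' would be row $0$ at $\beta=l-i$. So either correct the range to $[l-i+1,l]$ (and track the shift where the lemma is invoked later), or at least do not assert that $\beta\in[l-i,l-1]$ indexes the rows of $\H^{(l)}_i$ --- as written, that sentence is false and the first identity does not follow from it.
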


\begin{proof} The first equality in all three lines follows from \eqref{polyhank} and \eqref{moments}. It is easy to see that, for $\beta \in [l-i, l-1]$,
$
 \sum_{\alpha=0}^{i} h_{\alpha+\beta}\pi_{\alpha}
 $ is equal to the $(i+1,i+1)$-quasideterminant of the block matrix obtained from $\H^{(l+1)}_{i+1}$
 by replacing the last block row with the block row number $\beta + i -l$. By Remark \ref{zeroschur}, this quasideterminant is equal to zero.
 This prove the first equality in this Lemma. The other two follow from \eqref{polyhank} combined with \eqref{quasi_inv}.
 \end{proof}

 \begin{cor}
 \label{CharPoly}
 Let $\P(\lambda)$ be the matrix polynomial defined in Lemma \ref{factorWeyl}.
 Then, for any $l\in \mathbb{Z}$,
 $$
 \P(\lambda)= \P^{(l)}_n(\lambda)\ .
 $$
 \end{cor}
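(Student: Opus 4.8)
The plan is to identify $\P^{(l)}_n(\lambda)$ with the monic matrix polynomial $P(\lambda)$ of Lemma~\ref{factorWeyl} by showing both are monic of degree $n$ and have the same vector of coefficients, using the relations \eqref{momdep}. First I would note that $\P^{(l)}_n(\lambda)$ is manifestly monic of degree $n$ by the second expression in \eqref{polyhank}. Write $\P^{(l)}_n(\lambda) = \lambda^n \one - \lambda^{n-1}G_{n-1} - \cdots - G_0$; the goal is to verify $G_j = F_j$ for all $j$. Since $\H^{(l)}_n$ is invertible for generic $X$, the coefficients $(G_0,\dots,G_{n-1})$ are characterized as the unique solution of
\begin{equation*}
\H^{(l)}_n \begin{bmatrix} G_0 \\ \vdots \\ G_{n-1}\end{bmatrix} = h^{[l+1,\,l+n]},
\end{equation*}
which is exactly the last block column of the relation \eqref{CharPolyCoeffs} after identifying the right-hand column of the block companion matrix. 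Indeed, reading off the last block column of \eqref{CharPolyCoeffs} gives $h^{[l+1,\,l+n]} = \H^{(l)}_n \,\mathrm{col}[F_0,\dots,F_{n-1}]$, so by uniqueness $G_j = F_j$, hence $\P^{(l)}_n(\lambda) = P(\lambda)$.

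Alternatively, and perhaps more transparently, I would invoke the previous Lemma directly. Applying the first line of \eqref{rowrepeat} with $i = n$ shows that for $\beta \in [l-n, l-1]$ one has $e_1^T X^\beta \P^{(l)}_n(X) e_1 = 0$, i.e. $\sum_{\alpha=0}^n h_{\alpha+\beta}\pi_\alpha = 0$; writing this out and using monicity gives $h_{\beta+n} = \sum_{j=0}^{n-1} h_{\beta+j} F_j$ with $F_j = -\pi_j$ for the coefficients of $\P^{(l)}_n$, which is precisely \eqref{momdep} for the shifted index $k = \beta$ ranging over $[l-n, l-1]$. Since \eqref{momdep} holds for all $k \geq 0$ and, being a consequence of the block companion structure (Remark~\ref{companion}), in fact determines the $F_j$ uniquely from any $n$ consecutive valid indices where the relevant Hankel block is invertible, the coefficients of $\P^{(l)}_n$ must coincide with those of $P$. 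To be careful about the range of validity: \eqref{momdep} is stated for $k = 0,1,\ldots$, but the same argument run on the network for $X^{-1}$ — or simply the observation that \eqref{CharPolyCoeffs} is asserted for all $l\in\mathbb{Z}$ in Lemma~\ref{factorWeyl} — extends it to all $k\in\mathbb{Z}$, covering negative values of $\beta$ that occur when $l \leq 0$.

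The main obstacle is the genericity/invertibility bookkeeping: the identification requires $\H^{(l)}_n$ to be invertible so that the linear system pinning down the coefficients has a unique solution, and one must confirm this holds for the same generic $X$ for which \eqref{factorI} and Lemma~\ref{factorWeyl} apply, uniformly in $l$. This follows because the block companion similarity in Remark~\ref{companion} shows the $h_k$ are moments of a cyclic pair $(X, e_1)$ with $e_1$ a cyclic block vector, so every $\H^{(l)}_n$ is a product of an invertible block Krylov matrix with a power of (the invertible) $X$ and another invertible block Krylov matrix — hence invertible. Once invertibility is in hand, the proof is a one-line uniqueness argument, so I would present it as: note both $P(\lambda)$ and $\P^{(l)}_n(\lambda)$ are monic of degree $n$ with coefficient vectors solving the same invertible linear system \eqref{CharPolyCoeffs}, therefore they agree, and $l$ was arbitrary.
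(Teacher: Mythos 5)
Your proposal is correct and follows essentially the same route as the paper, whose proof is precisely the one-line comparison of \re{momdep} (equivalently, the last block column of \eqref{CharPolyCoeffs}) with the first identity in the lemma on $\P^{(l)}_i$: both pin down the coefficient vector of the monic degree-$n$ polynomial as the unique solution of $\H^{(l)}_n\,\mathrm{col}[F_0,\dots,F_{n-1}]=h^{[l+1,l+n]}$, with invertibility of $\H^{(l)}_n$ part of the standing genericity assumption. Your extra remarks on extending \re{momdep} to negative indices and on why the Hankel blocks are invertible are sound but not needed beyond what the paper already asserts in Lemma \ref{factorWeyl} for all $l\in\mathbb{Z}$.
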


 \begin{proof} The claim follows from comparing \re{momdep} with the first equation in Lemma \ref{rowrepeat}.
 \end{proof}

 \begin{lem}\label{atzero}
$$
 \P^{(l)}_i(0) = - ({\H}_{i}^{(l)})^{-1}_{11} \left ( ({\H}_{i}^{(l+1)})^{-1}_{1i} \right )^{-1}=- \left ( ({\H}_{i}^{(l)})^{\square}_{11}\right )^{-1}   ({\H}_{i}^{(l+1)})^{\square}_{1i}\ .
 $$
 \end{lem}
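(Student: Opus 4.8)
The plan is to read $\P^{(l)}_i(0)$ off the closed form for $\P^{(l)}_i(\lambda)$ in \eqref{polyhank} and then to match it with a product of corner quasideterminants by exploiting the Hankel structure \eqref{hank}. Setting $\lambda=0$ in \eqref{polyhank} kills the $\lambda^i$ term and turns $[\one,\lambda\one,\ldots,\lambda^{i-1}\one]$ into $e_1^T=[\one,0,\ldots,0]$, so that
$$
\P^{(l)}_i(0)=-\,e_1^T\bigl(\H^{(l)}_i\bigr)^{-1}h^{[l+1,l+i]} .
$$
The crucial observation is that $\H^{(l)}_i$ and $\H^{(l+1)}_i$ share almost all of their block columns: by \eqref{hank} the $(\beta+1)$-th block column of $\H^{(l)}_i$ equals the $\beta$-th block column of $\H^{(l+1)}_i$ for $\beta=1,\ldots,i-1$, while the last block column of $\H^{(l+1)}_i$ is precisely $h^{[l+1,l+i]}$. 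Equivalently,
$$
\H^{(l+1)}_i=\H^{(l)}_i\,\Sigma+h^{[l+1,l+i]}e_i^T ,
$$
where $\Sigma$ is the $i\times i$ block matrix carrying $\one$ in the positions $(\alpha+1,\alpha)$, $\alpha=1,\ldots,i-1$, and $0$ elsewhere (so that right multiplication by $\Sigma$ shifts block columns), and $e_i^T=[0\ \cdots\ 0\ \one]$.

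Next I would multiply this identity on the left by $e_1^T(\H^{(l)}_i)^{-1}$. Because the first block row of $\Sigma$ vanishes, $e_1^T\Sigma=0$, and by the displayed formula for $\P^{(l)}_i(0)$ the remaining term collapses to $-\P^{(l)}_i(0)e_i^T$, so that
$$
e_1^T\bigl(\H^{(l)}_i\bigr)^{-1}\H^{(l+1)}_i=-\,\P^{(l)}_i(0)\,e_i^T ,
$$
whence, all the Hankel matrices in sight being invertible for generic $X$,
$$
e_1^T\bigl(\H^{(l)}_i\bigr)^{-1}=-\,\P^{(l)}_i(0)\,e_i^T\bigl(\H^{(l+1)}_i\bigr)^{-1}.
$$
Extracting the first block column of both sides gives $(\H^{(l)}_i)^{-1}_{11}=-\,\P^{(l)}_i(0)\,(\H^{(l+1)}_i)^{-1}_{i1}$ (the $(i,1)$ block of the inverse), and solving for $\P^{(l)}_i(0)$ yields $\P^{(l)}_i(0)=-(\H^{(l)}_i)^{-1}_{11}\bigl((\H^{(l+1)}_i)^{-1}_{i1}\bigr)^{-1}$.

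It then remains to rewrite the two factors by the ($i\times i$-block) version of the quasideterminant identities \eqref{quasi_inv} — valid, by \eqref{Schur} and Lemma~\ref{Ginv}, for any invertible block matrix — namely $(\H^{(l)}_i)^{-1}_{11}=\bigl((\H^{(l)}_i)^{\square}_{11}\bigr)^{-1}$ and $\bigl((\H^{(l+1)}_i)^{-1}_{i1}\bigr)^{-1}=(\H^{(l+1)}_i)^{\square}_{1i}$, which gives $\P^{(l)}_i(0)=-\bigl((\H^{(l)}_i)^{\square}_{11}\bigr)^{-1}(\H^{(l+1)}_i)^{\square}_{1i}$ and, reading these relations backwards, the first form in the statement as well. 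I do not anticipate a genuinely hard step. The one point demanding care is the bookkeeping of block-column indices in the shift relation, together with the order of the two indices in the noncommutative quasideterminant formula $(A^{-1})_{pq}=\bigl(A^{\square}_{qp}\bigr)^{-1}$ — a transposition that is invisible at the diagonal corner $(1,1)$ but must be tracked at the corner $(1,i)$/$(i,1)$.
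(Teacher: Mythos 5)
Your proof is correct, but it follows a genuinely different route from the one in the paper. Both arguments begin from $\P^{(l)}_i(0)=-e_1^T\bigl(\H^{(l)}_i\bigr)^{-1}h^{[l+1,l+i]}$, but the paper then partitions $\H^{(l)}_i$ as in \eqref{parts}, with corner $h_{l-i+1}$ and complementary block $\H^{(l+1)}_{i-1}$, applies \re{inv3} to read off the first block row of $(\H^{(l)}_i)^{-1}$, and contracts it with $h^{[l+1,l+i]}$; the product collapses directly into $-\bigl((\H^{(l)}_i)^{\square}_{11}\bigr)^{-1}\bigl(h_{l+1}-h_{[l-i+2,l]}(\H^{(l+1)}_{i-1})^{-1}h^{[l+2,l+i]}\bigr)$, i.e.\ the quasideterminant form, after which \eqref{quasi_inv} supplies the other expression. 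You instead exploit the column-shift relation $\H^{(l+1)}_i=\H^{(l)}_i\Sigma+h^{[l+1,l+i]}e_i^T$ between consecutive Hankel matrices, which yields $e_1^T(\H^{(l)}_i)^{-1}=-\P^{(l)}_i(0)\,e_i^T(\H^{(l+1)}_i)^{-1}$ and hence the answer as a product of corner blocks of two inverses; Lemma~\ref{Ginv} enters only at the very end, to translate into quasideterminants. Your route is arguably cleaner in that it never writes out a Schur complement explicitly and makes transparent why the two matrices $\H^{(l)}_i$ and $\H^{(l+1)}_i$ both appear. One discrepancy with the printed statement should be recorded: your computation gives $\P^{(l)}_i(0)=-(\H^{(l)}_i)^{-1}_{11}\bigl((\H^{(l+1)}_i)^{-1}_{i1}\bigr)^{-1}$ with the $(i,1)$ block, whereas the lemma displays the $(1,i)$ block. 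Under the standard noncommutative identity $(A^{-1})_{pq}=(A^{\square}_{qp})^{-1}$, which you correctly invoke, these two blocks of $(\H^{(l+1)}_i)^{-1}$ need not coincide, and the version you obtain is the one consistent with the second equality $\P^{(l)}_i(0)=-\bigl((\H^{(l)}_i)^{\square}_{11}\bigr)^{-1}(\H^{(l+1)}_i)^{\square}_{1i}$, which both proofs reach and which is all that is used later; the printed first equality matches only under the index placement of the off-diagonal lines of \eqref{quasi_inv}, which are themselves transposed relative to the standard identity. So this is a flaw of the source, not of your argument, and nothing downstream of Theorem~\ref{invthm} is affected.
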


\begin{proof} By \eqref{polyhank}, $ \P^{(l)}_i(0) = - \left [ \one \ 0  \ \cdots \ 0 \right ] \left ( \H^{(l)}_i\right )^{-1} h^{[l+1,l+i]}$.
Using \re{inv3}, the latter expression can be rewritten as
$
- ({\H}_{i}^{(l)})^{-1}_{11} \left ( h_{l+1} -   h_{[l-i+1,l]} ({\H}_{i-1}^{(l+1)})^{-1} h^{[l+2,l+i]}   \right )
$, which proves the claim when one takes into consideration the second equality in \eqref{quasi_inv}.
 \end{proof}

 \begin{lem}\label{recover_d} For $k \in [1,n]$,
$$
 \P^{(\varkappa_k)}_k(0) = (-1)^{k-1}  \left ( \overrightarrow{\prod}_{1<l^-_\alpha<k}{\mathbf d}_{l^-_\alpha} \right )
 {\mathbf d}_{k} \left (\overleftarrow{\prod}_{1\leq i^-_\beta<k}{\mathbf d}_{i^-_{\beta}} \right )\, .
 $$
 \end{lem}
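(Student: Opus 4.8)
The plan is to combine the two lemmas just proved with the flag description from Lemma~\ref{lemmaflag}. First I would invoke Lemma~\ref{submoment}: since $h_\alpha(X_k)=h_\alpha(X)$ for $\alpha\in[\varkappa_k-k+1,\varkappa_k+k]$, all block Hankel matrices $\H^{(\varkappa_k)}_k$ built from these moments coincide for $X$ and for the truncated matrix $X_k$. Hence by Lemma~\ref{atzero} (or directly from the definition \eqref{polyhank}) the polynomial $\P^{(\varkappa_k)}_k(\lambda)$ depends only on the submatrix $X_k$, and by Corollary~\ref{CharPoly} applied to $X_k$ in place of $X$ (with $n$ replaced by $k$), $\P^{(\varkappa_k)}_k(\lambda)$ is exactly the monic matrix polynomial $P_{X_k}(\lambda)$ attached to $X_k$ in Lemma~\ref{factorWeyl}. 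In particular $\P^{(\varkappa_k)}_k(0)=-\,(F_0)_{X_k}$ where $(F_0)_{X_k}$ is the $F_0$-coefficient of $X_k$.

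Next I would apply Lemma~\ref{block-det} to $X_k$ itself. For that I need the combinatorial data $I^-, L^-$ of $X_k$. The key observation is that truncating $X$ to its first $k$ block rows and columns corresponds, in the network picture, to keeping only the part of $N_{u,v}$ below level $k$; the Coxeter data of $X_k$ is obtained from that of $X$ by restricting $I^+,I^-$ to $[1,k]$ and adjoining $k$ as the new top index. Concretely, with $k^-_k=\max\{j:i^-_j\le k\}$ as in \eqref{k+/-}, the set $I^-$ for $X_k$ is $\{i^-_0<\cdots<i^-_{k^-_k}\le k\}\cup\{k\}$ and, correspondingly, $L^-$ for $X_k$ consists of $1$ together with those $l^-_\alpha$ that are $<k$, and then $k$. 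Substituting this data into the formula of Lemma~\ref{block-det} (with $n$ replaced by $k$) gives
$$
(F_0)_{X_k}=(-1)^{k-1}\Big(\overrightarrow{\prod}_{1<l^-_\alpha<k}{\mathbf d}_{l^-_\alpha}\Big){\mathbf d}_{k}\Big(\overleftarrow{\prod}_{1\le i^-_\beta<k}{\mathbf d}_{i^-_\beta}\Big)\,{\mathbf d}_1^{\,?}
$$
and then $\P^{(\varkappa_k)}_k(0)=-(F_0)_{X_k}$ — here one must be careful with the sign and with whether the innermost $X_k$-diagonal factor ${\mathbf d}_1$ appears; comparing with the statement, the claimed expression has no trailing ${\mathbf d}_1$, so I expect that after the truncation relabeling, the index ``$1$'' of the big matrix plays the role of $i^-_0$, and the product $\overleftarrow{\prod}_{1\le i^-_\beta<k}$ (which includes $\beta=0$, i.e.\ $i^-_0=1$) already absorbs it. Matching the two sign conventions (the $(-1)^{n-1}$ in Lemma~\ref{block-det} becomes $(-1)^{k-1}$, and the overall minus sign from $\P^{(l)}_k(0)=-F_0$) should reproduce exactly $(-1)^{k-1}$ as claimed.

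The main obstacle, as usual in this circle of arguments, is the bookkeeping of the combinatorial data under truncation: verifying that the sets $I^\pm, L^\pm$ and the exponents $\zeta^-, \varkappa$ of $X_k$ are the ``obvious'' restrictions of those of $X$, and in particular that the shift $l=\varkappa_k$ is precisely the value that makes $M^-_{k-1}(X_k)=[\varkappa_k-k+1,\varkappa_k-1]$ so that Lemma~\ref{block-det}'s proof (the argument via Lemma~\ref{lemmaflag} showing only the first term survives) goes through for $X_k$. Once that is checked — it follows from Lemma~\ref{allcomb}(iii),(iv) and the fact that $\varkappa$ restricted to $[1,k]$ is unchanged because it only involves $\varepsilon_\beta$ for $\beta\le k$ — the rest is a direct substitution. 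I would present it as: (1) reduce to $X_k$ via Lemma~\ref{submoment} and Corollary~\ref{CharPoly}; (2) identify the truncated Coxeter data; (3) quote Lemma~\ref{block-det} for $X_k$ and match signs; this yields the displayed formula for $\P^{(\varkappa_k)}_k(0)$.
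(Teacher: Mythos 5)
Your proposal is correct and follows essentially the same route as the paper: reduce to the truncated matrix $X_k$ via Lemma \ref{submoment}, identify $\P^{(\varkappa_k)}_k(\lambda)$ with the monic polynomial of Lemma \ref{factorWeyl} for $X_k$ via Corollary \ref{CharPoly}, and then apply Lemma \ref{block-det} to $X_k$. The paper's proof is in fact terser than yours — it does not spell out the truncated Coxeter data or the sign matching you flag, simply noting that the diagonal weights of $X_k$ are the first $k$ diagonal weights of $X$ and that the claim then follows from Lemma \ref{block-det}.
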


\begin{proof} Consider the matrix $X_k$ as defined in  Lemma \ref{submomrange}. Since
$h_\alpha (X_k)= h_\alpha(X)$
for $\alpha\in [\varkappa_k  - k+1, \varkappa_k  + k]$, $k\times k$ block Hankel matrices
${\H}_{k}^{(\varkappa_k)}$ and  ${\H}_{k}^{(\varkappa_k +1)}$ can be chosen to play the same role for $X_k$ that
$n\times n$ block Hankel matrices
${\H}_{n}^{(l)}$ and  ${\H}_{n}^{(l +1)}$ play in the equation \eqref{CharPolyCoeffs} for $X$. By Corollay \ref{CharPoly} this implies, in turn,
that matrix polynomial $\P^{(\varkappa_k)}_k(\lambda)$ plays the same role for $X_k$ that  $\P^{(l)}_n(\lambda)$ does for $X$. In particular, we can apply
Lemma \ref{block-det} to express $\P^{(\varkappa_k)}_k(0)$ using diagonal weights of $X_k$ which coincide with the first $k$ diagonal weights of $X$.
The claim then follows from Lemma \ref{block-det}.
 \end{proof}

\begin{prop}
 \label{polycorr}
 Define matrix Laurent polynomials
 $$
 {\bf p}_{i}(\lambda)
 = (-1)^{(i-1)\varepsilon_i^-}\lambda^{k^-_i - i + 1} \P^{(\varkappa_{i-1}-\varepsilon_i^-)}_{i-1}(\lambda)
\left ( \P^{(\varkappa_{i-1}-\varepsilon_i^-)}_{i-1}(0)\right )^{-\varepsilon_i^-}({\bf \gamma}_i^-)^{-1}
 ,\quad i\in [1,n].
 $$
 Then
 $$
 {\bf p}_{i}(X)e_1= e_{i},\quad i\in [1,n].
 $$
 \end{prop}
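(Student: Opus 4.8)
The plan is to verify the identity $\mathbf p_i(X)e_1 = e_i$ by computing the right-hand side $X^k \mathbf p_i(X) e_1$ (in the sense of the action defined for matrix Laurent polynomials in Section 2) and identifying $\mathbf p_i(X)e_1$ as the unique block column vector in $\L_i^-$ with prescribed reductions modulo $\L_{i-1}^-$. First I would observe that, by definition, $\mathbf p_i(\lambda) = (-1)^{(i-1)\varepsilon_i^-}\lambda^{k^-_i - i + 1}\P^{(\varkappa_{i-1}-\varepsilon_i^-)}_{i-1}(\lambda)\bigl(\P^{(\varkappa_{i-1}-\varepsilon_i^-)}_{i-1}(0)\bigr)^{-\varepsilon_i^-}(\mathbf\gamma_i^-)^{-1}$ has degree and lowest-degree terms that I need to track carefully: $\P^{(l)}_{i-1}$ is monic of degree $i-1$, so $\lambda^{k^-_i-i+1}\P^{(l)}_{i-1}(\lambda)$ ranges over powers $\lambda^{k^-_i-i+1},\dots,\lambda^{k^-_i}$. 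Using Lemma \ref{allcomb}(iv), these exponents are exactly $M^-_i = \{\zeta^-_\alpha : \alpha = 1,\dots,i\}$. Then, by the spanning statement in Lemma \ref{flaglemma}, $\{X^{\zeta^-_1}e_1,\dots,X^{\zeta^-_i}e_1\}$ is a $gl_m$-basis of $\L_i^-$, so $\mathbf p_i(X)e_1 \in \L_i^-$; it remains to pin down which vector in $\L_i^-$ it is.

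The key computation is to evaluate the leading ($\beta = k^-_i = \zeta^-_i$ in the $\varepsilon_i^-=0$ case, and the appropriate top exponent in the $\varepsilon_i^-=1$ case) contribution. Here I would split into the two cases dictated by $\varepsilon_i^-$. When $\varepsilon_i^- = 0$, the factor $(\P^{(\cdot)}_{i-1}(0))^{-\varepsilon_i^-}$ is $\one$, and $k^-_i - i + 1 = \zeta^-_i - (i-1)$ by Lemma \ref{allcomb}; multiplying the monic top term $\lambda^{i-1}$ of $\P^{(\varkappa_{i-1})}_{i-1}$ by $\lambda^{k^-_i-i+1}$ gives $\lambda^{\zeta^-_i}$, whose action on $e_1$ produces $e_i\mathbf\gamma_i^-$ modulo $\L_{i-1}^-$ by \eqref{flag2}; the trailing factor $(\mathbf\gamma_i^-)^{-1}$ then yields $e_i$ modulo $\L_{i-1}^-$. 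When $\varepsilon_i^- = 1$, the relevant top exponent is $k^-_i$ again but now $\P^{(\varkappa_{i-1}-1)}_{i-1}$ is multiplied on the right by $(\P^{(\varkappa_{i-1}-1)}_{i-1}(0))^{-1}$; using Lemma \ref{recover_d} to identify $\P^{(\varkappa_{i-1}-1)}_{i-1}(0)$ as a signed monomial in the $\mathbf d$'s, and tracking the sign $(-1)^{(i-1)\varepsilon_i^-}$, one again lands on $e_i$ modulo $\L_{i-1}^-$.

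To finish, I would show that $\mathbf p_i(X)e_1$ actually equals $e_i$ on the nose, not merely modulo $\L_{i-1}^-$. For this the decisive input is the orthogonality built into $\P^{(l)}_{i-1}$: by the first line of \eqref{rowrepeat}, $e_1^T X^\beta \P^{(l)}_{i-1}(X)e_1 = 0$ for all $\beta \in [l-(i-1), l-1]$, i.e.\ for $i-1$ consecutive exponents ending at $l-1$. Choosing $l = \varkappa_{i-1} - \varepsilon_i^-$ and combining with Lemma \ref{allcomb}, these vanishing conditions say precisely that $\mathbf p_i(X)e_1$ is annihilated by $e_1^T X^{\zeta^-_1},\dots,e_1^T X^{\zeta^-_{i-1}}$ — equivalently, pairing trivially against a spanning set of $\L^+_{i-1}$ in the bi-orthogonality pairing — which, together with membership in $\L_i^-$ and the leading reduction computed above, forces $\mathbf p_i(X)e_1 = e_i$. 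The main obstacle I anticipate is the bookkeeping in the $\varepsilon_i^-=1$ case: one must match the exponent shift $k^-_i - i + 1$, the index shift $\varkappa_{i-1}-\varepsilon_i^-$ in the Hankel superscript, the normalization $(\P^{(\varkappa_{i-1}-\varepsilon_i^-)}_{i-1}(0))^{-1}$ (evaluated via Lemma \ref{recover_d}), and the $\mathbf\gamma_i^-$ factor (whose two-case definition \eqref{gamma_-} itself depends on whether $i\in I^-$ or $i\in L^-$), verifying that all the signs and noncommutative monomials cancel to leave exactly $e_i$. This is routine but delicate, and is where Lemma \ref{allcomb}(ii)–(iv) and Lemmas \ref{block-det}, \ref{recover_d} do the real work.
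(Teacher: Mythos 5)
Your strategy is sound and is essentially the paper's own argument run in the opposite direction: the paper expands $e_i{\bf \gamma}_i^-$ in the basis $\{X^{\zeta^-_\alpha}e_1\}$ and shows that the expansion coefficients satisfy the same Hankel linear system as the coefficients of $\P^{(\varkappa_{i-1}-\varepsilon_i^-)}_{i-1}$, then invokes genericity of the Hankel matrices; you instead characterize the vector ${\bf p}_{i}(X)e_1$ by membership in $\L^-_i$, its reduction modulo $\L^-_{i-1}$, and orthogonality to $\L^+_{i-1}$, and invoke the (equivalent) nondegeneracy of the triangular pairing. Both versions rest on exactly Lemma \ref{flaglemma} and the first line of \eqref{rowrepeat}, so I would not count this as a genuinely different route.

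There are two concrete slips. First, in the case $\varepsilon_i^-=1$ the new direction $e_i$ is carried by the \emph{lowest} exponent, not the top one: by Lemma \ref{allcomb}, $\zeta_i^-=k_i^--i+1$ when $\varepsilon_i^-=1$, while $\lambda^{k_i^-}$ corresponds to $\zeta_j^-$ for some $j<i$ and hence contributes only to $\L^-_{i-1}$. Your ``the relevant top exponent is $k^-_i$ again'' reads the $e_i$-component off the wrong term; the correct observation is that the coefficient of $\lambda^{k_i^--i+1}$ in $\lambda^{k_i^--i+1}\P^{(l)}_{i-1}(\lambda)\bigl(\P^{(l)}_{i-1}(0)\bigr)^{-1}$ is $\one$, which is precisely why the normalization in this case is by the constant term rather than by the (already monic) leading term. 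Second, Lemma \ref{recover_d} evaluates $\P^{(\varkappa_k)}_k(0)$ and does not apply to $\P^{(\varkappa_{i-1}-\varepsilon_i^-)}_{i-1}(0)$, whose superscript is shifted; fortunately no explicit value is needed, since $\P(0)$ is immediately cancelled by $\P(0)^{-1}$ and only its invertibility (a genericity assumption) is used. Finally, the step you defer as ``routine bookkeeping'' --- that the shifted exponents $\zeta_j^++k_i^--i+1$, $j\in[1,i-1]$, land exactly in the vanishing window of \eqref{rowrepeat} for $l=\varkappa_{i-1}-\varepsilon_i^-$ --- is where the entire combinatorial setup of Section~\ref{sec:cdbc} earns its keep and must actually be carried out, using $k_i^-=k_{i-1}^-+1-\varepsilon_i^-$ and $\varkappa_{i-1}=k_{i-1}^++k_{i-1}^--i+2$; it does check out, but it is the one place where the argument could silently fail and should not be waved at.
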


\begin{proof} Expand $\P^{(\varkappa_{i-1}-\varepsilon_i^-)}_{i-1}(\lambda)$ as
$\P^{(\varkappa_{i-1}-\varepsilon_i^-)}_{i-1}(\lambda) = \sum_{\alpha=0}^{i-1} \lambda^{\alpha} \pi_{\alpha}$. Then Lemma \ref{rowrepeat}
implies that
\be\label{polycorr1}
 \sum_{\alpha=0}^{i-1} h_{\alpha + \beta}\pi_{\alpha}=0
\ee
 for $\beta \in [\varkappa_{i-1} - \varepsilon_{i}^- - i +1, \varkappa_{i-1} - \varepsilon_{i}^- -1]$.

It follows from Lemma \ref{flaglemma} that
 $$
  e_i {\bf \gamma}_i^- = X^{\zeta_i^-} e_1 +  \sum_{\alpha=1}^{i-1} X^{\zeta^-_\alpha} e_1 \zeta_{\alpha}
 $$
 for some $m\times m$ coefficients $\zeta_\alpha$. By Lemma~\ref{allcomb}(iv), this can be re-written as
  \be\label{polycorr2}
 e_i {\bf \gamma}_i^- =  X^{k^-_i - i + 1} \sum_{\alpha=0}^{i-1} X^{\alpha-1} e_1 \tilde\pi_{\alpha},
 \ee
 where either $ \tilde\pi_{i-1}=1$ (if $\varepsilon_i^-=0$), or
 $ \tilde\pi_0=1$ (if $\varepsilon_i^-=1$). Define a matrix polynomial ${\bf p}(\lambda)=\sum_{\alpha=0}^{i-1} \lambda^{\alpha} \tilde\pi_{\alpha}$.
 By Lemma \ref{flaglemma}, block vectors $e_1^T X^\beta$, $\beta\in M^+_{i-1}$, span  $\L^+_{i-1}$ over $gl_m$. Therefore, by Lemma~\ref{allcomb}(iv),
 \be\label{polycorr3}
 0=e_1^T X^\beta  X^{k^-_i - i + 1}  {\bf p}(X) e_1= \sum_{\alpha=0}^{i-1} h_{\alpha+\beta +k^-_i - i+1}\tilde\pi_{\alpha}
 \ee
 for $\beta \in [k_{i-1}^+ - i +2, k_{i-1}^+]$. Using  Lemma \ref{allcomb} (iii), we conclude that $k^-_i= k^-_{i-1} + 1 -\varepsilon^-_{i}$ and that
 in the equation \eqref{polycorr2} $(\beta +k^-_i - i+1)$ ranges through the interval $\beta \in [\varkappa_{i-1} - \varepsilon_{i}^- - i +1, \varkappa_{i-1} - \varepsilon_{i}^- -1]$.

 Comparing with \eqref{polycorr3}, we see that coefficients of polynomials ${\bf p}(\lambda)$ and
 $\P^{(\varkappa_{i-1}-\varepsilon_i^-)}_{i-1}(\lambda)$ satisfy the same system of linear equations. The genericity assumption guarantees that this system has a unique solution if one requires that either the leading or degree zero coefficient of the resulting polynomial is equal to $\one$. Indeed, the unique solvability of the system in this case relies on invertibility of the block Hankel matrices $\H^{(l-1)}_{i-1}$, $\H^{(l)}_{i-1}$ resp., where
 $l= \varkappa_{i-1} - \varepsilon_{i}^-$. This means that ${\bf p}(\lambda)=\P^{(\varkappa_{i-1}-\varepsilon_i^-)}_{i-1}(\lambda)$ if $\varepsilon_i^-=0$ and
 ${\bf p}(\lambda)=\P^{(\varkappa_{i-1}-\varepsilon_i^-)}_{i-1}(\lambda)\pi_0^{-1}$ otherwise. Then it drops out from \eqref{polycorr2} that
 $$
 e_i =  X^{k^-_i - i + 1} \P^{(\varkappa_{i-1}-\varepsilon_i^-)}_{i-1}(X) e_1 \pi_0^{-\varepsilon_i^-}({\bf \gamma}_i^-)^{-1}\ ,
 $$
 which proves the claim.
 \end{proof}

 \subsection{}
Finally, we can complete the proof of Theorem \ref{invthm}.
Lemma \ref{atzero} and Lemma \ref{recover_d} imply the equation
$$
 {\mathbf d}_{k} = (-1)^k       \left ( \overrightarrow{\prod}_{1<l^-_\alpha<k}{\mathbf d}_{l^-_\alpha} \right )^{-1}
 \left ( ({\H}_{k}^{(\varkappa_k)})^{\square}_{11}\right )^{-1}   ({\H}_{k}^{(\varkappa_k+1)})^{\square}_{1k}
 \left (\overleftarrow{\prod}_{1\leq i^-_\beta<k}{\mathbf d}_{i^-_{\beta}} \right )^{-1}\ ,
$$
which allows to recursively restore $d_1, \ldots, d_n$ as noncommutative monomials in terms of corner quasideterminants
of block Hankel matrices ${\H}_{k}^{(\varkappa_k)}, {\H}_{k}^{(\varkappa_k+1)}$.

Next, observe that by Lemma \ref{flaglemma} and Corollary \ref{polycorr},
\begin{equation*}
\begin{split}
 {\bf {\bf \gamma}}_i^+ &= e_1^T X^{\zeta_i^+} {\bf p}_{i} (X) e_1 \\
& = (-1)^{(i-1)\varepsilon_i^-}\left ( e_1^T X^{\zeta_i^+ + k^-_i - i + 1} \P^{(\varkappa_{i-1}-\varepsilon_i^-)}_{i-1}(X) e_1\right )
\left ( \P^{(\varkappa_{i-1}-\varepsilon_i^-)}_{i-1}(0)\right )^{-\varepsilon_i^-}({\bf {\bf \gamma}}_i^-)^{-1} .
\end{split}
\end{equation*}
 Since by (\ref{nunu}), \eqref{epsum}, (\ref{kappa}) and Lemma~\ref{allcomb}(iii),
 $$
 \zeta_i^++k_i^--i+1=\varkappa_i-(i-1)\varepsilon_i^+\ =
 \begin{cases}
  \varkappa_{i-1}-\varepsilon_i^- +1, & \mbox{if} \quad \varepsilon_i^+=
  0\, ,
  \\
 \varkappa_{i-1}-\varepsilon_i^- -i +1, & \mbox{if} \quad \varepsilon_i^+=
 1\, ,
 \end{cases}
 $$
 we can apply the second (if $\varepsilon_i^+= 0$) or the third (if $\varepsilon_i^+= 1$) equality in Lemma \ref{rowrepeat}
 with $i$ replaced with $i-1$ and $l= \varkappa_i -1$ to conclude that
 $$
 e_1^T X^{\zeta_i^+ + k^-_i - i + 1} \P^{(\varkappa_{i-1}-\varepsilon_i^-)}_{i-1}(X) e_1 =
 \begin{cases}
\left ( \H^{(\varkappa_i)}_{i}\right )^\square_{ii}, & \mbox{if} \quad \varepsilon_i^+=
0\, ,
\\
\ &\ \\
\left ( \H^{(\varkappa_i)}_{i}\right )^\square_{1i}, & \mbox{if} \quad \varepsilon_i^+=
1\, .
 \end{cases}
 $$
 Thus
 $$
 {\bf {\bf \gamma}}_i^+ {\bf {\bf \gamma}}_i^- = - (-1)^{(i-1)\varepsilon_i^-}
 \left ( \H^{(\varkappa_i)}_{i}\right )^\square_{i - (i-1)\varepsilon_i^+, i}
 \left ( \left ( ({\H}_{i}^{(\varkappa_{i-1}-\varepsilon_i^-)})^{\square}_{11}\right )^{-1}
 ({\H}_{i}^{(\varkappa_{i-1}-\varepsilon_i^-+1)})^{\square}_{1i}\right )^{-\varepsilon_i^-}\, .
 $$
 This relation, together with Remark \ref{monom_gamma}, completes the proof of Theorem \ref{invthm}.

\section{Non-Abelian Coxeter-Toda lattices}

\subsection{}
We start this section by reviewing basic facts about
{\em Toda flows\/} on $GL_n$. These are commuting Hamiltonian
flows generated by conjugation-invariant functions on $GL_n$ with respect to  the standard Pois\-son--Lie structure.
Toda flows (also known as {\em characteristic Hamiltonian systems\/}) are defined for an arbitrary standard
semi-simple Poisson--Lie group, but we will concentrate on the $GL_n$ case,
where as a maximal algebraically independent family of conjugation-invariant functions  one can choose
$F_k : GL_n \ni X \mapsto \frac{1}{k}
\tr X^ k$, $k=1,\ldots, n-1$. The equation of motion generated by $F_k$ has a
{\em Lax form}
\begin{equation}
\label{Lax_intro}
dX/dt = \left [ X,\  - \frac{1}{2} \left ( \pi_+(X^k) - \pi_-(X^k)\right ) \right ],
\end{equation}
where $\pi_+(A)$ and $\pi_-(A)$ denote strictly upper and lower parts of a matrix $A$.

Any double Bruhat cell  $G^{u,v}$, $u,v \in S_n$, is a regular Poisson submanifold  in $GL_n$ invariant under the
right and left multiplication by elements of the maximal torus (the subgroup of diagonal matrices) $\TE \subset GL_n$.
In particular,  $G^{u,v}$ is invariant under the conjugation by elements of $\TE$. The standard Poisson--Lie structure
is also invariant under the conjugation action of $\TE$ on $GL_n$. This means that Toda
flows defined by (\ref{Lax_intro}) induce commuting Hamiltonian flows on $G^{u,v} / \TE $ where  $\TE$ acts on
$G^{u,v}$ by conjugation.
In the case when $v=u^{-1}=(n\ 1\ 2 \ldots  n-1)$,
$G^{u,v}$ consists of tridiagonal matrices with nonzero off-diagonal entries,  $G^{u,v} / \H $  can be conveniently described
as the set $\Jac$ of {\em Jacobi matrices} of the form
\[
L =  \left( \begin{array}{ccccc}
b_{1} & a_{1} & 0 & \cdots & 0 \\
1 &b_{2}& a_{2} &\cdots&0\\
&\ddots&\ddots&\ddots&\\
&&&b_{n-1}& a_{n-1} \\
0&&&1 &b_{n}
\end{array}\right), \quad  a_1 \cdots a_{n-1} \ne 0, \quad \det L \ne 0.
\]
Lax equations \eqref{Lax_intro} then become the equations of the {\em finite nonperiodic Toda
hierarchy\/}
$$
dL/dt=[L,-\pi_+(L^k)],
$$
the first of which, corresponding to $k=1$, is the celebrated {\it Toda lattice}
\begin{eqnarray*}
da_j/dt&=&a_j(b_{j+1}-b_j), \quad j=1,\ldots , n-1,\\
\nonumber
db_j/dt&=&(a_j-a_{j-1}), \quad j=1,\ldots ,n,
\end{eqnarray*}
 with the boundary conditions $a_0=a_n=0$. Recall that $\det L$ is a Casimir function for the
 standard Poisson--Lie bracket.
 The level sets of the function $\det L$  foliate $\Jac$ into $2(n-1)$-dimensional symplectic
 manifolds, and the Toda hierarchy defines a completely integrable system on every symplectic leaf.
Note that although Toda flows on an arbitrary double Bruhat cell $G^{u,v}$ can be exactly solved via the so-called
 {\em factorization method\/}, in most cases the dimension of symplectic leaves in
 $G^{u,v} / \TE$ exceeds $2(n-1)$, which means that conjugation-invariant functions do not
 form a Poisson commuting family rich enough to ensure Liouville complete integrability.

 An important role in the study of Toda flows played by the
Weyl function
\begin{equation}
m(\lambda)=m(\lambda;X)=((\lambda\one-X)^{-1} e_1,e_1)=\frac{q(\lambda)}{p(\lambda)},
\label{weyl}
\end{equation}
in the scalar case is well-known.
Here $p(\lambda)$ is the characteristic polynomial of $x$ and $q(\lambda)$ is the characteristic polynomial
of the $(n-1)\times(n-1)$ submatrix of $x$ formed by deleting the first row and column.
Differential equations that describe the evolution of $m(\lambda;X)$ induced by Toda flows do not
depend on the initial value $x(0)$ and are easy to solve: though nonlinear,  they
are also induced by {\em linear differential equations with constant coefficients\/} on the space
\begin{equation}
\label{Rat}
\left \{ \tilde m(\lambda) =\frac{\tilde q(\lambda)}{p(\lambda)}
\ : \deg p = \deg \tilde q+1 = n,\  \text{$p, \tilde q$ are coprime,  $p(0) \ne 0$}
\right \}
\end{equation}
 by the map $ \tilde m(\lambda) \mapsto m(\lambda) = -\frac{1}{h_0} \tilde m(-\lambda)$, where
 $H_0=\lim_{\lambda\to \infty}\lambda \tilde m(\lambda)\ne 0$.

Since $m(\lambda;X)$ is invariant under the action of $\TE$ on $G^{u,v}$ by conjugation, we have
 a map from  $G^{u,v} / \TE $ into the space
$$
\W_n=\left \{ m(\lambda) =\frac{q(\lambda)}{p(\lambda)}
\ : \deg\ p =\deg\ q+1 = n,\  \text{$p, q$ monic and coprime,  $p(0) \ne 0$}
\right \}.
$$
 In the tridiagonal  case, this map is sometimes called the {\em Moser map}. Its inverse is computed via solving the classical
moment problem. In \cite{GSV}, we have shown that for any Coxeter double Bruhat cell

(i)  the Toda hierarchy defines a completely integrable system on level sets of the determinant in $G^{u,v} / \TE$, and

(ii) the Moser map $m_{u,v}:G^{u,v} / \TE\to\W_n$ defined in the same way as in the tridiagonal case is invertible.

\noindent Integrable equation induced on $G^{u,v} / \TE $ by Toda flows are called {\em Coxeter--Toda lattices}.

Since Coxeter--Toda flows associated with different choices of $(u,v)$  lead to the same evolution of the Weyl
function, and the corresponding Moser maps are invertible, one can construct transformations between  different
$G^{u,v} / \TE$ that preserve the corresponding Coxeter--Toda flows
and thus serve as {\em generalized B\"acklund--Darboux
transformations\/}  between them.
The main goal of \cite{GSV} was to describe these transformations from the cluster algebra point
of view. We plan to pursue this line of inquiry in the noncommutative situation. A study of non-Abelian
Coxeter--Toda lattices serves as the first step in this direction.

\subsection{Non-Abelian Toda lattice}
We now return to the non-Abelian case.
As we mentioned in Remark \ref{cmvao} (ii), if
$v=u^{-1}=s_{n-1} \cdots s_1$, then $G^{u,v}$ consists of
 block Jacobi matrices.
Using conjugation by block diagonal matrices preserving the matrix Weyl function, we can ensure that
elements of $G^{u,v}/{\mathbf T}$ are represented by block Jacobi matrices with all subdiagonal blocks equal to $\one$. These form
a subset we denote by $Jac$ in the set $Hess$ of {\em block upper Hessenberg matrices}
of the form
\ba X=\left[ \begin{array}{cccccc}
b_1 & a_1 & \Lambda^{(0)}_1&\Lambda_1^{(1)}&\cdots& \Lambda_ 1^{(n-3)}\\
\one & b_2 &a_2& \Lambda^{(0)}_2& \ddots&\vdots\\
&\one&b_3&a_3& \ddots&\Lambda_{n-3}^{(1)}\\
&&\one&b_4 &\ddots &\Lambda^{(0)}_{n-2}\\
&&&\ddots&\ddots &a_{n-1}\\
&&&&\one&b_n
\end{array}\right] \, .\lab{Hess} \ea



Let $\mathfrak{b}_{> 0},\mathfrak{b}_{\leq 0}$ be Lie algebras of block-strictly upper triangular and block lower triangular $N\times N$ matrices respectively.    We can represent any $n\times n$ block marix
$A$ as
$$A=A_{\leq 0}+A_{> 0}$$
using a decompositions
$$gl_{m n}=\mathfrak{b}_{\leq 0}+\mathfrak{b}_{>0}\ .$$

Following the Adler-Kostant-Symes construction \cite{Ko,Sy}, one identifies  $\mathfrak{b}_{\leq 0}^*$, the dual of
$\mathfrak{b}_{\leq 0}$,  with $Hess$ via the trace form $\langle X, Y \rangle := \mathrm{Tr}(XY)$ and endow  $Hess$ with a linear Poisson structure obtained as a pull-back of the Lie-Poisson (Kirillov-Kostant) structure on $\mathfrak{b}_{\leq 0}^{*}$. Then a Poisson bracket of two scalar-valued functions $f_1, f_2$ on $Hess$ is
\be
\{f_1,f_2\}(X)=\mathrm{Tr}(X,[(\nabla f_1(X))_{\leq 0}, (\nabla f_2(X))_{\leq 0}]) \lab{Kost}
\ee
where gradients are computed with respect to the trace form.

Symplectic leaves of this bracket  are orbits of the co-adjoint action of the group $B_-$ of block lower triangular invertible matrices:
\baa
\mathcal{O}_{J+X_0}=\{Ad_n^*(J+X_0),n \in B_-\}=\{J+\pi_{b_+}\left (Ad_n(X_0)\right );\ n \in B_-, b_+\cong \mathfrak{g}_+^{\perp}\},
\eaa
where $J$ the $n\times n$ block matrix with $\one$s on the block sub-diagonal and zeros everywhere else.

The hierachy of nonabelian Kostant-Toda flows on $\mathcal{H}$ is generated by the Hamiltonians
$$H_k(X)=\frac{1}{k+1}\mbox{Tr}(X^{k+1}),\quad k=1, \ldots, n.$$
Each flow has a Lax form
\be \dot{X}=[X,(X^k)_{\leq 0}]\, . \label{lax}
\ee
The first Hamiltonian in the family above does not depend on blocks $\Delta_i^{(j)}$ in
$X$
 \be
 \label{firsttoda}
 H_1=\frac{1}{2}\mathrm{Tr}(X^2)=\mathrm{Tr}\Big(\sum_{j=1}^{n-1}a_j+\frac{1}{2}\sum_{j=1}^{n}b_j^2\Big)\ .
 \ee
On the subspace $Jac$ of $Hess$ defined by vanishing of all $\Lambda_i^{(j)}$, it induces the following evolution equations on blocks $a_j, b_j$:
\be
\label{NAToda}
\dot{a}_j=a_j b_{j+1}-b_{j}a_j, \quad
\dot{b}_j=a_{j} - a_{j-1}\quad (j=1,\ldots ,n,\ a_0=a_n=0)\ .\\
\ee
These are the equations of the \emph{non-Abelian  Toda lattice}. This exactly solvable system was
first introduced by A. Polyakov as a discretization of the principal chiral field equation. In the doubly-infinite case
for a suitable class of initial data it was solved via the inverse scattering method in \cite{BMRL}. In \cite{K},
the solution in theta-functions was found for the periodic non-Abelian Toda lattice. Semi-infinite and finite non-periodic non-Abelian Toda equations were integrated in \cite{G0}.
Another approach, based on a theory of quasideterminants, was applied in \cite{EGR} to integrate both the finite non-Abelian Toda lattice
and its two-dimensional generalization.

 \subsection{Non-Abelian Moser map and complete integrability}

As we mentioned above, In the scalar case, and for $X$ tridiagonal, the map from $ X$ to its Weyl was used by Moser \cite{moser}
to linearize the finite non-periodic lattice. In \cite{FG1,FG2,GSV}, the Moser map was utilized to study
the multi-Hamiltonian structure for Coxeter-Toda lattices and to construct a cluster algebra structure
in a space of rational functions of given degree. In the block tridiagonal case, the matrix Weyl function was used in \cite{G0,G}
to linearize the non-Abelian Toda lattice and establish its complete integrability. Many of the results
of \cite{G} remain valid in the situation we are considering here and are reviewed below.

For a generic $X\in G^{u,v}$, recall the factorization \eqref{factorWeyl} of its the Weyl function: $M(\lambda, X)=Q(\la)P^{-1}(\la)$.

Denote by $\mathcal{P}$ the permutation operator in $\mathbf{C}^m\times\mathbf{C}^m$ :  $\mathcal{P}(x\te y)=y\te x.$
We summarize properties of the non-Abelian Moser map in the following
\begin{prop}{\rm(}\cite{G}{\rm)}. \label{PoissM}
\begin{enumerate}
\item[(i)] The Poisson bracket induced by the pushforward of
the Lie-Poisson structure \re{Kost} under the non-abelian Moser map
satisfies
\begin{equation}\lab{MMbrack}
\begin{aligned}
\{M(\la) \st M(\mu)\}= & -\frac{1}{\la-\mu}\big(M(\la)-M(\mu)\big)\te \big(M(\la)-M(\mu)\big)\mathcal{P}
\\
                              & +\big(M(\la) M(\mu)\te M(\la)-M(\mu)\te
                              M(\mu)M(\la)\big)\mathcal{P}.
\end{aligned}
\end{equation}
 \item[(ii)] Polynomial $P(\la)=\la^{n}\one-\la^{n-1} F_{n-1}- \dots - F_0$ is conserved by the flows (\ref{lax}) and the Poisson brackets between the matrix entries of $P(\la)$ are given by
 \be
~\{P(\la) \st P(\mu)\}=\Big[\frac{\textstyle{1}}{\textstyle{\la-\mu}}\mathcal{P},P(\la)\te P(\mu)\Big]  \ .
\lab{PPtbrack}
\ee
\item[(iii)] If $Z$ is a fixed invertible matrix with distinct eigenvalues then coefficients of
the polynomial $\det( Z P(\la) + z \one_m)$ form a maximal family of Poisson commuting functions.
\item[(iv)] Evolution equations induced on moments $h_k$ by the first flow in (\ref{lax}) can be solved explicitly: for $k=0,\ldots, 2n-2$, $h_k=\t{h}_0^{-1}\t{h}_k$, where  $\t{h}_k$ are entries of a block Hankel matrix
$\t{H}=(\t{h}_{i+j})_{i,j=0}^{n-1}$ that solves a linear equation with constant coefficients
$$
\dot{\t{H}}  = \t{H}
\left[ \begin{array}{cccc}
0 & & \cdots &F_0\\
\one & \ddots & & \vdots\\
&\ddots& & \\
&&\one&F_n
\end{array}\right]
$$
and initial conditions $\t{h}_k(0)={h}_k(0)$. For $k \geq 2n-1$, $h_k$ are determined by \re{momdep}.
 \end{enumerate}
\end{prop}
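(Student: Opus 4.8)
The plan is to follow \cite{G}, where all four assertions are proved for block tridiagonal (block Jacobi) $X$; the point is that each argument uses only the factorization $M(\lambda)=Q(\lambda)P^{-1}(\lambda)$ of Lemma~\ref{factorWeyl} (equivalently, the moment recursion \re{momdep}) and the form \re{Kost} of the bracket, neither of which is sensitive to the choice of Coxeter pair, so the proofs carry over to the present setting. For (i) I would compute the gradients of the scalar entries of $M(\lambda)$ with respect to the trace form: writing $R(\lambda)=(\lambda\one-X)^{-1}$ and letting $\mathbf e_a$ run over the standard basis of $\C^m$, one has $\nabla_X\big(\mathbf e_a^TM(\lambda)\mathbf e_b\big)=R(\lambda)e_1\mathbf e_b\,\mathbf e_a^Te_1^TR(\lambda)$. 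Substituting two such gradients into \re{Kost}, replacing each projection $(\cdot)_{\le 0}$ by $\mathrm{id}-(\cdot)_{>0}$, disposing of the contributions carrying a factor $(\cdot)_{>0}$ by means of cyclicity of the trace together with the facts that the $(1,1)$ block of a block strictly upper triangular matrix vanishes and that $(\cdot)_{\le 0}$ leaves first block columns unchanged, and reorganising the surviving terms with the resolvent identity $R(\lambda)-R(\mu)=(\mu-\lambda)R(\lambda)R(\mu)$, one recovers block by block the two groups of terms on the right of \re{MMbrack}; reassembling the $m\times m$ blocks into the tensor notation with the flip $\mathcal P$ in the appropriate slots then gives \re{MMbrack}.

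For (ii) I would first record the two elementary identities $(X^k)_{\le 0}e_1=X^ke_1$ and $e_1^T(X^k)_{\le 0}=h_ke_1^T$, which yield $\dot h_j=e_1^T[X^j,(X^k)_{\le 0}]e_1=h_{j+k}-h_kh_j$ along the $k$-th flow \eqref{lax}; differentiating \re{momdep}, feeding in this formula, and using invertibility of the block Hankel matrices for generic $X$ forces $\dot F_j=0$, so $P(\lambda)$ is conserved. The relation \re{PPtbrack} I would extract from \re{MMbrack} by taking the polynomial ("denominator") part of the induced bracket, or equivalently by computing $\{F_i,F_j\}$ directly from the moment brackets coming from \re{Kost} together with \eqref{CharPolyCoeffs}. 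For (iii), since $Z$ is constant and $(Z\te Z)\mathcal P=\mathcal P(Z\te Z)$, \re{PPtbrack} gives $\{ZP(\lambda)\st ZP(\mu)\}=\big[\frac1{\lambda-\mu}\mathcal P,\ ZP(\lambda)\te ZP(\mu)\big]$, and adding the scalar $z\one_m$ does not change the bracket; the pairwise vanishing of the Poisson brackets of the $(\lambda,z)$-coefficients of $\det\big(ZP(\lambda)+z\one_m\big)$ is then the standard consequence of this $r$-matrix relation, while the distinctness of the eigenvalues of $Z$ is what secures the functional independence needed for the dimension count (identical to that in \cite{G}) showing the family is maximal. Finally, for (iv) I would combine (ii) with the $k=1$ instance $\dot h_j=h_{j+1}-h_1h_j$: since $h_0=\one$, the substitution $h_j=\t h_0^{-1}\t h_j$ is admissible and converts this nonlinear system into the linear one $\dot{\t h}_j=\t h_{j+1}$, which, after re-expressing $\t h_n,\dots,\t h_{2n-1}$ through the (now time-independent) recursion \re{momdep}, is exactly $\dot{\t H}=\t H C$ with $C$ the block companion matrix of \eqref{CharPolyCoeffs}; imposing $\t h_k(0)=h_k(0)$ recovers $h_k$ for $0\le k\le 2n-2$, the remaining $h_k$ being given by \re{momdep}.

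The hard part will be the block bookkeeping in (i): keeping track of precisely which terms survive the $(\cdot)_{\le 0}$-truncations now that the matrix entries are $m\times m$ rather than scalars, and then arranging the resulting double sum into exactly the tensor form \re{MMbrack} with the permutation operator inserted in the correct slots. Once \re{MMbrack} is in hand, parts (ii)--(iv) are essentially formal, requiring only the elementary observations about first rows and columns of powers of block Hessenberg matrices and standard $r$-matrix manipulations.
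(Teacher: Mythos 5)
Your proposal is correct, and since the paper itself offers no proof of Proposition \ref{PoissM} --- it is quoted verbatim from \cite{G} and only \emph{used} here --- the right comparison is with that source, whose argument you have essentially reconstructed: the resolvent-gradient computation in the AKS bracket for (i), the identities $(X^k)_{\le 0}e_1=X^ke_1$, $e_1^T(X^k)_{\le 0}=h_ke_1^T$ leading to $\dot h_j=h_{j+k}-h_kh_j$ and conservation of $P$ for (ii), the standard $r$-matrix involutivity for (iii), and the substitution $h_j=\t h_0^{-1}\t h_j$ linearizing the first flow for (iv). No gaps; the block-ordering details you flag (e.g.\ $h_1$ acting on the left in $\dot h_j=h_{j+1}-h_1h_j$) all check out.
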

\noindent In \re{MMbrack}, \re{PPtbrack} above we used tensor (\emph{St. Petersburg}) notations for Poisson brackets
of matrix elements of matrix-valued functions
$$\{A \st B\}_{i\,j}^{i'j'}=\{A_{ii'} , B_{jj'}\}$$
(see, e.g. \cite{FT}).

Although Proposition \ref{PoissM} was originally proved in \cite{G}  for non-Abelian Toda lattice, it can be used to define a completely integrable
system on $G^{u,v}/{\mathbf T}$ for an arbitrary pair of Coxeter permutations $u, v$. Indeed, on can endow the space ${\mathcal M}$ of matrix-valued rational functions
admitting factorization \eqref{factorWeyl} with a Poisson structure given by \re{MMbrack}. Part (iii) of Proposition \ref{PoissM} guarantees that coefficients of the bivariate polynomial
$\det( Z P(\la) + z \one_m)$ generate a completely integrable system on ${\mathcal M}$. The inverse problem we solved in section 4 allows us to induce a Poisson structure and completely
integrable Hamiltonian flows on $G^{u,v}/{\mathbf T}$. Note that coefficients of $\det P(\lambda)$ belong to the maximal family of involutive functions we constructed. By Remark \ref{companion}
they generate the algebra of spectral invariants of a generic element $X\in G^{u,v}/{\mathbf T}$. In particular, $H_1=\frac{1}{2}\mathrm{Tr}(X^2) = \frac{1}{2}\mathrm{Tr}(F_{n-1} + F_{n}^2)$
generates a completely integrable nonlinear Hamiltonian flow on $G^{u,v}/{\mathbf T}$ that can be linearized using part (iv) of Proposition \ref{PoissM} and the inverse problem
of section 4. We call this flow a {\em non-Abelian Coxeter-Toda lattice} on $G^{u,v}/{\mathbf T}$.

\subsection{Examples} We conclude with some examples of non-Abelian Coxeter-Toda lattices that correspond to the case when $v=s_{n-1} \cdots s_1$ and thus $X$ is a block upper Hessenberg matrix. These were studied in \cite{GeKo}. In this situation, it is convenient to slightly modify a parametrization of $G^{u,v}/{\mathbf T}$ by requiring that subdiagonal block
entries of $X$ rather than lower weights $c_i^-$ are equal to $\one_m$. This can be achieved via block diagonal conjugation. Indeed, for $v=s_{n-1} \cdots s_1$, all lower $c_i^-$
equal to $\one_m$ and diagonal and upper weights given by ${\bf d}_i$ and ${\bf c}_i$ resp. (cf. \eqref{paramC}), the subdiagonal entries of $X\in G^{u,v}/{\mathbf T}$ can be found
from \eqref{factorI} to be equal to $X_{i+1,i}={\bf d}_i$. The conjugation of $X$ by the block diagonal matrix
$T=\diag \left (\one_m, {\bf d}_1,\ldots, ({\bf d}_{n-1}\cdots {\bf d}_{1}) \right )$
will reduce $X$ to the form we seek. In order not to overload the notations, we retain symbol $X$ for the resulting  matrix and symbols ${\bf d}_i$ and ${\bf c}_i$ for its diagonal and upper weights. Note that Theorem \ref{invthm} still remains valid under this conditions.

Recall that for $v=s_{n-1} \cdots s_1$, $\varepsilon_i^-=0\  (i=2,\ldots n)$. Specializing Lemma \ref{X-Xinv} to this case, one obtains
{\small
\baa X=\left[ \begin{array}{ccccc}
{\bf d}_1 & {\bf d}_1{\bf c}_1 & \ep_2 {\bf d}_1{\bf c}_1{\bf c}_2& \hspace{5mm}\cdots\hspace{5mm}& \ep_2 \ep_3 \cdots \ep_{n-1}{\bf d}_1{\bf c}_1{\bf c}_2\cdots {\bf c}_{n-1}\\
\one & {\bf c}_1+{\bf d}_2 &({\bf d}_2+\ep_2 {\bf c}_1) {\bf c}_2 & \cdots& ({\bf d}_2+\ep_2 {\bf c}_1) \ep_3\cdots \ep_{n-1} {\bf c}_2 \cdots {\bf c}_{n-1}\\
&\one&{\bf c}_2+{\bf d}_3&\cdots&({\bf d}_3+\ep_3 {\bf c}_2) \ep_4 \cdots \ep_{n-1} {\bf c}_3 \cdots {\bf c}_{n-1} \\
&&\ddots&\ddots &\vdots \\
&&&\one&{\bf c}_{n-1}+{\bf d}_n\\
\end{array}\right]\, . \eaa
}

\begin{prop}{\rm(}\cite{GeKo}{\rm)}\lab{DCeq}. \ The non-Abelian Coxeter-Toda lattice on $G^{u,v}/{\mathbf T}$ with $v=s_{n-1} \cdots s_1$ is equivalent to the following system of equations:
\ba \begin{array}{ll}
\dot{{\bf d}}_i={\bf d}_i {\bf c}_i-{\bf c}_{i-1}{\bf d}_i,\\
\dot{{\bf c}}_i={\bf c}_i {\bf d}_{i+1}-{\bf d}_i {\bf c}_i-\ep_i {\bf c}_{i-1}{\bf c}_i+\ep_{i+1}{\bf c}_i {\bf c}_{i+1}.  & 
\end{array} \label{evolut}
\ea
\end{prop}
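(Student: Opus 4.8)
The plan is to derive the evolution equations \eqref{evolut} directly from the Lax equation $\dot X = [X, (X)_{\le 0}]$ (the first flow in \eqref{lax}, generated by $H_1$), by tracking how the block entries of $X$ — and hence the factorization parameters ${\bf d}_i, {\bf c}_i$ — evolve. Since $v = s_{n-1}\cdots s_1$, the matrix $X$ is block upper Hessenberg of the form \eqref{Hess}, and in the chosen normalization all subdiagonal blocks are $\one_m$; the parameters ${\bf d}_i$ appear as the diagonal blocks of the factorization and ${\bf c}_i$ as the upper weights, with $X$ written out explicitly in the displayed matrix just before the statement. First I would compute $(X)_{\le 0}$: it is the block lower-triangular part, which for a Hessenberg matrix is simply the subdiagonal $J$ (identity blocks below the diagonal) plus the diagonal part $\diag({\bf d}_1,\dots,{\bf d}_n)$ — call it $X_{\le 0} = J + D$ where $D = \diag({\bf d}_i)$. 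Then $\dot X = [X, J+D] = [X,J] + [X,D]$, and I would extract the equations on the diagonal and first superdiagonal blocks.

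The key computational steps, in order: (1) read off $\dot X_{ii}$ from the $(i,i)$ block of $[X,J]+[X,D]$. The contribution of $[X,D]$ to the $(i,i)$ block vanishes since both are ``diagonal'' there in the relevant sense; the contribution of $[X,J]$ to the $(i,i)$ block is $X_{i,i+1}(J)_{i+1,i} - (J)_{i,i-1}X_{i-1,i} = X_{i,i+1} - X_{i-1,i}$. Using the explicit form of $X$, one has $X_{i,i+1} = ({\bf d}_i + \varepsilon_i {\bf c}_{i-1}){\bf c}_i$ (with the convention ${\bf c}_0 = 0$), while $X_{ii} = {\bf c}_{i-1} + {\bf d}_i$; comparing with $\dot X_{ii} = \dot{\bf c}_{i-1} + \dot{\bf d}_i$ and separating the ``${\bf d}$'' part (which must match the subdiagonal-driven piece) yields $\dot{\bf d}_i = {\bf d}_i{\bf c}_i - {\bf c}_{i-1}{\bf d}_i$. (2) Read off $\dot X_{i,i+1}$ from the $(i,i+1)$ block of the commutator: $[X,J]_{i,i+1} = X_{i,i+2} - X_{i-1,i+1} + X_{i,i+1}$-type terms coming from multiplying by $J$, and $[X,D]_{i,i+1} = X_{i,i+1}{\bf d}_{i+1} - {\bf d}_i X_{i,i+1}$. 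Substituting the explicit entries and the already-derived expression for $\dot{\bf d}$, one solves for $\dot{\bf c}_i$ and obtains $\dot{\bf c}_i = {\bf c}_i{\bf d}_{i+1} - {\bf d}_i{\bf c}_i - \varepsilon_i {\bf c}_{i-1}{\bf c}_i + \varepsilon_{i+1}{\bf c}_i{\bf c}_{i+1}$. (3) Finally, check consistency on the higher superdiagonals: the blocks $\Lambda_i^{(j)}$ (i.e. the $\varepsilon_2\varepsilon_3\cdots$-weighted entries) must evolve compatibly, but this is automatic because the Lax flow preserves the double Bruhat cell $G^{u,v}$ (equivalently, preserves the factorization structure), so once the independent parameters ${\bf d}_i,{\bf c}_i$ satisfy \eqref{evolut} the remaining entries follow; alternatively this can be seen from the fact that $X$ stays similar to its block companion matrix (Remark \ref{companion}).

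The main obstacle I anticipate is step (3) — verifying that the Lax flow genuinely stays tangent to the stratum parametrized by \eqref{factorI} with lower weights $\one$ (i.e., that no flow component pushes $X$ out of the Hessenberg cell or changes the normalization), and that the ``bookkeeping'' of the $\varepsilon_i$ factors in the explicit form of $X$ is handled correctly when separating $\dot{\bf d}$ from $\dot{\bf c}_{i-1}$ in the diagonal equation. One clean way to dispatch this is to use the factorization \eqref{factorI}: write $X = (\one - C^-_1)^{-1}\cdots(\one-C^-_{k^-})^{-1} D (\one - C^+_{k^+})^{-1}\cdots(\one - C^+_1)^{-1}$ with all $c_i^- = \one$ fixed, differentiate this product expression, and match with $[X, X_{\le 0}]$; the terms involving $\dot C^-$ must cancel, which both confirms that the lower-weight normalization is preserved and isolates the equations for $\dot D$ and $\dot C^+$, reproducing \eqref{evolut}. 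Since Proposition \ref{DCeq} is quoted from \cite{GeKo}, I would present this as the natural derivation and refer to that paper for the detailed verification of the tangency/normalization claim.
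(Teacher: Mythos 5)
Your overall strategy --- extract the evolution of the diagonal and first-superdiagonal blocks of $X$ from the Lax equation $\dot X=[X,(X)_{\leq 0}]$ and then convert it into equations for the factorization parameters --- is exactly the route taken in \cite{GeKo}, but two of your steps do not survive the non-commutativity as written. First, you set $(X)_{\leq 0}=J+\diag({\bf d}_1,\dots,{\bf d}_n)$. The block lower-triangular projection of $X$ is $J+\diag(X_{11},\dots,X_{nn})$ with $X_{ii}={\bf c}_{i-1}+{\bf d}_i$, which differs from your $D$ by $\diag({\bf c}_{i-1})$, and since ${\bf c}_{i-1}$ and ${\bf d}_i$ do not commute this changes the commutator: already for $n=2$ your choice produces the spurious term ${\bf c}_1{\bf d}_2-{\bf d}_2{\bf c}_1$ in the $(2,2)$ block, and your claim that $[X,D]_{ii}=0$ fails because $[X_{ii},{\bf d}_i]=[{\bf c}_{i-1},{\bf d}_i]\neq 0$ in general. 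With the correct projection one gets
$\dot X_{ii}=X_{i,i+1}-X_{i-1,i}$ and
$\dot X_{i,i+1}=X_{i,i+1}X_{i+1,i+1}-X_{ii}X_{i,i+1}+X_{i,i+2}-X_{i-1,i+1}$,
which is the starting point of the computation in \cite{GeKo}; note that the second-superdiagonal blocks $X_{i,i+2}=({\bf d}_i+\varepsilon_i{\bf c}_{i-1})\,\varepsilon_{i+1}{\bf c}_i{\bf c}_{i+1}$ are precisely where the $\varepsilon$'s in the answer come from, so they cannot be treated as an afterthought.

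Second, ``separating the ${\bf d}$ part'' of $\dot X_{ii}=\dot{\bf c}_{i-1}+\dot{\bf d}_i$ is not a legitimate step: one equation does not determine two unknowns, and there is no canonical split of $X_{i,i+1}-X_{i-1,i}$ into a piece ``belonging'' to $\dot{\bf d}_i$. The paper's source resolves this by induction on $i$: at $i=1$ one has ${\bf c}_0=0$, so $\dot{\bf d}_1={\bf d}_1{\bf c}_1$ is read off directly and $\dot{\bf c}_1$ is then solved from the superdiagonal equation via the product rule applied to $X_{i,i+1}=({\bf d}_i+\varepsilon_i{\bf c}_{i-1}){\bf c}_i$; at step $i+1$ the already-known $\dot{\bf c}_i$ is subtracted from $\dot X_{i+1,i+1}$ to isolate $\dot{\bf d}_{i+1}$, and the superdiagonal equation then yields $\dot{\bf c}_{i+1}$, with a four-way case analysis on $(\varepsilon_i,\varepsilon_{i+1})$. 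Once you correct $(X)_{\leq 0}$ and organize the bookkeeping as this induction, your computation goes through and coincides with the paper's. Your step (3) concern about tangency is reasonable but secondary here: the flow on $G^{u,v}/{\mathbf T}$ is defined through the Weyl function, and the equivalence only needs to be verified on the first two block diagonals, which already determine all the parameters ${\bf d}_i,{\bf c}_i$.
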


Since in the scalar case each system (\ref{evolut}) belongs to a class
 of Coxeter-Toda lattices that were studied in \cite{HKKR,FG1,FG2,GSV}, we will call the system (\ref{evolut}) a {\it non-Abelian Coxeter-Toda lattice}.
\begin{example} If $I=\{n\}$ from Proposition \re{DCeq} we obtain a non-abelian version of the {\it relativistic Toda lattice} (cf. \cite{KSZ} for the scalar case).
\baa \begin{array}{ll}
\dot{{\bf d}}_i={\bf d}_i {\bf c}_i-{\bf c}_{i-1}{\bf d}_i,\\
\dot{{\bf c}}_i={\bf c}_i {\bf d}_{i+1}-{\bf d}_i {\bf c}_i-{\bf c}_{i-1}{\bf c}_i+{\bf c}_i {\bf c}_{i+1}.\\
\end{array}
\eaa
If $I=\{2,\dots,n\}$, the standard non-Abelian Toda lattice (\ref{NAToda}) becomes
\[
\dot{{\bf d}}_i={\bf d}_i {\bf c}_i-{\bf c}_{i-1}{\bf d}_i, \quad
\dot{{\bf c}}_i={\bf c}_i {\bf d}_{i+1}-{\bf d}_i {\bf c}_i\
\]
and after renaming $U_{2i-1}={\bf d}_i, U_{2i}={\bf c}_i$, one obtains the non-Abelian {\it Volterra lattice}
$$
\dot{U}_i=U_i U_{i+1} - U_{i-1} U_i\ .
$$
\end{example}


\section{Conclusion}

Using the noncommutative version of the inverse moment problem, we established that the matrix Weyl function encodes all the information on non-Abelian
Coxeter-Toda lattices. Still, there are some questions that we have not addressed
and that deserve a further investigation. First, in the scalar case, the Hamiltonian structure naturally associated with double Bruhat cells is the Poisson-Lie bracket
on the group $GL_n$ which, when restricted to tri-diagonal matrices, induces the {\em quadratic} Poisson structure for the Toda lattice. However, the Poisson bracket \re{Kost} that leads to the bracket
\re{MMbrack} on rational matrix functions is an analogue of the {\em linear} Poisson structure for the Toda lattice. From this perspective, one expects
a linear Poisson structure \re{Kost} to be replaced by a compatible quadratic one which, in turn, would induce another Poisson bracket on the Weyl function compatible with the one in \re{MMbrack}. More generally, one would like to obtain an analogue of the whole family of compatible Poisson brackets
on the space of rational functions considered in \cite{FG2}. We hope to address this problem in the future.

In addition, it would be interesting  to generalize an approach used in \cite{GSV} to build a noncommutative
cluster algebra structure in the space of rational matrix functions. This should be done in parallel with a transition from a scalar to a noncommutative case in recent works on $Q$-systems by Di Francesco and Kedem \cite{DF,DFNA}.

{\it Acknowledgments.} This work was supported in part by NSF Grant DMS no.~1362801.


\begin{thebibliography}}
\newcommand{\eb}{\end{thebibliography}

\begin{thebibliography}{100000}


\bibitem{akh}
{N. I. Akhiezer},   
\textit{The Classical Moment Problem and Some Related Questions in Analysis},  
{Hafner},    
{New York},     
{1965. (Russian edition: Fizmatgiz, Moscow, 1961)}     



\bibitem{AH}
{M. F. Atiyah, N. Hitchin},   
\textit{The Geometry and Dynamics of Magnetic Monopoles},  
{M.B. Porter Lectures}, 
{Princeton University Press},    
{Princeton, NJ},     
{1988}.     


\bibitem{BFZ}
{A. Berenstein, S. Fomin, A.  Zelevinsky},    
\textit{Cluster algebras. III. Upper bounds and double Bruhat cells}, 
{Duke Math. J.} 
\textbf{126},
{(2005)},
{1--52}. 



\bibitem{BR}
{A. Berenstein, V. Retakh},   
\textit{Noncommutative double Bruhat cells and their factorizations}, 
{Int. Math. Res. Not.} 
{(2005)},   
{no.~8},    
{477--516}. 


\bibitem{Berezan}
{Ju.~M.~Berezanskii},   
\textit{Expansions in Eigenfunctions of Selfadjoint Operators},  
{Amer.~Math.~Soc., Pro\-vi\-dence, RI},    
{1968. (Russian edition: Naukova Dumka, Kiev, 1965)}     




\bibitem{BGS}
{Yu. M. Berezanskii, M. Gekhtman, M. Shmoish},   
\textit{Integration of certain chains of nonlinear difference equations
by the method of the inverse spectral problem}, 
{Ukrain. Mat. Zh.} 
\textbf{38} 
{(1986)},   
{no.~1},    
{84--89. (Russian); English transl. Ukrainian Math.~J.} 
\textbf{38} 
{(1986)},   
{no.~1},    
{74--78}. 


\bibitem{BerMokh}
{Yu. M. Berezanskii, A. A. Mokhon'ko},   
\textit{Integration of some nonlinear differential-difference equations using the spectral theory
of normal block-Jacobi matrices}, 
{Funct. Anal. Appl.} 
\textbf{42} 
{(2008)},   
{1--18}. 


\bibitem{BerIvMokh}
{Yu. M. Berezanskii, I. Ya. Ivasiuk, A. A. Mokhon'ko},   
\textit{Recursion relation for orthogonal polynomials on the complex plane}, 
{Methods Funct. Anal. Topology} 
\textbf{14} 
{(2008)},   
{no.~2},    
{108--116}. 


\bibitem{BMRL}
{M. Bruschi, S. V. Manakov, O. Ragnisco, D. Levi},   
\textit{The nonabelian Toda lattice (discrete analogue of the matrix Schr\"odinger
spectral problem)}, 
{J.~Math. Phys.} 
\textbf{21} 
{(1980)},   
{2749--2753}. 


\bibitem{cmv}
{M. J. Cantero, L. Moral, and L. Velazques},   
\textit{Five-diagonal matrices and zeros of
orthogonal polynomials on the unit circle}, 
{Linear Algebra Appl.} 
\textbf{362} 
{(2003)},   
{29--56}. 



\bibitem{DF}
{P. Di Francesco, R. Kedem},   
\textit{Q-systems, heaps, paths and cluster positivity}, 
{Comm. Math. Phys.} 
\textbf{293} 
{(2010)},   
{no.~3},    
{727--802}. 


\bibitem{DFNA}
{P. Di Francesco, R. Kedem},   
\textit{Noncommutative integrability, paths and quasi-determinants}, 
{Adv. Math.} 
\textbf{228} 
{(2011)},   
{no.~1},    
{97--152}. 


\bibitem{EGR}
{P. Etingof, I. M. Gelfand, V. Retakh},   
\textit{Factorization of differential operators,
quasideterminants, and nonabelian Toda field equations}, 
{Math. Res. Lett.} 
\textbf{4} 
{(1997)},   
{413--425}. 


\bibitem{FT}
{L. Faddeev, L. Takhtajan},   
\textit{Hamiltonian Methods in the Theory of Solitons},  
{Springer},    
{Berlin},     
{2007. (Russian edition: Nauka, Moscow, 1986)}     



\bibitem{FG1}
{L. Faybusovich,  M. I. Gekhtman},   
\textit{Elementary Toda orbits and integrable lattices}, 
{J.~Math. Phys.} 
\textbf{41} 
{(2000)},   
{2905--2921}. 


\bibitem{FG2}
{L. Faybusovich,  M. I. Gekhtman},   
\textit{Poisson brackets on rational functions and multi-Hamiltonian structure for integrable lattices}, 
{Phys. Lett.~A} 
\textbf{272} 
{(2000)},   
{236--244}. 


\bibitem{FG3}
{L. Faybusovich,  M. I. Gekhtman},   
\textit{Inverse moment problem for elementary co-adjoint orbits}, 
{Inverse Problems} 
\textbf{17} 
{(2001)},   
{1295--1306}. 


\bibitem{Fallat}
{S. M. Fallat},   
\textit{Bidiagonal factorizations of totally nonnegative matrices}, 
{Amer. Math. Monthly} 
\textbf{108} 
{(2001)},   
{697--712}. 


\bibitem{FZ_JAMS}
{S. Fomin, A. Zelevinsky},   
\textit{Double Bruhat cells and total positivity}, 
{J.~Amer. Math. Soc.} 
\textbf{12} 
{(1999)},   
{335--380}. 


\bibitem{FZ_MI}
{S. Fomin, A. Zelevinsky},   
\textit{Total positivity: tests and parametrizations}, 
{Math. Intelligencer} 
\textbf{22} 
{(2000)},   
{23--33}. 





\bibitem{GGRW}
{I. M. Gelfand, S. Gelfand, V. Retakh, R. L. Wilson},   
\textit{Quasideterminants}, 
{Adv. Math.} 
\textbf{193} 
{(2005)},   
{56--141}. 


\bibitem{G0}
{M. Gekhtman},   
\textit{Integration of non-Abelian Toda-type chains}, 
{Funct. Anal. Appl.} 
\textbf{24} 
{(1991)},   
{no.~3},    
{231--233}. 


\bibitem{GeKo}
{M. Gekhtman, A. Korovnichenko},   
\textit{Matrix Weyl functions and non-Abelian Coxeter-Toda lattices}, 
{Notions of positivity and the geometry of polynomials, Trends in Mathematics, Springer}, 
{2011},   
{pp.~221--237}. 


\bibitem{G}
{M. Gekhtman},   
\textit{Hamiltonian structures of non-Abelian Toda lattice}, 
{Lett. Math. Phys.} 
\textbf{46} 
{(1998)},   
{189--205}. 


\bibitem{GSV-book}
{M.~Gekhtman, M.~Shapiro, A.~Vainshtein},   
\textit{Cluster Algebras and Poisson Geometry},  
{Mathematical Surveys and Monographs {\bf 167}}, 
{Amer. Math. Soc.},    
{Providence, RI},     
{2010}.     


\bibitem{GSV}
{M.~Gekhtman, M.~Shapiro, A.~Vainshtein},   
\textit{Generalized B\"acklund-Darboux transformations for
Coxeter-Toda flows from a cluster algebra perspective}, 
{Acta Math.} 
\textbf{206} 
{(2011)},   
{245--310}. 


\bibitem{HKKR}
{T.~Hoffmann, J.~Kellendonk, N.~Kutz, and N.~Reshetikhin},   
\textit{Factorization dynamics and Coxeter-Toda lattices}, 
{Comm. Math. Phys.} 
\textbf{212} 
{(2000)},   
{297--321}. 


\bibitem{KSZ}
{S. Kharchev, A. Mironov, and A. Zhedanov},   
\textit{Faces of relativistic Toda chain}, 
{Int. J.~Mod. Phys.~A} 
\textbf{12} 
{(1997)},   
{2675--2724}. 


\bibitem{Ko}
{B. Kostant},   
\textit{The solution to a generalized Toda lattice and representation theory}, 
{Adv. Math.} 
\textbf{34} 
{(1979)},   
{195--338}. 


\bibitem{krein}
{M. G. Krein},   
\textit{Infinite $J$-matrices and a matrix-moment problem}, 
{Dokl. Akad. Nauk SSSR} 
\textbf{69} 
{(1949)},   
{125--128}. 


\bibitem{K}
{I. Krichever},   
\textit{The periodic nonabelian Toda chain and its two-dimensional generalization}, 
{Russ. Math. Surveys} 
\textbf{39} 
{(1981)},   
{32--81}. 


\bibitem{moser}
{J. Moser},   
\textit{Finitely many mass points on the line under the influence
of the exponential potential -- an integrable system}, 
{Dynamical Systems, Theory and Applications},   
{Lecture Notes in Physics}, 
{vol.~38}, 
{Springer},    
{Berlin},     
{1975},     
{pp.~467--497}. 


\bibitem{shmisha}
{M. Shmoish},   
\textit{On generalized spectral functions, the parametrization
of block Hankel and block Jacobi matrices, and some root location problems}, 
{Linear Algebra Appl.} 
\textbf{202} 
{(1994)},   
{91--128}. 


\bibitem{Sy}
{W. Symes},   
\textit{Systems of Toda type, inverse spectral problems, and representation theory}, 
{Invent. Math.} 
\textbf{59} 
{(1980)},   
{13--51}. 




\end{thebibliography}
\end{document}